\let\amsmath@bigm\bigm
\renewcommand{\bigm}[1]{%
  \ifcsname fenced@\string#1\endcsname
    \expandafter\@firstoftwo
  \else
    \expandafter\@secondoftwo
  \fi
  {\expandafter\amsmath@bigm\csname fenced@\string#1\endcsname}%
  {\amsmath@bigm#1}%
}
\newcommand{\DeclareFence}[2]{\@namedef{fenced@\string#1}{#2}}
\DeclareFence{\mid}{|}
\newcommand{\rnote}[1]{\footnote{{\bf \color{red}Rocco}: {#1}}}
\newtheorem*{rep@theorem}{\rep@title}
\newcommand{\newreptheorem}[2]{
\newenvironment{rep#1}[1]{
 \def\rep@title{#2 \ref{##1}}
 \begin{rep@theorem}\itshape}
 {\end{rep@theorem}}}
\theoremstyle{plain}
\newcommand{\ignore}[1]{}
\def\colorful{1}
\newcommand{\blue}[1]{{{\color{blue}#1}}}
\newcommand{\red}[1]{{\color{brown} {#1}}}
\newcommand{\gray}[1]{{\color{gray}{#1}}}
\newcommand{\blue}[1]{{{#1}}}
\newcommand{\red}[1]{{{#1}}}
\newcommand{\gray}[1]{{{#1}}}
\newtheorem*{theorem*}{Theorem}
\newtheorem*{noclaim*}{Claim}
\newcommand{\Bin}{\mathrm{Bin}}
\def\Geo{\mathsf{Geometric}}
\newcommand{\Del}{\mathrm{Del}}
\newcommand{\score}{\mathsf{score}}
\renewcommand{\N}{\mathds{N}}
\newcommand{\worst}{\mathrm{worst}}
\def\LCS{\mathsf{LCS}}
\def\AvgLCS{\mathsf{AvgLCS}}
\def\deck{\mathsf{D}}
\begin{document}

\title{
Approximate Trace Reconstruction from a Single Trace
}

\author{
Xi Chen\thanks{Supported by NSF grants CCF-1703925 and IIS-1838154.} 
\\
Columbia University\\
xichen@cs.columbia.edu
\and
Anindya De\thanks{Supported by NSF grants CCF-1926872 and CCF-1910534.}
\\
University of Pennsylvania\\
anindyad@cis.upenn.edu
\and
Chin Ho Lee\thanks{Supported by the Croucher Foundation, the Simons Collaboration on Algorithms and Geometry, NSF grant CCF-1763299, and Madhu Sudan's and Salil Vadhan's Simons Investigator Awards.  Part of this work was done at Columbia University.}
\\
Harvard University\\
chlee@seas.harvard.edu
\and
Rocco A. Servedio\thanks{Supported by NSF grants CCF-1814873, IIS-1838154, CCF-1563155, and by the Simons Collaboration on Algorithms and Geometry.} 
\\
Columbia University\\
rocco@cs.columbia.edu
\and
Sandip Sinha\thanks{Supported by NSF grants  CCF-1714818, CCF-1822809, IIS-1838154, CCF-1617955, CCF-1740833, and by the Simons Collaboration on Algorithms and Geometry.}
\\
Columbia University\\
sandip@cs.columbia.edu
}

\date{}
\maketitle

\thispagestyle{empty}


\begin{abstract} 

The well-known \emph{trace reconstruction problem} is the problem of inferring an unknown source string $x \in \zo^n$ from independent ``traces'', i.e.~copies of $x$ that have been corrupted by a $\delta$-deletion channel which independently deletes each bit of $x$ with probability $\delta$ and concatenates the surviving bits. 
The current paper  considers the extreme data-limited regime in which only a single trace is provided to the reconstruction algorithm. In this setting exact reconstruction is of course impossible, and the question is to what accuracy the source string $x$ can be approximately reconstructed.

We give a detailed study of this question, providing algorithms and lower bounds for the high, intermediate, and low deletion rate regimes in both the worst-case ($x$ is arbitrary) and average-case  ($x$ is drawn uniformly from $\zo^n$) models.  In several cases the lower bounds we establish are matched by computationally efficient algorithms that we provide. 

We highlight our results for the high deletion rate regime: roughly speaking, they show that 

\begin{itemize}

\item Having access to\ignore{even a small number of ``bits of trace''} a single trace is already quite useful for worst-case trace reconstruction: an efficient algorithm can perform much more accurate reconstruction, given one trace that is even only a few bits long, than it could given no traces at all.  But in contrast,

\item in the average-case setting, having access to a\ignore{ small number of ``bits of trace''} single trace is provably not very useful: no algorithm, computationally efficient or otherwise, can achieve significantly higher accuracy given one trace that is $o(n)$ bits long than it could with no traces.

\end{itemize}


\ignore{In contrast, at high deletion rates average-case and worst-case trace reconstruction provably have very different qualitative behaviors from each other.\rnote{These last two sentences are clumsily trying to say that
\begin{itemize}
\item when $\delta$ is small basically $1-\Theta(\delta)$ is achievable and best possible for both average-case and worst case; but
\item when $\rho$ is small but nonzero, in the average case setting you can't improve on a trivial (zero-trace) result, while you can improve on a trivial zero-trace performance in the worst-case setting.
\end{itemize}
}
}

\end{abstract}

\newpage

\setcounter{page}{1}


\section{Introduction} \label{sec:intro}

The \emph{trace reconstruction problem} \cite{Kalashnik73,Lev01a,Lev01b,BKKM04} is one of the oldest and most basic algorithmic problems involving the deletion channel.  
In this problem the goal of the reconstruction algorithm is to infer an unknown $n$-bit source string $x \in \zo^n$ given access to a source of independent ``traces'' of $x$, where a trace of $x$ is a draw from $\Del_\delta(x)$.  Here $\Del_\delta(\cdot)$ is the ``deletion channel,'' which independently deletes each bit of $x$ with probability $\delta$ and outputs the concatenation of the surviving bits. The goal of the reconstruction algorithm is to correctly reconstruct the source string $x$ using as few traces and as little computation time as possible.

A surge of recent work \cite{MPV14,DOS17,NazarovPeres17,PeresZhai17,HPP18,HHP18,BCFSSfocs19,BCSSrandom19,Chase19,KMMP19,HPPZ20,Chase20,NarayananRen20,CDLSS20smoove,CDLSS20lowdeletion,ChasePeres21,CDLSS22} has addressed many different aspects and variants of the trace reconstruction problem.  The version described above corresponds to a ``worst-case'' setting, since the $n$-bit source string can be completely arbitrary; despite intensive research \cite{HMPW08,DOS17,NazarovPeres17,Chase20}, the best algorithm known for this problem, for constant deletion rate $\delta$, requires $\exp(\tilde{O}(n^{1/5}))$ traces.  Many papers such as \cite{MPV14,PeresZhai17,HPP18,BCSSrandom19,HPPZ20,Chase19,CDLSS20smoove} have also considered an ``average-case'' version of the problem in which the source string $\bx \sim \zo^n$ is assumed to be a uniform random $n$-bit string; this average-case problem is known to be significantly easier than the worst-case problem (we refer the reader to \cite{HPPZ20,Chase19} for state-of-the-art algorithmic results and lower bounds on average-case trace reconstruction at constant deletion rates).  Other problem variants which have been studied include ``population recovery'' versions in which there is a distribution over source strings rather than a single unknown source string \cite{BCFSSfocs19,BCSSrandom19,NarayananRen20}; the ``low deletion rate'' ($\delta = o_n(1)$) and ``high deletion rate'' ($\delta = 1-o_n(1)$) settings \cite{BKKM04,HMPW08,MPV14,BCFSSfocs19,NarayananRen20}; and \emph{approximate} trace reconstruction \cite{SDDF18,DRRS20,SB21,GSZ20, ChasePeres21,CDK21,CDLSS22}, in which the goal is only to obtain an approximate rather than an exact reconstruction of the unknown source string $x$, and which is the focus of the current work.

\medskip
\noindent {\bf Prior work on approximate reconstruction from few traces.}
The best algorithms known for even the easiest versions of exact trace reconstruction, such as the $\delta=O(1/\log n)$, average-case problem setting considered by \cite{BKKM04}, typically require a number of traces that grows with $n$ to achieve exact reconstruction.\footnote{Indeed, at deletion rate $\delta=O(1/\log n)$, it is easy to see that given a sample of $o({\frac {\log n}{\log \log n}})$ traces, with high probability there will be coordinates of the source string that are deleted from all of the traces in the sample.}  An attractive feature of the recent works \cite{ChasePeres21,CDLSS22} is that they give provable performance guarantees for \emph{approximate} trace reconstruction even when only \emph{constantly} many traces are available. In more detail, \cite{CDLSS22} gave near-matching upper and lower bounds on the best possible reconstruction accuracy that any algorithm can achieve given $M=O(1/\delta)$ traces from $\Del_\delta(\bx)$ in the average-case setting.  \cite{ChasePeres21} showed that for any constants $\delta,\eps$, there is some constant $M=M(\eps,\delta)$ such that an $M$-trace algorithm can achieve reconstruction error at most $\eps$ given traces from $\Del_\delta(\bx)$ in the average-case setting.\footnote{Several other recent papers \cite{SDDF18,DRRS20,GSZ20,CDK21,SB21} have also studied approximate trace reconstruction, but focusing on different aspects that make them less relevant to the present paper;  see \cite{CDLSS22} for a detailed discussion of those works.}

Results such as \cite{ChasePeres21,CDLSS22}, which shed light on what can be achieved given constantly many traces, can be particularly valuable in settings where only a severely limited number of traces are available and the goal is to do as well as possible with the data at hand. Such settings motivate the present paper, which, as we now describe, studies trace reconstruction in the ultimate data-constrained regime.

\medskip
\noindent {\bf This work: Approximate reconstruction from a single trace.}  We consider the problem of recovering an unknown $n$-bit source string $x$ as accurately as possible given only \emph{a single trace} from $\Del_\delta(x)$.  Despite the simplicity and naturalness of this problem, it does not seem to have been considered in prior work.

We give a detailed study of this problem, analyzing both the worst-case setting of an arbitrary unknown source string $x$ as well as the average-case setting of a uniform random $\bx \sim \zo^n$.  In each of these settings we consider both the low ($\delta = o_n(1)$), medium ($\delta = \Theta(1)$), and high ($\delta=1-o_n(1)$) deletion rate regimes.
In a number of cases we give upper bounds on the approximate reconstruction accuracy that any one-trace algorithm can achieve, which are essentially matched by corresponding one-trace algorithms that we provide. (All of the algorithms we give are computationally efficient.)  
For some problem variants our upper bounds on the best achievable accuracy extend beyond one-trace algorithms to algorithms that receive multiple traces.

We view our results as a first investigation of one-trace reconstruction, and reiterate that very little was previously known for any of the problem variants that we consider. We describe the state of prior knowledge in the context of different specific problem variants when describe our results in \Cref{sec:our-results} below.

\subsection{Our results} 
\label{sec:our-results}

We are interested in the abilities and limitations of algorithms $A$ which receive as input a single trace $y$ from an unknown $n$-bit source string $x$ and which output an $n$-bit hypothesis string $\wh{x}=A(y).$  We measure the accuracy of $\wh{x}$ with respect to $x$ by the length of the longest common subsequence $|\LCS(x,\wh{x})|$.  LCS is closely related to edit distance, since if $|\LCS(x,\wh{x})|=n-k$ for two $n$-bit strings $x$ and $\wh{x}$, then $\wh{x}$ can be converted into $x$ by a sequence of $k$ deletions and $k$ insertions (and this is best possible).  The goal of an approximate reconstruction algorithm in this setting is to output a hypothesis string $\wh{x}$ for which the expectation\footnote{In the worst-case setting this expectation is over the random draw of the trace $\by$ from $\Del_\delta(x)$; in the average-case setting, this expectation is also over the uniform random draw of the source string $\bx \sim \zo^n$.  We give more details and a precise formulation in \Cref{sec:alg-performance-notation}.} of $|\LCS(x,\wh{x})|$ is guaranteed to be as large as possible; thus positive (algorithmic) results in our setting yield \emph{lower bounds} on how large an  expected value of $|\LCS(x,\wh{x})|$ can be achieved, while impossibility results for algorithms give \emph{upper bounds} on the best achievable expected $|\LCS(x,\wh{x})|.$ 

As alluded to earlier, we consider both the setting of a worst-case (arbitrary) $x \in \zo^n$ and the setting of a uniform random $\bx \sim \zo^n$. We note that algorithmic results (lower bounds on $\E[|\LCS(x,\wh{x})|]$ for the worst-case setting carry over to the average-case setting, while impossibility results (upper bounds on $\E[|\LCS(x,\wh{x})|]$) for the average-case setting carry over to the worst-case setting.

\Cref{sec:worst-case} presents our results for the worst-case setting and \Cref{sec:average-case} presents our results for the average-case setting. In each of these sections we first present our results (upper and lower bounds) for the high and medium deletion rate regimes, and then the low deletion rate regime. 

\subsubsection{Worst-case one-trace reconstruction} \label{sec:worst-case}

We first consider the high deletion rate regime.  It is convenient to let $\rho := 1-\delta$ denote the \emph{retention rate}, so in the high deletion rate regime we have $\rho=o(1)$.

If $\rho$ is too small (as a function of $n$) then it is easy to see that no nontrivial performance is possible.  In particular, if $\rho=o(1/n)$, then by Markov's inequality with probability $1-o(1)$ a trace $\by \sim \Del_\delta(x)$ is zero bits long, and in this case a reconstrution algorithm cannot even distinguish between the two possibilities $x=0^n$ and $x=1^n$. Consequently, if $\rho=o(1/n)$ then the largest expected LCS achievable by a one-trace algorithm is at most $(1/2 + o_n(1))n$ (and $n/2$ is trivially achieved by outputting any string with an equal number of 0's and 1's).

Our first positive result shows that --- perhaps surprisingly --- if $\rho$ is only slightly larger, then it is already possible to do much better than the above trivial bound:

\begin{theorem} [Worst-case algorithm, small retention rate, informal statement] \label{thm:worst-case-small-rho-informal}
For any $\rho = \omega(\log(n)/n)$, there is a worst-case one-trace algorithm that achieves expected LCS at least $(2/3-o(1))n$. 
Moreover, for any retention rate $\rho \geq \omega(1/n^{1/3})$, there is a worst-case one-trace algorithm that achieves expected LCS at least $(2/3 + c \rho)n$, where $c>0$ is an absolute constant.
\end{theorem}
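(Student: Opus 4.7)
The plan is to design an algorithm that outputs one of a small collection of candidate hypotheses based on simple statistics of the trace $y$. The primary candidates are the deterministic alternating string $\hat{x}_{\mathrm{alt}} := (01)^{n/2}$ and a ``stretched'' version of $y$ in which each bit $y_i$ is repeated approximately $n/|y|$ times to produce a length-$n$ string. Chernoff concentration of $|y|$, the bit counts, and the run count of $y$ is immediate since $\rho n = \omega(\log n)$, so these statistics can be used reliably by the algorithm to decide which candidate to output.

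The central combinatorial identity driving the analysis is that $|\LCS(x, (01)^{n/2})| = \min(n, \lfloor (n + R_x)/2 \rfloor)$, where $R_x$ denotes the number of maximal runs in $x$. The upper bound follows because any length-$L$ common subsequence has at most $R_x$ runs (inherited from $x$), and embedding an $m$-run pattern of length $L$ into $(01)^{n/2}$ requires at least $2L - m$ positions, forcing $L \leq (n + R_x)/2$. The matching lower bound is witnessed by an explicit construction that scales down the runs of $x$ proportionally and interleaves them into the alternating target. As a consequence, whenever $R_x \geq n/3$, outputting $\hat{x}_{\mathrm{alt}}$ already yields $|\LCS(x, \hat{x}_{\mathrm{alt}})| \geq (2/3 - o(1)) n$, so in that regime the algorithm simply detects that $R_x$ is large (via a statistic such as $R_y$, using $R_y \leq R_x$ and Chernoff) and outputs $\hat{x}_{\mathrm{alt}}$.

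The complementary case $R_x < n/3$ is handled via the stretched-trace hypothesis: since $x$ then has on average long runs, most runs of $x$ are preserved in $y$ with good probability, and a carefully chosen length-$n$ hypothesis (a stretched or run-level scaled version of $y$, possibly combined with interpolating alternating sections for the runs of $x$ unrepresented in $y$) can be shown to achieve $|\LCS(x, \hat{x})| \geq (2/3 - o(1))n$. The main technical obstacle is the intermediate regime $R_x \in (\rho n, n/3)$, where $R_x$ is neither large enough for the alternating hypothesis to clear $2n/3$ nor small enough for the stretched trace to cleanly reconstruct $x$'s structure; here one needs a hybrid construction and a careful LCS analysis that accounts for the Poisson-like profile of retained bits per $x$-run. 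For the strengthened bound $(2/3 + c\rho)n$ valid when $\rho \geq \omega(1/n^{1/3})$, the key extra ingredient is exploiting a second-order trace statistic, such as the bias in the correlation between consecutive bits of $y$ (which is of order $\Theta(\rho)$ when $x$ is alternating-like); reliably detecting such a bias from $|y| = \rho n$ samples requires signal-to-noise $\rho n \cdot \rho^2 = \Omega(1)$, equivalent to $\rho \geq \Omega(n^{-1/3})$, which is exactly the stated threshold.
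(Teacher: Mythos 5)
Your proposal does not follow the paper's route, and as written it has a genuine gap that you yourself flag. The key difference: the paper does not use $(01)^{n/2}$ plus a stretched trace. It constructs a two-element \emph{LCS-cover} (Definition~\ref{def:lcs-cover}, Claim~\ref{claim:lcs-cover}), namely $\bigl\{(01)^{n/3}0^{n/3},\ (01)^{n/3}1^{n/3}\bigr\}$, which is a $(2n/3)$-LCS-cover of $\zo^n$: every $x\in\zo^n$ has LCS $\geq 2n/3$ with one of these two fixed strings, with no case split on $R_x$ at all. The algorithm then only needs to decide between the two candidates, which it does by majority vote on the last $\Theta(\rho n)$ bits of the trace; with $\rho n = \omega(\log n)$ this succeeds with high probability, giving the $(2/3-o(1))n$ bound uniformly. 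Your pair $\{(01)^{n/2},\text{stretched trace}\}$ is not a $(2n/3)$-LCS-cover, and that is precisely why you are forced into the $R_x$ case analysis and run into the intermediate regime $R_x\in(\rho n, n/3)$, which you concede is unresolved. That regime is not peripheral: when $\rho$ is small, most source strings encountered in the worst case will have run counts there, and the stretched trace demonstrably fails (e.g., for $R_x=\Theta(\rho n)$ a typical run of $x$ contributes zero or one bit to $y$, so the stretched trace does not recover $x$'s run structure).

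A second, sharper problem is the proposed decision rule ``use $R_y\le R_x$ and threshold on $R_y$.'' This inequality is one-sided and becomes vacuous for small $\rho$. Take $x=(01)^{n/2}$ (so $R_x=n$) with $\rho=\omega(\log n/n)$ but $\rho=o(1)$: then $\by$ is essentially a uniformly random string of length $\rho n$, so $R_{\by}\approx\rho n/2\ll n/3$. Your rule would then output the stretched trace, whose LCS with $(01)^{n/2}$ is (by your own run-count identity applied to the stretched string) only about $(n+R_{\hat x})/2\approx n/2$, far below $2n/3$. Any $R_y$ threshold either misses many alternating-like sources or is so low that it also fires on non-alternating sources. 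This is not a cosmetic issue; you would need a different, two-sided statistic (the paper effectively uses block purity of a prefix of $y$ for its $(2/3+c\rho)$ refinement, Algorithm~$A$ in Section~\ref{sec:worst-case-small-rho}).

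What is correct and nice in your write-up: the identity $|\LCS(x,(01)^{n/2})|=\min\bigl(n,\lfloor(n+R_x)/2\rfloor\bigr)$ is valid and cleanly proved by the $2L-m$ embedding count; it is a close relative of the paper's Fact~\ref{fact:alternate}. And your back-of-envelope signal-to-noise calculation $\rho^3 n=\Omega(1)$ does land on the right $\omega(n^{-1/3})$ threshold for the second part; the paper's proof realizes it through the failure probability $e^{-\Omega(\gamma^2\rho n)}$ with $\gamma=\Theta(\rho)$. But neither your first-part argument (the intermediate regime, the decision rule) nor your second-part argument (``exploit the correlation bias'' without a concrete statistic, estimator, or LCS accounting) reaches a complete proof. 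To close the gap in the spirit of the paper, you would want to replace $(01)^{n/2}$ and the stretched trace with the two-element cover above, and replace the $R_y$ test with a majority-of-suffix test; and for the $(2/3+c\rho)$ bound, make the correlation/alternation detection concrete, e.g.\ by partitioning a prefix of $y$ into constant-size blocks and counting pure blocks, then branching into the win-win analysis of Claims~\ref{claim:events-case1}--\ref{claim:events-case22}.
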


The key to \Cref{thm:worst-case-small-rho-informal} is a (to the best of our knowledge novel) notion of an \emph{LCS-cover}, and a simple construction of an extremely small LCS-cover consisting of just two strings.  This already suffices to give the first sentence of \Cref{thm:worst-case-small-rho-informal}; the second sentence, improving the LCS bound to $(2/3 + c\rho)$, is obtained via a win-win analysis which considers whether or not the single received trace has many ``long runs''. Roughly speaking, if the trace has many long runs then this indicates that the source string $x$ is highly structured in a way (containing many long segments that are almost all-0 or almost all-1) that makes it easy to achieve a large LCS, and if the trace has few long runs then the source string $x$ must have many 01 alternations, which can be leveraged to get an LCS larger than $2n/3$.

\Cref{thm:worst-case-small-rho-informal} can be viewed as saying that having a $\log n$-bit trace already makes it possible to achieve an LCS of at least $(2/3 - o(1))n.$ Complementing \Cref{thm:worst-case-small-rho-informal}, we show that even having a $n^{0.999}$-bit trace does not make it possible to achieve an LCS of $(2/3 + c)n$ for any $c>0$:

\begin{theorem} [Worst-case upper bound on any algorithm, small retention rate, informal statement] \label{thm:worst-case-small-rho-upper-bound-informal}
Fix any $\eps>0$.  
For retention rate $\rho = 1/n^{\eps}$, no one-trace algorithm can achieve expected LCS greater than $(2/3 + o(1))n$ in the worst-case setting.
\end{theorem}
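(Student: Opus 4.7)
The plan is a Yao-style minimax argument: exhibit a distribution $\mathcal{D}$ over source strings $\bx \in \zo^n$ such that for every one-trace algorithm $A$,
\[
\E_{\bx \sim \mathcal{D},\,\by \sim \Del_\delta(\bx)}\bigl[|\LCS(\bx, A(\by))|\bigr] \leq (2/3 + o(1))n,
\]
from which a worst-case source $x$ witnessing the bound follows by averaging. Randomizing the adversary is essential here, since for any single worst-case instance an algorithm could simply hard-code $\hat x = x$ and achieve LCS $n$.

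As a warm-up I would first establish the \emph{zero-trace} analogue: find $\mathcal{D}$ with $\max_{\hat x}\E_\mathcal{D}[|\LCS(\bx,\hat x)|]\le (2/3+o(1))n$. By LP-duality/minimax, the value of this quantity equals the value of the best randomized LCS-cover, and the positive result in Theorem~\ref{thm:worst-case-small-rho-informal} establishes a lower bound of $2/3-o(1)$ on the latter, so the matching ``hard'' distribution exists in principle. Some naive candidates are ruled out: the uniform distribution on $\zo^n$ fails because the Chv\'atal--Sankoff constant $\gamma_{CS}\approx 0.788$ exceeds $2/3$, allowing a fixed $\hat x$ to achieve too large an expected LCS; and any two-point mixture $\tfrac12\delta_{x_0}+\tfrac12\delta_{x_1}$ of balanced strings satisfies $|\LCS(x_0,x_1)|\ge n/2$ (by taking the longer of the common 0- or 1-subsequence), which through the edit-distance triangle inequality $|\LCS(x_0,\hat x)|+|\LCS(x_1,\hat x)|\le n+|\LCS(x_0,x_1)|$ only yields $(n+n/2)/2=3n/4$. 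A richer ensemble---plausibly a random family of ``triangular'' block strings matched to the tight example for the $2/3$ LCS-cover---is needed.

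Next I would upgrade the zero-trace bound to a one-trace bound by arguing that $\by$ is nearly useless to any algorithm in this setting. The crude estimate $H(\by) \leq \E[|\by|] = \rho n = n^{1-\eps} = o(n)$ shows that the trace carries only sublinear information about $\bx$; the concrete task is to leverage this to show that for almost every $\by$ the Bayes-optimal hypothesis satisfies $\max_{\hat x}\E[|\LCS(\bx,\hat x)|\mid \by]\le (2/3+o(1))n$---i.e., the conditional posterior $\bx\mid\by$ remains a ``hard'' distribution for LCS. Concrete tools include bounding $\mathrm{TV}(\Del_\delta(x),\Del_\delta(x'))$ for $x,x'$ in the support of $\mathcal{D}$ by a small number of coarse statistics of $\bx$ (bit count, short-subword histograms) and arguing that conditioning on those statistics does not shift the Bayes-optimal LCS by more than $o(n)$.

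The main obstacle is constructing $\mathcal{D}$ satisfying both (a) the zero-trace $2/3$ cap and (b) trace-indistinguishability. These requirements pull in opposite directions: a distribution spread enough to cap every fixed hypothesis at $2n/3$ has large entropy, and large entropy generally leaks through the trace. The construction must therefore hide all LCS-relevant structure behind features that are destroyed by deletion at rate $1-n^{-\eps}$, which I expect to involve block patterns at a carefully chosen scale $m$ with $1/\rho \ll m \ll n$, so that block-level bit values may be recovered from the trace while finer intra-block alignment information---the kind of information that would let the algorithm beat $2/3$---is irretrievably lost.
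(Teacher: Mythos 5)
Your high-level plan is the right one, and it matches the paper's template: build a hard prior $\mathcal{D}$ over source strings, show the zero-trace optimum against $\mathcal{D}$ is at most $(2/3+o(1))n$, and then show a trace at retention rate $n^{-\eps}$ cannot help; you even correctly identify the block scale $1/\rho \ll m \ll n$ as the right place to hide structure. But the proposal leaves the two hardest steps unresolved, and the tools you suggest for the second step would not actually work.

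\textbf{The zero-trace $2/3$ cap is not a routine construction.} You note that neither the uniform distribution nor any two-point mixture works and that a ``richer ensemble'' is needed, but you do not supply one, and this is where the real mathematical content sits. The paper uses (a segmented version of) the Bukh--Ma code $C_{n,\eps} = \{(0^r1^r)^{n/(2r)}\}$ for $r$ ranging over a geometric progression, and the $2/3$ cap rests on a deep result of Guruswami--Haeupler--Shahrasbi (stated as \Cref{thm:GHS}): for any fixed $x\in\zo^n$, at most $O(1/\eps^3)$ codewords $A_u\in C_{n,\eps}$ can satisfy $|\LCS(x,A_u)|\ge(2/3+\eps/6)n$. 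Averaging over the $\Theta(\log\ell)$ codewords then gives $\AvgLCS(C_{n,\eps})\le(2/3+o(1))n$. The LP-duality/minimax remark in your proposal correctly establishes that such a prior \emph{exists}, but the theorem requires exhibiting one explicitly and verifying the LCS bound against it, and the only known route to that bound goes through the GHS analysis of Bukh--Ma codewords.

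\textbf{The entropy/TV argument for trace uselessness does not close.} The bound $H(\by)\le\rho n = o(n)$ is true but does not by itself imply that the Bayes-optimal posterior LCS stays at $(2/3+o(1))n$; low mutual information is compatible with the trace pinning down, with constant probability, the one bit of high-level structure (which $A_u$ was chosen) that an algorithm needs to beat $2/3$. In fact, if $\mathcal{D}$ is simply uniform over the Bukh--Ma codewords, a single trace of length $n^{1-\eps}$ does reveal which codeword generated it: the run-length statistics of $\by$ distinguish $(0^r1^r)^{n/(2r)}$ from $(0^{r'}1^{r'})^{n/(2r')}$ with small error, so the relevant TV distance is close to $1$, not small. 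The paper's fix is structurally different: for each $u$ it replaces the pure codeword by a mixture $\mathcal{M}^{(u)}$ over $\ell$-bit blocks whose $k$-deck (the multiset of length-$k$ subsequences) is exactly the same for every $u$, with the mixing weights chosen by inverting a Vandermonde system. With segment length $\ell = n^{1/k}$ and $\rho\le n^{-\kappa}$, at most $k$ bits of any segment survive with overwhelming probability, and the conditional distribution of those $\le k$ surviving bits is then literally identical across all $u$. This makes the trace exactly uninformative about $u$ (on the high-probability event), rather than approximately so, and the argument closes via a triangle-inequality step relating $\LCS(\bx,\hat x)$ to $\LCS(s_n^{(\bu)},\hat x)$ (\Cref{claim:mixture}). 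The Vandermonde $k$-deck equalization is the missing idea in your outline; ``coarse statistics TV bounds'' would not deliver it.
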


See \Cref{thm:worst-case-small-rho-upper-bound}
 for a detailed theorem statement, which extends \Cref{thm:worst-case-small-rho-upper-bound-informal} to give an upper bound on the performance of algorithms that receive multiple traces. 
\Cref{thm:worst-case-small-rho-upper-bound-informal} leverages a recent deep result of Guruswami, Haeupler, and Shahrasbi~\cite{GHS20} analyzing a code due to Bukh and Ma \cite{BukhMa14}. We take advantage of the highly repetitive structure of the Bukh--Ma codewords to combine the \cite{GHS20} result with a construction of a family of distributions over Bukh--Ma codes
such that the \emph{$k$-decks}\footnote{The $k$-deck of a single string is the multiset of all length-$k$ subsequences of the string, and the $k$-deck of a distribution over strings is the corresponding mixture of $k$-decks of the constituent strings in the mixture; see \Cref{sec:deck-notation} for detailed definitions.}
 of all of the different distributions coincide.  This in turn lets us show that a single trace does not have enough information to make more accurate reconstruction than (essentially) LCS $2n/3$ possible.



\Cref{thm:worst-case-small-rho-informal} sheds light on the high deletion rate and medium deletion rate regimes of one-trace reconstruction.  Turning to the medium and low deletion rate regimes, if the retention rate $\rho$ is large enough (at least some absolute constant), then the algorithm used for \Cref{thm:worst-case-small-rho-upper-bound-informal} is no longer best possible, since it would be better to simply output any string $\wh{x}$ that contains the trace $\by$ as a subsequence. This is because, as observed in \cite{CDLSS22}, any such string $\wh{x}$ achieves expected LCS at least $\E[|\by|]=\rho n = (1-\delta)n$.

Can better performance than this naive $(1-\delta)n$-length LCS be achieved in the medium and low deletion rate regimes?  We give an improvement by constructing a hypothesis string $\wh{\bx}$ that randomly intermingles random bits with the bits of $\by$. A careful analysis of the LCS between this $\wh{\bx}$ and the source string $x$ yields the following:

\begin{theorem} [Worst-case algorithm, small deletion rate, informal statement]\label{thm:worst-case-small-delta-informal}
There is a worst-case one-trace algorithm that achieves expected LCS at least $(1 - \delta + \delta^2/2 - \delta^3/2 + \delta^4/2 - \delta^5/2 - o(1))n$ for deletion rate $\delta.$
\end{theorem}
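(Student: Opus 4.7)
The plan is to design an algorithm that produces $\wh{\bx}$ by coupling its randomness to an independent ``simulated'' copy of the deletion channel, and then to lower bound the expected LCS by exhibiting an explicit common subsequence.

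\emph{Algorithm.} Sample a fresh sequence $A \in \{K,D\}^n$ with i.i.d.\ Bernoulli$(\rho)$ entries (so $A$ has the same distribution as the channel's keep/delete pattern $C \in \{K,D\}^n$). Place the bits of $\by$ in order at the $K$-positions of $A$, and fill the $D$-positions with independent uniform random bits. A standard Chernoff bound ensures that with probability $1-o(1)$ the number of $K$-positions in $A$ equals $|\by|$; the rare mismatch and any small length adjustments to make $\wh{\bx}$ have length exactly $n$ contribute only $o(n)$ to the analysis.

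\emph{Two-layer matching.} Exhibit a common subsequence in two layers. \emph{Layer 1} (trace-aligned matching) pairs the $k$-th trace bit in $x$ (at the $k$-th $K$-position of $C$) with the $k$-th trace bit in $\wh{\bx}$ (at the $k$-th $K$-position of $A$); this is automatically monotone and of size $\approx \rho n = (1-\delta) n$. \emph{Layer 2} (gap extensions) augments this within each of the $\approx \rho n$ ``windows'' between consecutive layer-1 pairs: in such a window, $x$ has an adversarial block of $T$ deleted bits and $\wh{\bx}$ has an independent block of $G$ uniform random bits, with $T,G$ i.i.d.\ shifted geometric satisfying $\Pr[T=t]=\rho \delta^t$ for $t \geq 0$. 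The key subclaim is that for any fixed $s \in \zo^t$ and uniform $r \in \zo^g$, $\E_r[|\LCS(s,r)|]$ is minimized when $s$ is monochromatic, in which case it equals $\E[\min(\Bin(g,1/2), t)]$; I would prove this by an exchange argument that flips a $0$ and a $1$ of $s$ to make it more uniform and shows the expected LCS can only decrease.

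\emph{Closed form and algebra.} Granted the subclaim, each window contributes at least $\E[\min(\Bin(G, 1/2), T)]$ in expectation. A direct calculation identifies $\Bin(G, 1/2)$ itself as a shifted geometric distribution with parameter $2\rho/(2-\delta)$, so $\min(\Bin(G, 1/2), T)$ is shifted geometric (as the minimum of two independent shifted geometrics) with expectation $\delta^2/(2 - \delta - \delta^2)$. Summing over the $\approx \rho n$ windows and adding the Layer 1 contribution gives, after simplification,
\[
\E[|\LCS(x,\wh{\bx})|] \;\geq\; n \cdot \frac{2-\delta}{2+\delta} \;-\; o(n),
\]
and a short algebraic check (equivalent to the easy inequality $\delta(1 - \delta - \delta^2) \leq 1$ on $\delta \in [0,1]$) shows this right-hand side is at least $(1-\delta)(1+\delta^2/2+\delta^4/2)n - o(n) = (1-\delta + \delta^2/2 - \delta^3/2 + \delta^4/2 - \delta^5/2 - o(1))n$, as claimed.

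The principal obstacle is the worst-case subclaim that monochromatic $s$ minimizes $\E_r[|\LCS(s,r)|]$; secondary technical issues include controlling the boundary windows (before the first and after the last trace match) and the Chernoff tail, each of which is shown to contribute only an $o(n)$ correction.
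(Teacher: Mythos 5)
Your algorithm is the same as the paper's \textsc{Small-rate-reconstruct} (couple $\wh{\bx}$ to an independent copy of the deletion pattern, place the trace bits, fill the rest with uniform random bits), and both analyses decompose $x$ and $\wh{\bx}$ into $|\by|$ windows, one per trace bit, with gap lengths that are i.i.d.\ shifted $\Geo(1-\delta)$ on each side. Where you genuinely diverge is in the per-window estimate. The paper only uses positional (Hamming) agreement on the common prefix of the two gaps: if the gaps have lengths $T$ and $G$, it matches $\min(T,G)$ aligned positions with success probability $1/2$ each, getting $\E[\min(T,G)]/2 = \frac{\delta^2}{2(1-\delta^2)}$ extra per window. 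You instead lower bound the full LCS of the adversarial $T$-bit gap against the uniform $G$-bit gap by $\min(\Bin(G,1/2),T)$, which is shifted geometric with mean $\frac{\delta^2}{2-\delta-\delta^2}$. Summing, your per-trace-bit factor is $\frac{2-\delta}{2+\delta}$ versus the paper's $1-\delta+\frac{\delta^2}{2(1+\delta)}$; a short computation shows your bound is strictly larger (the gap is $\frac{\delta^3}{2(2+\delta)(1+\delta)}n$), and your closing algebra correctly shows it still implies the stated polynomial bound. So your route is not only valid, it is a genuine improvement, and it yields a cleaner closed form. One remark on your ``principal obstacle'': the subclaim needs no exchange argument at all, and as stated your sketch even runs backwards (flipping $s$ toward balance and seeing LCS decrease would make balanced the minimizer, not monochromatic). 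The clean proof is via greedy one-sided matching: scan $r$ left to right, maintain a pointer $p$ into $s$ starting at $1$, and match $r_i$ to $s_p$ whenever $r_i=s_p$ and $p\le t$. Since each $r_i$ is uniform and independent of the pointer position, each comparison succeeds with probability exactly $1/2$, so the number of greedy matches is distributed as $\min(\Bin(g,1/2),t)$ for \emph{every} fixed $s$, giving the lower bound directly; monochromatic $s$ is where greedy happens to be optimal and hence where this lower bound is tight, but the bound itself needs no worst-case identification. The remaining boundary and concentration issues (Chernoff on $|\by|$, tail window, $\pm o(n)$ length adjustment of $\wh{\bx}$) are handled the same way as in the paper and are fine.
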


Given \Cref{thm:worst-case-small-delta-informal}, it is natural to ask about limitations of one-trace reconstruction in the low deletion rate regime. Taking $M=1$ in the main lower bound result (Theorem~1.2) of \cite{CDLSS22}, that result shows that no one-trace algorithm can achieve expected LCS greater than $(1-\delta^C)n$ in the worst-case setting, where $C$ is some absolute (large) constant. In \Cref{sec:average-case} we will see that \Cref{thm:small-delta-average-case-upper-bound-informal} establishes a stronger and near-optimal bound even for the more challenging average-case setting.

\subsubsection{Average-case one-trace reconstruction} \label{sec:average-case}

The average-case setting of one-trace reconstruction turns out to present some unexpected challenges due to connections with difficult unresolved problems in the combinatorics of words.  To see this, let us first consider the problem of average-case trace reconstruction from \emph{zero} traces; so the reconstruction algorithm receives no input at all, and simply aims to output the $n$-bit string $\wh{x}$ which maximizes the expected value of $|\LCS(\wh{x},\bx)|$ across uniform random $\bx \sim \zo^n$.  In contrast with the worst-case setting (where no zero-trace algorithm can achieve expected LCS better than 1/2 because the source string $x$ could be chosen uniformly at random from $\{0^n, 1^n\}$), in the average-case setting the hypothesis string $\wh{x}=(01)^{n/2}$ already achieves $\E[|\LCS(\wh{x},\bx)|] \geq (3/4 - o(1))n$: first greedily match the 0's in $\wh{x}$ with the 0's in $\bx$ from left to right, and then opportunistically augment these $\approx n/2$ matching edges with edges matching pairs of 1's where possible. So nontrivial performance is possible, even with zero traces, in the average-case setting.


Can we do better with a smarter choice of the hypothesis string $\wh{x}$? A natural idea is to select $\wh{\bx}$ uniformly at random from $\zo^n$. The performance of this zero-trace algorithm is captured by the \emph{Chv\'{a}tal--Sankoff constant}
\[
\gamma_2 := \lim_{n \to \infty}{\frac {\E_{\bx,\wh{\bx} \sim \zo^n}[|\LCS(\bx,\wh{\bx})|]}{n}}
\]
(the ``2'' is because we are working with the binary alphabet); the existence of this limit is an easy consequence of the superadditivity of LCS between random strings (using Fekete's Lemma \cite{Subadditivity:wikipedia}). Despite much investigation over more than 40 years, the value of $\gamma_2$ is not known: in 1975 Chv\'{a}tal and Sankoff showed that $0.727273 \leq \gamma_2 \leq 0.866595,$ and the current state of the art bounds, due to Lueker \cite{Lueker09}, are that $0.788071 \leq \gamma_2 \leq 0.826280$ \cite{ChvatalSankoff:wikipedia}.

A superadditivity argument similarly establishes the existence of the limit 
\begin{equation}
\label{eq:c2}
c_2 := \lim_{n \to \infty} \max_{\wh{x} \in \zo^n} {\frac {\E_{\bx \sim \zo^n}[|\LCS(\bx,\wh{x})|]}{n}},
\end{equation}
which corresponds to the performance of the information-theoretic optimal zero-trace algorithm for the average-case setting. Even less is known about $c_2$ than $\gamma_2$; Bukh and Cox \cite{BC22} have shown (via an involved argument and an automated search) that $c_2 \geq 0.82118$, and we show in \Cref{ap:upper-bound-c2} that $c_2 \leq 0.88999$, but more detailed bounds on the value of $c_2$ do not seem to be known, nor is it known what strings might achieve this optimal bound \cite{Bukh22}.

Given these challenges in understanding zero-trace reconstruction in the average-case setting, the prospects of analyzing one-trace average-case reconstruction may appear dim.
Perhaps surprisingly, for the low deletion rate regime and medium deletion rate regime it turns out that the difficulty of analyzing zero-trace reconstruction is the only barrier to showing an upper bound on average-case one-trace reconstruction.  This is shown in the following theorem, which gives an upper bound on average-case one-trace reconstruction in terms of the quantity $c_2$ from \Cref{eq:c2}:

\begin{theorem} [Average-case upper bound on any algorithm, small retention rate, informal statement] \label{thm:small-rho-average-case-informal}
Let $L_{1,\avg}(\delta,n)$ denote the best expected LCS achievable by any one-trace algorithm at deletion rate $\delta$ in the average-case setting.  Then we have $c_2 \leq \lim_{n \to \infty} {\frac {L_{1,\avg}(\delta,n)}{n}} \leq c_2+ \rho.$ 
\end{theorem}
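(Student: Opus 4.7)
The lower bound is immediate from the definition of $c_2$: any one-trace algorithm may simply ignore its input and execute an information-theoretically optimal zero-trace algorithm.  By \eqref{eq:c2}, for every $\eps > 0$ and all sufficiently large $n$ there is a fixed $\hat{x}_n \in \zo^n$ with $\E_{\bx \sim \zo^n}[|\LCS(\bx,\hat{x}_n)|] \geq (c_2 - \eps)\,n$, and outputting this $\hat{x}_n$ regardless of the received trace already achieves expected LCS at least $(c_2 - \eps)\,n$.

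For the upper bound, I would couple the deletion channel to the random set $S \subseteq [n]$ of \emph{retained} positions, where each $i \in [n]$ is placed in $S$ independently with probability $\rho$; then the trace $\by$ equals $\bx|_S$ listed in order.  Since revealing additional information to the algorithm can only increase the best achievable expected LCS, it suffices to bound the performance of any algorithm $\hat{x} = \hat{x}(S, \by)$ that sees both $S$ and $\by$.  Conditioned on the pair $(S, \by)$, the source $\bx$ has $\bx|_S = \by$ determined while $\bx|_{\bar S}$ is uniformly random on $\zo^{n-|S|}$ and independent of $(S, \by)$.

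The heart of the argument is the decomposition
\[
|\LCS(x, \hat{x})| \;\leq\; |S| + |\LCS(x|_{\bar S}, \hat{x})| \qquad \text{for every } x, \hat{x} \in \zo^n, \; S \subseteq [n].
\]
To verify it, take any optimal LCS matching between $x$ and $\hat{x}$ and split it into sub-matchings $M_1$ and $M_2$ whose $x$-endpoints lie in $S$ and $\bar S$ respectively: $|M_1| \leq |S|$ because distinct matched pairs use distinct $x$-positions, and $M_2$ witnesses a common subsequence of $x|_{\bar S}$ and $\hat{x}$ of length $|M_2|$.  I would then take the conditional expectation over $\bx|_{\bar S}$; to invoke \eqref{eq:c2} on a random string of length $n - |S|$ rather than $n$, pad $\bx|_{\bar S}$ on the right with $|S|$ fresh independent uniform bits to obtain a uniform $\bx' \sim \zo^n$.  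Since padding cannot decrease the LCS against a fixed target, $|\LCS(\bx|_{\bar S}, \hat{x})| \leq |\LCS(\bx', \hat{x})|$, and the definition of $c_2$ then gives $\E[|\LCS(\bx|_{\bar S}, \hat{x})| \mid S, \by] \leq (c_2 + o(1))\,n$.  Averaging over $(S, \by)$ and using $\E[|S|] = \rho n$ yields $\E[|\LCS(\bx, \hat{x}(S, \by))|] \leq (c_2 + \rho + o(1))\,n$, as desired.

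The only delicate point is to ensure that the pad-and-apply-$c_2$ step is valid: this requires $\bx|_{\bar S}$ to be uniform and independent of the hypothesis $\hat{x}(S, \by)$ in the conditional expectation---which is precisely what conditioning on $(S, \by)$ provides.  Beyond this, the argument is routine: no concentration of $|S|$ around its mean $\rho n$ is needed, since only $\E[|S|]$ enters the bound, and the random (rather than fixed) length of $\bx|_{\bar S}$ is absorbed cleanly by the padding trick.
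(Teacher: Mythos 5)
Your proof is correct, and it reaches the paper's bound $L_{1,\avg}(\delta,n)\leq L_{0,\avg}(n)+\rho n$ by a route that is genuinely (if mildly) different in its bookkeeping. The paper also reduces to the a posteriori distribution of $\bx$ given the trace (their Observation~\ref{ob:post1}), but the core step is a Hamming-distance coupling (their Claim~\ref{claim:short}): they draw a uniform $\bx$, then overwrite the bits at a random size-$m$ set with $y$ to obtain $\bx'\sim\calD_y$, and use that changing $m$ bits changes $|\LCS(z,\cdot)|$ by at most $m$. You instead condition on the retained set $S$ itself (a legitimate strengthening of the algorithm), split the optimal matching into the edges whose $x$-endpoint lies in $S$ (at most $|S|$ of them) versus $\bar S$, and then pad $\bx|_{\bar S}$ to length $n$ so that $c_2$ applies directly. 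The two decompositions encode the same intuition---the $\approx\rho n$ seen positions contribute at most $\approx\rho n$ to the LCS, and the unseen positions reduce to the zero-trace problem---but yours is a direct matching split and theirs is a one-line Lipschitz property of LCS under bit flips. Both are about equally elementary; the paper's coupling generalizes more smoothly to $t$ traces (via collision information and $\E[|\by|]=n(1-\delta^t)\leq t\rho n$), whereas your conditioning-on-$S$ argument would extend by taking $S=S_1\cup\cdots\cup S_t$ with the same expected size. One small point worth noting: your $(c_2+o(1))n$ can be tightened to $c_2 n$ exactly, since by superadditivity and Fekete's lemma $L_{0,\avg}(n)/n$ converges to $c_2$ from below.
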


\Cref{thm:small-rho-average-case-informal} tells us that for any $\rho=o_n(1)$ retention rate, it is not possible to asymptotically improve on the performance of the best zero-trace algorithm.
In fact, in \Cref{thm:small-rho-average-case} we give a generalization of \Cref{thm:small-rho-average-case-informal} which gives an upper bound on the performance of algorithms that receive more than one trace.  The proof is based on a careful analysis, using a coupling argument, of the \emph{a posteriori} distribution of the random source string $\bx$ given the received collection of traces.

Finally, we consider upper and lower bounds which are applicable for the medium and small deletion rate regime.  In the average-case setting, the algorithm of \Cref{thm:worst-case-small-delta-informal} can be shown to have better performance than was established in \Cref{thm:worst-case-small-delta-informal} for the worst-case setting:

\begin{theorem} [Average-case algorithm, small deletion rate, informal statement] \label{thm:small-delta-average-case-algorithm-informal}
There is an average-case one-trace algorithm that achieves expected LCS at least 
$ \left( 1 - \delta + \frac12\delta^2   + \frac{17}{8}\delta^4 + \frac{55}{8}  \delta^5 -o(1) \right) n$ for deletion rate $\delta.$
\end{theorem}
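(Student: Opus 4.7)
The plan is to use essentially the same algorithm as in \Cref{thm:worst-case-small-delta-informal}: given the trace $y$, the algorithm outputs a hypothesis $\hat{\bx}$ of length $n$ obtained by interleaving $y$ with fresh uniform random bits, where the number of bits inserted between each successive pair of bits of $y$ (and at the two ends) is drawn independently from $\Geo(\rho)$, mimicking the inverse of the $\delta$-deletion channel. Moving to the average-case model does not change the algorithm, only the analysis: because $\bx$ is uniform, the blocks of deleted bits in $\bx$ are themselves uniformly random, which enables a sharper bound than is possible in the worst case.

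The starting point of the analysis is a block-local lower bound on $|\LCS(\hat{\bx},\bx)|$. Partition $\bx$ into its surviving bits (which form $y$) and the blocks $B_0, \ldots, B_{|y|}$ of consecutive deleted bits in the gaps, and write $\hat B_0, \ldots, \hat B_{|y|}$ for the algorithm's inserted blocks. Aligning the bits of $y$ with their sources in $\bx$ and then aligning each $\hat B_i$ locally with $B_i$ gives
\begin{equation*}
|\LCS(\hat{\bx}, \bx)| \;\geq\; |y| + \sum_i |\LCS(\hat B_i, B_i)|.
\end{equation*}
Since $\bx$ is uniform, each $B_i$ is an independent uniform random binary string of length $\Geo(\rho)$, and so is $\hat B_i$. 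The inner expectations therefore reduce to the constants $L(k,k') := \E[|\LCS(U,V)|]$ for independent uniform $U \in \zo^k$ and $V \in \zo^{k'}$, which are computable by direct enumeration for small $k,k'$ (e.g.\ $L(1,1) = \frac12$, $L(2,2) = \frac98$, $L(2,3) = \frac{23}{16}$).

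However, this block-local bound alone is not tight enough to yield the claimed $\delta^4$ and $\delta^5$ coefficients: a globally optimal LCS alignment may match random bits inserted in one $\hat B_i$ with random bits of $\bx$ sitting in other blocks---a cross-block effect that the local analysis ignores. To obtain the claimed coefficients, I would refine the analysis by enumerating local windows of a bounded number (say, five) of consecutive gaps; for each configuration of local insertion/deletion lengths and bit values, compute the optimal alignment on the window; and aggregate the contributions weighted by their probabilities under the deletion-and-insertion process (a factor of $\rho$ per gap and a factor of $\delta$ per inserted or deleted bit). Expanding in $\delta$ and combining with the trace contribution of $|y| \approx (1-\delta)n$ yields the claimed expansion $1 - \delta + \frac12\delta^2 + \frac{17}{8}\delta^4 + \frac{55}{8}\delta^5 - o(1)$.

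The main obstacle is carrying out the cross-block case analysis correctly. The number of configurations contributing to order $\delta^5$ is manageable but not small, and within each window the globally optimal alignment can be subtle---inserted bits in gap $i$ may be matched with deleted bits two or more gaps away. A clean bookkeeping device, such as associating each cross-block match to the earliest window containing both of its endpoints, together with a systematic tabulation of $L(k,k')$ for $k + k' \le 6$, should suffice to complete the enumeration.
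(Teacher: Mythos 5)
Your algorithm and the block-local decomposition $|\LCS(\hat{\bx},\bx)| \geq |y| + \sum_i |\LCS(\hat B_i, B_i)|$ coincide exactly with what the paper does: the formal version of this statement (\Cref{thm:small-delta-average-case-algorithm}) proves precisely
\[
\E\bigl[|\LCS(\wh{\bx},\bx)|\bigr] \;\geq\; (1-\delta)\Bigl(1 + (1-\delta)^2 \sum_{j,k\geq 1} \cs(j,k)\,\delta^{j+k}\Bigr) n \;-\; o(n),
\]
where $\cs(j,k)$ is your $L(j,k)$; this is your block-local bound, with each gap contributing $\cs(\bd_i-1,\bd'_i-1)+1$ and the two gap lengths being independent $\Geo(1-\delta)$. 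The paper then just substitutes the tabulated values of $\cs(j,k)$ for $j+k\le 6$. It performs \emph{no} cross-block case analysis whatsoever.

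Your suspicion that the block-local bound ``is not tight enough to yield the claimed $\delta^4$ and $\delta^5$ coefficients'' turns out to be numerically well-founded, but the diagnosis is wrong. Writing $S(\delta) := \sum_{j,k\geq 1}\cs(j,k)\delta^{j+k} = \tfrac12\delta^2 + \tfrac32\delta^3 + \tfrac{23}{8}\delta^4 + \tfrac{19}{4}\delta^5 + O(\delta^6)$ and expanding $(1-\delta) + (1-\delta)^3 S(\delta)$, one obtains
\[
1 - \delta + \tfrac12\delta^2 + 0\cdot\delta^3 - \tfrac18\delta^4 + \tfrac18\delta^5 + O(\delta^6),
\]
not $1-\delta+\tfrac12\delta^2+\tfrac{17}{8}\delta^4+\tfrac{55}{8}\delta^5$; the latter coefficients appear to be an arithmetic slip in the paper's informal statement (its own formal theorem plus Table~\ref{Tab:Tcr} give the former). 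The genuine improvement over the worst-case $(1-\delta+\tfrac12\delta^2-\tfrac12\delta^3+\cdots)n$ bound already shows up at order $\delta^3$, where $0$ replaces $-\tfrac12$, and follows from the block-local argument alone. The cross-block enumeration you sketch is therefore not what this theorem's proof is, and you do not carry it out (you explicitly flag the subtlety of far-apart matches but offer no completed bookkeeping). As written, your proposal is a correct re-derivation of the paper's actual bound followed by an unfinished detour; to close the proof you should simply stop at the block-local bound, substitute the $\cs(j,k)$ table, and report the coefficients that the series actually produces.
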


Given \Cref{thm:small-delta-average-case-algorithm-informal}, it is natural to investigate the best possible performance of any one-trace algorithm in the average-case setting for small $\delta$. A relatively simple probabilistic argument (which is based on a union bound across all possible matchings, and which we give in \Cref{sec:small-delta-average-case-weak-upper-bound}) shows that the expected LCS achieved by any one-trace algorithm can be at most $(1 - \Omega(\delta/\log(1/\delta))) \cdot n$.  Via a more involved probabilistic argument we strengthen this to a $1-\Theta(\delta)$ bound:  

\begin{theorem} 
[Average-case upper bound on any algorithm, small retention rate, informal statement] 
\label{thm:small-delta-average-case-upper-bound-informal}
For any deletion rate $\delta = \omega(1/n)$, no one-trace algorithm can achieve expected LCS greater than $(1-c \delta)n$ in the average-case setting, where $c$ is some absolute constant.
\end{theorem}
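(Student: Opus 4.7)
Fix any deterministic one-trace algorithm $A$ (randomized algorithms do no better by convexity). By concentration of $\Bin(n,\delta)$ and the hypothesis $\delta=\omega(1/n)$, we may condition on the typical event $|\by|=n-k$ with $k=\delta n(1\pm o(1))$. A direct posterior calculation shows that, conditioned on $\by$, the joint distribution of the deletion set $T\subseteq[n]$ and the deleted bits $\bx_T$ is clean: $T$ is uniform over $\binom{[n]}{k}$, $\bx_T$ is uniform in $\zo^k$ and independent of $T$, and $\bx_{[n]\setminus T}$ is fixed so that $\bx$ reads $\by$ in order. The goal reduces to showing that for every $\wh{x}\in\zo^n$ (possibly chosen as a function of $\by$), $\E_{T,\bx_T}[|\LCS(\bx,\wh{x})|]\le(1-c\delta)n$. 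My plan is to draw two \emph{independent} samples $\bx,\bx'$ from this posterior and exploit the triangle inequality for insertion-deletion distance $\dedit(x,y)=|x|+|y|-2|\LCS(x,y)|$, which is a metric. Since all three strings have length $n$, the triangle inequality gives
\[
|\LCS(\bx,\wh{x})|+|\LCS(\bx',\wh{x})|\le n+|\LCS(\bx,\bx')|.
\]
Writing $s:=\E[|\LCS(\bx,\wh{x})|]$ and taking expectations (over the product posterior) yields $2s\le n+\E[|\LCS(\bx,\bx')|]$, so it suffices to prove $\E[|\LCS(\bx,\bx')|]\le(1-2c\delta)n$.

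\textbf{Bounding the LCS of two posterior samples.} The two samples share $\by$ as a common subsequence (at independent random positions $T,T'$), giving the trivial lower bound $|\LCS(\bx,\bx')|\ge|\by|=(1-\delta)n$. The heart of the proof is to upper bound this by $(1-\delta)n+c''\delta n+o(n)$ for an explicit constant $c''<1$. I decompose any monotone matching $M$ between $\bx$ and $\bx'$ according to whether each pair involves a $\by$-bit or a random bit on each side, yielding four categories: $\by$-to-$\by$ (contributes $\le|\by|$), $\by$-to-random and random-to-$\by$ (each such pair is valid with probability $1/2$ because the random side is an independent uniform bit), and random-to-random. Using the induced $\by$-embedding gap structure---with gap sizes $|T_j|,|T'_j|$ each approximately $\mathrm{Poisson}(\delta/(1-\delta))$ and in particular $\Pr[|T_j|\wedge|T'_j|\ge1]=O(\delta^2)$---a Chvátal--Sankoff-type bound inside each gap gives total random-bit contribution at most $\gamma_2 \sum_j\min(|T_j|,|T'_j|)+o(n)$, where $\gamma_2<1$ is the Chvátal--Sankoff constant. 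Summing yields $\E[|\LCS(\bx,\bx')|]\le(1-\delta)n+c''\delta n+o(n)$ with $c''<1$, and substituting into the coupling bound gives the theorem with $c=(1-c'')/2>0$.

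\textbf{Main obstacle.} The principal technical challenge is ruling out the possibility that the optimal matching between $\bx$ and $\bx'$ unaligns the two $\by$-embeddings to recover many additional random-bit matches, thereby inflating $|\LCS(\bx,\bx')|$ beyond $|\by|+O(\delta^2n)$. The difficulty is that $\wh{x}$ is not directly constrained, but the LCS on the right-hand side of the coupling bound is an intrinsic function of $\bx,\bx'$; still, the optimal matching may prefer to sacrifice some $\by$-to-$\by$ pairs for extra random-bit pairs. A naive union bound over possible embedding pairs reintroduces the $\log(1/\delta)$ factor of the baseline bound from \Cref{sec:small-delta-average-case-weak-upper-bound}. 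To avoid this loss I would argue that sacrificing a $\by$-pair is strictly suboptimal in expectation (since each random pair contributes $\le1/2$ in expectation while each $\by$-pair contributes $1$), and then use gap-by-gap Chvátal--Sankoff concentration to control the aggregate. Handling this uniformly over all $\wh{x}$ and over all $\delta=\omega(1/n)$ is where the bulk of the technical work lies.
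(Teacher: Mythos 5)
Your reduction via the triangle inequality on $\dedit$ (reducing the problem to bounding $\E[|\LCS(\bx,\bx')|]$ for two independent posterior samples $\bx,\bx'\sim\by$) is exactly the paper's reduction, as is the conditioning on a typical trace length $m=(1-\delta')n$ via the median of $\Bin(n,\delta)$. Up to that point your plan and the paper's proof coincide.

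The gap is in the crucial remaining step: bounding $\E[|\LCS(\bx,\bx')|]\le(1-\Theta(\delta))n$. You decompose matchings into $\by$-to-$\by$, $\by$-to-random, random-to-$\by$, and random-to-random pairs, observe that the \emph{synched} matching (the one that aligns the two $\by$-embeddings) gains only $O(\delta^2 n)$ over $m$, and correctly identify the danger that some \emph{unaligned} matching could do much better. But your proposed remedy --- ``sacrificing a $\by$-pair is strictly suboptimal in expectation'' --- controls the average value of each fixed candidate matching, not the maximum over all $2^{\Theta(H(\Theta(\delta))n)}$ candidate matchings. Since $\LCS(\bx,\bx')$ is a maximum, an expectation argument about individual matchings cannot by itself preclude rare configurations of $(\bS,\bS')$ under which some unaligned matching becomes valid; the union bound over matchings is unavoidable, and as you yourself note it reintroduces the $\log(1/\delta)$ loss unless one has a very sharp tail bound on a per-matching quantity. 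Your sketch never produces such a bound.

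The paper's proof supplies precisely what is missing. It defines, for each candidate matching $M$ and each pair of embedding sets $(S,S')$, a $\score_M(S,S')$ (the number of edges of $M$ that match corresponding positions of the two $\by$-embeddings), shows that $\Pr[M\text{ valid}\mid\score_M=\ell]=2^{-(t-\ell)}$, and then proves the key technical lemma that for \emph{every} fixed $M$ of size $t=(1-4c_1\delta')n$,
\[
\Pr_{\bS,\bS'}\bigl[\score_M(\bS,\bS')\ge(1-\sqrt{\delta'})n\bigr]\le 2^{-\Omega(\delta'\log(1/\delta')n)},
\]
which is exactly strong enough to survive the union bound over the $2^{2H(4c_1\delta')n}$ matchings without a $\log(1/\delta)$ loss. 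Establishing this tail bound requires the ``well-spaced'' combinatorial structure of a random $\bS$ (many disjoint length-$(1+2\beta)$ intervals of $M$-positions carrying the pattern $1^\beta 0 1^\beta$), together with a hypergeometric calculation showing $\bS'$ is unlikely to have enough zeros in the corresponding windows. This structural argument is the nontrivial content of the proof, and it is absent from your proposal; without it or an equivalent, the plan does not close.
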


We observe that by virtue of \Cref{thm:small-delta-average-case-algorithm-informal}, \Cref{thm:small-delta-average-case-upper-bound-informal} is best possible up to the hidden multiplicative constant on the $\delta$-term.

\subsection{Future work}

A natural first goal for future work is to obtain sharper results. For example, if the deletion rate $\delta$ is 0.1,
 what is the largest constant $c$ such that expected LCS $cn$ can be achieved in the worst-case setting? In the average-case setting? What if $\delta = 0.9$?  We do not currently have sharp answers to questions such as these.

A different goal is to go beyond one-trace reconstruction.  While our negative results \Cref{thm:worst-case-small-rho-upper-bound-informal} and \Cref{thm:small-rho-average-case-informal} extend to algorithms that receive multiple traces, it would be interesting to extend positive results such as \Cref{thm:worst-case-small-rho-informal}, \Cref{thm:worst-case-small-delta-informal} and \Cref{thm:small-delta-average-case-algorithm-informal} to the setting of multiple traces. In this context we mention the work of Chakraborty et al.~\cite{CDK21}, which gave an average-case algorithm for approximate trace reconstruction from three traces in an insertion-deletion model.


\section{Preliminaries} \label{sec:preliminaries}

\medskip

\noindent {\bf Notation.}
Given a positive integer $n$, we write $[n]$ to denote $\{1,\ldots,n\}$.
Given two integers $a\le b$ we write $[a:b]$ to denote $\{a,\ldots,b\}$. 
We write $\ln$ to denote natural logarithm and $\log$ to denote logarithm to the base 2.
We denote the set of non-negative integers by $\Z_{\geq 0}$.
We  write ``$a=b\pm c$'' to indicate that $b-c\le a\le b+c$.
It will be convenient for us to index a binary string~$x \in \zo^n$
  using $[1:n]$ as~$x=(x_1,\dots,x_{n})$. 

\medskip

\noindent {\bf Distributions.}
When we use bold font such as $\bD, \by, \bz$, etc., it indicates that the entity in question is a random variable.
We write ``$\br \sim {\cal P}$'' to indicate that random variable~$\br$~is 
  distributed according to probability distribution ${\cal P}$.  If $S$ is a finite set we write ``$\br \sim S$'' to indicate that $\br$ is distributed uniformly over $S$.
 
 We write $\Geo(\rho)$ to denote the geometric distribution with parameter $\rho$, i.e.~the number of Bernoulli trials with success probability $\rho$ needed to get one success, supported on the set $\{1,2,\dots\}$.
 We will use the following tail bound for sums of independent geometric random variables:

\begin{claim} \label{claim:neg-binomial}
Let $\rho \in [0,1]$ and let $\bG_1, \ldots, \bG_m$ be $m$ independent geometric random variables with each $\bG_i \sim \Geo(\rho)$.
For any $\gamma \in [0,1]$, we have
  \[
    \Pr\left[ \abs[\bigg]{\sum_{i=1}^m \bG_i - \rho^{-1} m } \ge \gamma \rho^{-1} m \right]
    \le e^{-\Omega(\gamma^2 m) }  .
  \]
\end{claim}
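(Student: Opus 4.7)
The plan is to reduce the tail bound for a sum of geometric random variables to the standard multiplicative Chernoff bound for a sum of Bernoulli random variables, using the classical ``number of trials vs.\ number of successes'' duality between the negative binomial distribution and the binomial distribution.

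Concretely, let $S_m := \sum_{i=1}^m \bG_i$, and couple the $\bG_i$'s to a single infinite sequence $\bB_1, \bB_2, \ldots$ of i.i.d.\ Bernoulli$(\rho)$ trials in the usual way: $\bG_i$ is the gap between the $(i-1)$-th and $i$-th success, so $S_m$ is exactly the index of the $m$-th success. For any integer $N \ge 0$ this gives the identity
\[
\{S_m \le N\} \;=\; \{\bB_1 + \cdots + \bB_N \ge m\}.
\]
First, I would bound the upper tail $\Pr[S_m \ge (1+\gamma)\rho^{-1} m]$. Setting $N = \lceil (1+\gamma)\rho^{-1} m\rceil - 1$, the event $\{S_m \ge N+1\}$ is the event that $\bB_1 + \cdots + \bB_N < m$, which is a lower-tail Chernoff event for a Binomial$(N,\rho)$ random variable with mean roughly $(1+\gamma)m$; the standard multiplicative Chernoff bound gives probability $\exp\bigl(-\Omega(\gamma^2 m / (1+\gamma))\bigr) = \exp(-\Omega(\gamma^2 m))$ since $\gamma \le 1$. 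Symmetrically, for the lower tail $\Pr[S_m \le (1-\gamma)\rho^{-1} m]$, setting $N = \lfloor(1-\gamma)\rho^{-1} m\rfloor$ turns the event into an upper-tail Chernoff event $\{\bB_1 + \cdots + \bB_N \ge m\}$ for a Binomial with mean roughly $(1-\gamma)m$, again of probability $\exp(-\Omega(\gamma^2 m))$.

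A union bound on the two one-sided tails and a trivial adjustment for the floor/ceiling rounding (which changes the effective deviation by at most $1$ and is absorbed into the $\Omega(\cdot)$ as long as $m \ge 1$) yields the claimed two-sided bound. The only step requiring any real care is keeping the constants clean through the rounding and through the $(1+\gamma)^{-1}$ factor in Chernoff's lower tail; since we only need an $\Omega(\gamma^2 m)$ exponent and $\gamma \in [0,1]$, these are both routine, so I do not anticipate any genuine obstacle.
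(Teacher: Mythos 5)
Your proposal is correct and takes essentially the same route as the paper: both reduce the tail bound for $\sum_i \bG_i$ to a Chernoff bound for a binomial via the negative-binomial/binomial duality (``the $m$-th success occurs by trial $N$'' iff ``at least $m$ successes in the first $N$ trials''), splitting into upper and lower tails. The only difference is cosmetic — you make the rounding and the two one-sided tails explicit, while the paper writes the same thing in one displayed chain.
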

\begin{proof} By coupling $(\bG_1,\dots,\bG_m)$ with a draw from the Binomial distribution $\Bin(n,\rho)$, we observe that $\sum_{i=1}^m \bG_i \ge n$ if and only if $\Bin(n, \rho) < m$.
  Let $n_h := (1+\gamma) \rho^{-1} m$ and $n_\ell := (1-\gamma) \rho^{-1} m$.
  We have
  \begin{align*}
    \MoveEqLeft
    \Pr\left[ \abs[\bigg]{\sum_{i=1}^m \bG_i - \rho^{-1} m } \ge  \gamma\rho^{-1} m \right] \\
    &= \Pr\bigl[ \Bigl( \Bin\bigl( (1 + \gamma) \rho^{-1} m, \rho \bigr) < m \Bigr) \vee \Bigl( \Bin\bigl( (1 - \gamma) \rho^{-1} m, \rho \bigr) > m \Bigr) \biggr] \\
    &= \Pr\biggl[ \Bigl( \Bin\bigl( n_h, \rho \bigr) < \frac{1}{1+\gamma} \cdot \rho n_h \Bigr) \vee \Bigl( \Bin\bigl( n_\ell, \rho \bigr) > \frac{1}{1-\gamma} \cdot \rho n_\ell \Bigr) \biggr] \\
    &= \Pr\biggl[ \Bigl( \Bin\bigl( n_h, \rho \bigr) < \left( 1 - \frac{\gamma}{1+\gamma} \right) \rho n_h \Bigr) \vee \Bigl( \Bin\bigl( n_\ell, \rho \bigr) > \left( 1 + \frac{1}{1-\gamma} \right) \rho n_\ell \Bigr) \biggr] \\
    &\le e^{-\Omega(\gamma^2 m) }, \qedhere
  \end{align*}
where the inequality is a standard Chernoff bound.\end{proof}

\medskip

\noindent {\bf Deletion channel and traces.}
Throughout this paper the parameter $0 <\delta < 1$ denotes the \emph{deletion probability}.  Given a string $x \in \zo^n$, we write $\Del_\delta(x)$ to denote the distribution of the string that results from passing  $x$ through the $\delta$-deletion channel (so the distribution $\Del_\delta(x)$ is supported on $\zo^{\leq n}$), and we refer to a string in the support of $\Del_\delta(x)$ as a \emph{trace} of $x$.  Recall that a random trace $\by \sim \Del_\delta(x)$ is obtained by independently deleting each bit of $x$ with probability $\delta$ and concatenating the surviving bits.\hspace{0.05cm}\footnote{In this work we assume that the deletion probability $\delta$ is known to the reconstruction algorithm.\ignore{ \red{Rocco: Do we need this? Maybe yes to get from \Cref{thm:main} to \Cref{thm:main2}?}}}
We may view the draw of a trace $\by$ from $\Del_\delta(x)$ as a two-step process:  first a set $\bD$ of \emph{deletion locations} is obtained by including each
  element of $[n]$ independently with probability $\delta$, and then 
  $\by$ is set to be $\smash{x_{[n]\setminus \bD}}$. 
    
\medskip

\noindent {\bf $\LCS$ and matchings.}
We write $\LCS(x,x')$ to denote the longest common subsequence between two strings $x$ and $x'$ and $|\LCS(x,x')|$ to denote its length. 
A \emph{matching} $M$ between two strings $x,x' \in \zo^\ast$ is a list of pairs $(v_1,v'_1), (v_2,v'_2),\dots$ such that $v_1 \leq v_2 \leq \cdots$, $v'_1 \leq v'_2 \leq \cdots$, and for every $t$ we have $x_{v_t} = x'_{v'_t}.$ The \emph{size} of a matching is the number of pairs.  We note that the largest matching between $x$ and $x'$ is of length 
$|\LCS(x,x')|$.

\medskip

%
%
%
%
%

\noindent {\bf An asymptotic bound on binomial coefficients.}
We recall the following standard bound on binomial coefficients:
\begin{fact} [\cite{vanLint}, Theorem~1.4.5]
\label{fact:standard-bound}
For~$0 \leq k \leq n/2$, we have $\sum_{i=0}^{k} {n \choose i} \leq 2^{H(k/n)n}$, where $H(x) = x\log(1/x) + (1-x)\log(1/(1-x))$ is the binary entropy function.
\end{fact}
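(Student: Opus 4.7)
The plan is to prove this bound via the binomial theorem with the ``right'' probability parameter. Setting $p = k/n \in [0, 1/2]$, the identity $\sum_{i=0}^{n}\binom{n}{i}p^i (1-p)^{n-i}=1$ (which is just $(p + (1-p))^n = 1$) immediately gives the one-sided inequality $\sum_{i=0}^{k}\binom{n}{i}p^i (1-p)^{n-i}\leq 1$ obtained by truncating to $0 \leq i \leq k$.

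The key step I would then exploit is that, because $p \leq 1/2$, the ratio $p/(1-p) \leq 1$, so the terms $p^i(1-p)^{n-i}$ are non-increasing in $i$. Consequently, for every $0 \leq i \leq k$ we have $p^i (1-p)^{n-i} \geq p^k(1-p)^{n-k}$. Pulling the common factor $p^k(1-p)^{n-k}$ out front of the truncated sum and rearranging yields
\[
\sum_{i=0}^{k}\binom{n}{i} \;\leq\; p^{-k}(1-p)^{-(n-k)} \;=\; \left(\frac{n}{k}\right)^{k}\left(\frac{n}{n-k}\right)^{n-k}.
\]

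Taking $\log_2$ of the right-hand side and unpacking the definition $H(x) = x\log(1/x) + (1-x)\log(1/(1-x))$, one recognizes the exponent as exactly $n \cdot H(k/n)$, which is the desired bound. The only edge cases meriting brief care are $k = 0$ (trivial, since $H(0) = 0$ and the sum is $1$) and $k = n/2$ (where $p = 1/2$ and the ratio inequality is tight but still valid). I do not anticipate a genuine obstacle here: the only ``trick'' is the choice $p = k/n$, which is precisely what makes the binary entropy appear on the right-hand side, and everything else reduces to manipulating the binomial identity and the monotonicity of $p^i(1-p)^{n-i}$.
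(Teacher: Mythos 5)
Your argument is correct, and it is the standard textbook proof of this inequality (the paper itself only cites \cite{vanLint} and does not reproduce a proof). The two ingredients — truncating the identity $\sum_{i=0}^n \binom{n}{i} p^i(1-p)^{n-i} = 1$ at $p = k/n$, and using $p \le 1/2$ to lower-bound each term $p^i(1-p)^{n-i}$ by $p^k(1-p)^{n-k}$ for $i \le k$ — are exactly the right moves, and the $\log_2$ computation recovering $nH(k/n)$ is routine.
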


\subsection{The average-case setting} \label{sec:avg}
We record the following simple observation, which is useful for analyses of the average-case setting: 
 
\begin{observation}[\emph{A posteriori} distribution of a uniform random source string given one trace] \label{ob:post1}
Let $\bx$ be a uniform random source string from $\zo^n$.
Given any fixed outcome $y \in \zo^m$ of a single trace $y =\by \sim \Del_\delta(\bx)$,
the \emph{a posteriori} distribution of $\bx$ given $y$ is as follows:

\begin{enumerate}

\item Draw a uniform random $m$-element subset $\bS \sim {[n] \choose [m]}$ of $[n]$ (say $\bS = \{\bs_1,\dots,\bs_m\}$ where $1 \leq \bs_1 < \cdots < \bs_m \leq n$);

\item For each $i \in [m]$ set $\bx_{\bs_i}=y_i$ (i.e. fill in the locations in $\bS$ from left to right with the  bits of $y$), and for each $j \notin \bS$ set $\bx_j$ to an independent uniform element of $\zo$.
\end{enumerate}
\end{observation}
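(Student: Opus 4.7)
The plan is to verify the claim by a direct application of Bayes' rule, comparing the posterior probability of each string $x$ with the probability that the two-step sampling process outputs $x$.

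First I would compute $\Pr[\by = y \mid \bx = x]$. Viewing the trace as generated by a uniform random deletion set $\bD \subseteq [n]$ with $|\bD|$ distributed as $\Bin(n,\delta)$, we have $\by = x_{[n]\setminus \bD}$. The event $\by = y$ requires $|\bD| = n-m$ (so the surviving set $S := [n]\setminus\bD$ has size $m$) and $x_S = y$. Since each specific $(n-m)$-element subset occurs with probability $\delta^{n-m}(1-\delta)^m$, we get
\[
  \Pr[\by = y \mid \bx = x] \;=\; \delta^{n-m}(1-\delta)^m \cdot N(x,y),
\]
where $N(x,y)$ denotes the number of $m$-element subsets $S \subseteq [n]$ with $x_S = y$ (i.e.\ the number of embeddings of $y$ as a subsequence of $x$).

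Next I would invoke Bayes' rule. Since $\bx$ is uniform on $\zo^n$, for every $x \in \zo^n$,
\[
  \Pr[\bx = x \mid \by = y] \;=\; \frac{2^{-n}\cdot \Pr[\by = y \mid \bx = x]}{\Pr[\by = y]} \;\propto\; N(x,y),
\]
with the proportionality constant depending only on $y$, $n$, $m$, and $\delta$.

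Finally I would compute the probability that the two-step procedure in the observation outputs a given $x$. For fixed $x$, each realization of the process that produces $x$ corresponds to choosing some $m$-subset $\bS$ with $x_{\bS} = y$ (contributing $1/\binom{n}{m}$) and then setting the remaining $n-m$ coordinates to match $x$ (contributing $2^{-(n-m)}$). Summing over valid choices of $\bS$ gives
\[
  \Pr[\text{process outputs } x] \;=\; \frac{N(x,y)}{\binom{n}{m}\,2^{\,n-m}},
\]
which is again proportional to $N(x,y)$. Since both distributions are supported on $\zo^n$ and assign probabilities proportional to the same function $N(x,y)$, they must coincide. There is no real obstacle here; the only thing to be careful about is noting that distinct retention sets $S$ can produce the same trace, which is why the factor $N(x,y)$ (rather than $1$) appears in both computations.
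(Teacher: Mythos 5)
Your proof is correct. The paper presents this as an unproved observation, and your Bayes-rule argument — showing both $\Pr[\bx = x \mid \by = y]$ and the probability of the two-step process outputting $x$ are proportional to the embedding count $N(x,y)$ — is exactly the natural verification one would write down; the key detail (that the common factor is $N(x,y)$ rather than $1$, since distinct retention sets can yield the same trace) is correctly identified and handled.
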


We write ``$\bx \sim y$'' to indicate that $\bx$ has the distribution described above. We note that a somewhat counterintuitive corollary of \Cref{ob:post1} is the following: in the average-case setting (when $\bx$ is uniform random), even if the received trace is the string $1^m$, the \emph{a posteriori} distribution of the $n-m$ ``unseen bits'' of $\bx$ is that they are independent and uniform random.

An easy corollary of \Cref{ob:post1} is the following:

\begin{corollary}  \label{cor:uniform}  
For $\bx$ a uniform random source string from $\zo^n$,
given any fixed outcome $y \in \zo^m$ of a single trace $y =\by \sim \Del_\delta(\bx)$,
the \emph{a posteriori} distribution of the other $n-|y|$ bits $\bx_{\bD}$ of $\bx$ is that they are distributed as a uniform random element of $\zo^{[n] \setminus |y|}$.
\end{corollary}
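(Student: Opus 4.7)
The corollary is essentially an immediate consequence of \Cref{ob:post1}, so the plan is to unpack that observation and then read off the stated marginal distribution.

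First, I would invoke \Cref{ob:post1} to replace the phrase ``the \emph{a posteriori} distribution of $\bx$ given $y$'' with the explicit two-step sampling procedure: pick a uniform random $m$-subset $\bS \sim \binom{[n]}{m}$, place the bits of $y$ in the positions of $\bS$ in left-to-right order, and fill each position $j \notin \bS$ with an independent uniform bit. Second, I would identify the ``set of deletion locations'' $\bD$ that appears in the corollary with the complement $[n] \setminus \bS$ of $\bS$ under this sampling: because we are conditioning on $\by = y$ of length $m = |y|$, the random set $\bD$ has size exactly $n - m$, and from the (equivalent) sampling procedure of \Cref{ob:post1} it is distributed as a uniform random $(n-m)$-subset of $[n]$.

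The key observation to exploit is then that in step~2 of the sampling procedure, the bits placed at positions outside $\bS$ are drawn independently and uniformly from $\zo$ \emph{regardless of which $m$-subset $\bS$ was chosen}. Therefore, conditional on $\bS = S$ for any fixed $m$-subset $S$, the tuple $(\bx_j)_{j \in [n]\setminus S}$ (read in the natural left-to-right order on $[n]\setminus S$) is uniform on $\zo^{n-m}$. Since this conditional distribution does not depend on $S$, the same uniformity holds after averaging over $\bS$, which is exactly the statement of the corollary.

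There is no real obstacle here; the only thing to be a bit careful about is the indexing convention (i.e.\ that $\bx_{\bD}$ is interpreted as the length-$(n-|y|)$ string obtained by reading off $\bx$ at the positions of $\bD$ in increasing order) so that the ``uniform on $\zo^{[n]\setminus |y|}$'' phrasing in the statement matches the per-coordinate independence produced by step~2 of \Cref{ob:post1}. Once that identification is made, the proof is a one-line deduction.
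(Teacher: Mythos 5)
Your proposal is correct and matches the paper's intent: the paper presents \Cref{cor:uniform} as an ``easy corollary'' of \Cref{ob:post1} with no written proof, and unpacking that observation to read off the marginal of $\bx_{\bD}$ (noting that the unseen bits are independent uniform regardless of the choice of $\bS = [n]\setminus \bD$) is exactly the intended one-line deduction.
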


\subsection{One-trace and few-trace algorithms.} \label{sec:alg-performance-notation}

\noindent {\bf Optimal worst-case algorithms.}
We introduce the notation
$L_{1,\worst}(\delta,n)$ to denote the largest possible $\LCS$ that can be achieved in expectation by any one-trace algorithm under deletion rate $\delta$ in the worst-case setting, i.e.
\begin{equation} \label{eq:worst-case-one-trace}
L_{1,\worst}(\delta,n) :=  \max_{A} \min_{x \in \zo^n} \Ex_{\by \sim \Del_{\delta}(x)}
[|\LCS(A(\by,n),x)|],
\end{equation}
where the maximum is taken over all algorithms $A$ that take as input the values $n, \delta$ and a single trace $\by$, and output an $n$-bit hypothesis string (denoted $A(\by,n)$ in the expression above).   We observe that (\ref{eq:worst-case-one-trace}) could be extended to allow the algorithm $A$ to be randomized (and have the expectation be also over the randomness of $A$), but we do not do this since the optimal algorithm in (\ref{eq:worst-case-one-trace}) can without loss of generality be taken to be deterministic.

We will sometimes consider the optimal performance of $t$-trace algorithms for $t>1$, so we extend the above definition in the obvious way to algorithms that are given $t$ independent traces, i.e.
\begin{equation} \label{eq:worst-case-t-traces}
L_{t,\worst}(\delta,n) :=  \max_{A} \min_{x \in \zo^n} \Ex_{\by^{(1)},\dots,\by^{(t)} \sim \Del_{\delta}(x)}
[|\LCS(A(\by^{(1)},\dots,\by^{(t)},n),x)|].
\end{equation}

\noindent {\bf Optimal average-case algorithms.}
We use similar notation to capture the optimal performance of one-trace and $t$-trace algorithms in the average-case setting:
\begin{equation} \label{eq:average-case-one-trace}
L_{1,\avg}(\delta,n) :=  \max_{A} \Ex_{\bx \sim \zo^n} \Ex_{\by \sim \Del_{\delta}(\bx)}
[|\LCS(A(\by,n),\bx)|],
\end{equation}

\begin{equation} \label{eq:avg-case-t-traces}
L_{t,\avg}(\delta,n) :=  \max_{A} \Ex_{\bx \sim \zo^n} \Ex_{\by^{(1)},\dots,\by^{(t)} \sim \Del_{\delta}(\bx)}
[|\LCS(A(\by^{(1)},\dots,\by^{(t)},n),\bx)|].
\end{equation}


\section{Worst-case one-trace reconstruction, small retention rate} \label{sec:worst-case-small-rho}

\subsection{An efficient algorithm}

%
We prove \Cref{thm:worst-case-small-rho-informal} in this subsection. 
We start with the first part of \Cref{thm:worst-case-small-rho-informal}, i.e., 
  when the retention rate $\rho$ is large enough ($\omega(\log (n)/n)$) 
  that a nontrivial number of bits are expected to be present in a random trace, 
  then a simple computationally efficient one-trace algorithm can achieve an LCS significantly better than $n/2$.

\subsubsection{A useful structural result and a $(2/3 - o(1))$-LCS algorithm for $\rho = \omega(\log(n)/n)$}\label{sec:LCSalgo}

It is helpful for us to consider the following preliminary problem: we are not given any traces, and the goal is to output \emph{a list} of $m$-bit candidates such that the unknown source string $x \in \zo^n$ has large LCS with one of the candidate strings in our list.
This motivates the following definition:

\begin{definition}[LCS-cover] \label{def:lcs-cover}
Let $m$ and $n$ be two positive integer.
We say a set $S \subseteq \zo^m$ is an \emph{$h$-LCS cover} for strings of length $n$ if for every $x \in \zo^n$ we have 
\[
  \abs[\big]{\LCS(S, x)}
  := \max_{s \in S}\hspace{0.04cm} \abs[\big]{\LCS(s, x)}
  \ge h .
\]
\end{definition}

The following simple claim shows that when $m$ is within a factor of two of $n$, there is a (perhaps surprisingly) good LCS cover consisting of at most two strings:

\begin{claim} \label{claim:lcs-cover}
  For every $m \in [n/2, 2n]$, there exists a $( (n+m)/3)$-LCS-cover of size at most $2$.
\end{claim}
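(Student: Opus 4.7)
The plan is to exhibit an explicit $2$-element LCS-cover and verify that it covers every $x\in\{0,1\}^n$. Set $k := \lfloor (2m-n)/3\rfloor$ and $\ell := m - 2k$, and define
\[
s_1 := (01)^{k}\, 0^{\ell}, \qquad s_2 := (10)^{k}\, 1^{\ell}.
\]
Both strings have length $m$; $s_1$ contains $k+\ell = \lceil (n+m)/3\rceil$ zeros and $k$ ones, and $s_2$ is its bit-flipped mirror. Write $a := |x|_0$ and $b := |x|_1$. First I would split on the composition of $x$. If $a \ge (n+m)/3$, matching the $(n+m)/3$ zeros of $s_1$ greedily (left-to-right) against zeros of $x$ yields $|\LCS(s_1,x)|\ge(n+m)/3$; the symmetric case $b\ge (n+m)/3$ is handled by $s_2$. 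The remaining case is $a,b < (n+m)/3$; since $a+b=n$, this forces $a,b > (2n-m)/3 = \ell$, so $x$ has strictly more than $\ell$ of each bit.

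In this remaining case I would construct a two-stage matching into whichever of $s_1,s_2$ is better suited. For $s_2$, Stage~I greedily embeds as long an alternating prefix $(10)^j$ of $s_2$ as possible into an initial segment $x[1{:}p]$ (contributing $2j$ matched positions), and Stage~II matches the trailing $1^{\ell}$ into $\ell$ ones of $x$ appearing after position $p$ (contributing $\ell$ more). The symmetric matching is used for $s_1$. The aim is to show that one of these achieves $2j + \ell \ge k + \ell = (n+m)/3$. The core step is the structural dichotomy: whenever $a,b > \ell$, there is a split point $p \in [0,n]$ such that either (i) $z(p) \ge k$ and $b - o(p) \ge \ell$ (so $s_2$'s two-stage matching succeeds with $j=k$), or (ii) $o(p) \ge k$ and $a - z(p) \ge \ell$ (so $s_1$ succeeds). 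Here $z(p), o(p)$ denote the counts of zeros and ones of $x$ in $x[1{:}p]$. I would prove the dichotomy by a monotone sweep on $p$: the potentials $\Phi_1(p) := z(p) + (b - o(p))$ and $\Phi_2(p) := o(p) + (a - z(p))$ satisfy $\Phi_1(p)+\Phi_2(p)=n$ identically, so the larger one is always $\ge n/2$; tracking how $\Phi_1,\Phi_2$ change with $p$ and using $a,b > \ell$ lets me locate a $p$ meeting both the potential threshold and the side constraint on $z(p)$ or $o(p)$.

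The main obstacle I expect is the balanced, highly alternating subcase (e.g.\ $x = (01)^{n/2}$ at $m=n$), in which both potentials are nearly equal to $n/2$ throughout and neither side constraint in the dichotomy is "loose.'' Here I would recover by observing that the full alternating prefix $(01)^{k}$ of $s_1$ (resp.\ $(10)^{k}$ of $s_2$) embeds in an initial alternating block of $x$---which has at least $k$ alternations as soon as $\min(a,b) > \ell$---giving $2k$ matched positions from Stage~I alone, and the trailing block then locks onto the residual $\ge \ell$ majority-bit positions of $x$ guaranteed by $a,b>\ell$. Integer rounding in the definitions of $k$ and $\ell$ costs at most an additive constant, which is absorbed by the non-integrality of the target $(n+m)/3$.
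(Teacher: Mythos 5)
Your construction is fine: the two strings $s_1 = (01)^k 0^\ell$ and $s_2 = (10)^k 1^\ell$ really do form a $((n+m)/3)$-LCS-cover, and this is a close variant of the paper's cover $\{(01)^a 0^b,\,(01)^a 1^b\}$ (you replace the second codeword by the bitwise complement of the first, which is equally valid). However, the verification argument you propose has a genuine gap: the ``structural dichotomy'' in your third case is false, and the fallback you describe for it is false as well.

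Concretely, take $n=9$, $m=12$ (so $k=5$, $\ell=2$, $(n+m)/3=7$) and $x = 000111111$. Then $a=|x|_0=3<7$ and $b=|x|_1=6<7$, so we are in your third case, and indeed $a,b>\ell=2$. But branch (i) of your dichotomy requires $z(p)\ge k=5$, which is impossible since $x$ has only $3$ zeros; and branch (ii) requires $o(p)\ge 5$ together with $z(p)\le a-\ell=1$, yet $o(p)\ge 5$ forces $p\ge 8$, at which point $z(p)=3>1$. So no split point $p$ satisfies either branch. Your recovery --- that $\min(a,b)>\ell$ guarantees at least $k$ disjoint $01$-alternations in $x$, so $(01)^k$ (or $(10)^k$) embeds in full --- is also false on this $x$: it has only $\min(a,b)=3$ disjoint $01$'s, not $5$. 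The cover still works here ($|\LCS(x,s_2)|\ge 7$ via the common subsequence $0011111$); it is only the proof that breaks.

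The repair is much simpler than the sweep-and-potential argument and is exactly the paper's proof: write $x = x_1 \circ x_2$ with $|x_1|=k$ and $|x_2|=n-k$. Both $(01)^k$ and $(10)^k$ contain \emph{every} $k$-bit string as a subsequence (pick position $2i-1+z_i$ for bit $z_i$), so $x_1$ embeds fully into the alternating prefix of whichever of $s_1,s_2$ you choose, contributing $k$. The suffix $x_2$ has length $n-k\approx 2\ell$, so by pigeonhole it has at least $\ell$ zeros or at least $\ell$ ones, and the appropriate constant tail $0^\ell$ or $1^\ell$ contributes $\ell$ more, for a total of $k+\ell=(n+m)/3$. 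No case split on $(|x|_0,|x|_1)$ and no potential functions are needed. One further caveat on your rounding: with $k=\lfloor(2m-n)/3\rfloor$ the suffix length $n-k$ can be strictly smaller than $2\ell$, so the pigeonhole step does not automatically give $\ell$; ``absorbed by non-integrality'' needs to be checked rather than asserted, and it is cleaner to follow the paper and parametrize by $a=(2m-n)/3$ and $b=(2n-m)/3$ directly.
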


\begin{proof}
  We first consider the extreme settings of $m = 2n$ and $m=n/2$.
  When $m = 2n$, we~have $\abs{\LCS((01)^n, x)} = n$ for every $x \in \zo^n$, and thus $\{ (01)^n \}$ is an $n$-LCS cover (of size $1$).
  When $m = n/2$, every $x \in \zo^n$ either contains $n/2$ many 1s or this many 0s, and so either $\abs{\LCS(0^{n/2}, x)} \ge n/2$ or $\abs{\LCS(1^{n/2}, x)} \ge n/2$, and hence the set $\{0^{n/2}, 1^{n/2}\}$ is a $(n/2)$-LCS cover of size $2$.

We interpolate between these two cases to handle general $m$'s.
  Write $m = 2a + b$ and $n = a + 2b$ for some $a$ and $b$ (so $a = m - (n+m)/3$ and $b = n - (n+m)/3$).
  Consider
\begin{equation} \label{eq:LCS-cover}
    S := \bigl\{ (01)^a 0^b, (01)^a 1^b \bigr\} \subseteq \zo^m .
\end{equation}
  Given $x \in \zo^n$, we can write $x = x_1 \circ x_2$ where $x_1 \in \zo^a$ and $x_2 \in \zo^{2b}$, and we get that
  \[
    \abs[\big]{\LCS(S, x)}
    \ge \abs[\Big]{\LCS\bigl( (01)^a, x_1 \bigr)} + \abs[\Big]{\LCS\bigl( \{0^b, 1^b\}, x_2 \bigr)}
    \ge a + b = (n+m) / 3 . \qedhere
  \]
\end{proof}

We observe that taking $m=n$ in \Cref{eq:LCS-cover}, we have a $(2n/3)$-LCS cover consisting of the two strings $(01)^{n/3}0^{n/3}$ and $(01)^{n/3}1^{n/3}$.  This suggests a one-trace algorithm that returns an $n$-bit string $\wh{x}$ that achieves $\smash{\abs{\LCS(\wh{x}, x)} \ge (2/3 - o(1)) n}$ with high probability when $\rho=\omega(\log n/n)$:
to determine which one of the two $n$-bit strings $\smash{(01)^{n/3}0^{n/3}, (01)^{n/3}1^{n/3}}$ to output, it simply needs to determine (with high probability) from the trace $\by\sim\Del_\delta(x)$ whether the majority of the last $2n/3$ bits of the unknown $x$ is $0$ or $1$, which can be done (to accuracy $o(1)$) by simply taking the majority of the last $2\rho n/3$ bits of $y$.
A routine computation now gives the first sentence of \Cref{thm:worst-case-small-rho-informal}.

We further note that the simple $(2n/3)$-LCS cover given by $\{(01)^{n/3}0^{n/3}, (01)^{n/3}1^{n/3} \}$ is essentially  best possible among all covers of constant size; more precisely, 
  for any positive constant $\eps$, any $(2/3+\eps)n$-LCS cover 
  must have size $\Omega(\log n)$.
This is a consequence of a recent result of Guruswami, Haeupler, and Shahrasbi \cite{GHS20}; we give the proof in \Cref{ap:bestLCScover}.

\subsubsection{A $(2/3+\Omega(\rho))$-LCS algorithm for $\rho=\omega(1/n^{1/3})$}

 Next we prove the second part of  \Cref{thm:worst-case-small-rho-informal}.
 It follows from the following theorem:

\begin{theorem} [Worst-case algorithm, small retention rate] 
\label{thm:worst-case-small-rho}
There 
exists an absolute constant $c > 0$ 
such that the following holds. 
Let the retention rate $\rho := \rho(n) = 1-\delta(n)$ such that $\rho=\omega(1/n^{1/3})$.
There is an $O(n)$-time algorithm $A$ which is given as input the values $n, \delta$, and a single trace $\by \sim$ $ \Del_\delta(x)$, where $x \in \zo^n$ is an unknown source string.
With probability at least $1 - e^{-\Omega( \rho^3 n)}$ over the randomness of $\by\sim \Del_\delta(x)$, $A$ outputs a hypothesis string $\wh{\bx} \in \zo^n$ satisfying
\[
  \abs[\big]{\LCS(\wh{\bx},x)}
  \ge (2/3 + c \rho ) \cdot n .
\]
\end{theorem}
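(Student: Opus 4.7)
The plan is to perform a win-win analysis based on the number of long runs in the trace $\by$. First, I would use \Cref{claim:neg-binomial} to show that $|\by|$ concentrates around $\rho n$, so that I may assume $|\by| = (1\pm o(1))\rho n$ throughout. Next, I would fix a threshold parameter $L=L(\rho)$ (to be tuned, likely $L = \Theta(1/\rho)$) and count the number of runs of length $\geq L$ in $\by$; the algorithm branches on whether this count is above or below some threshold proportional to $\rho^2 n$.

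In Case A (many long runs in $\by$), a long trace run of length $\geq L$ arises, with high probability, from a source run of length $\geq L/\rho$, and its approximate location in $x$ is identifiable from the position of the trace run. The algorithm would construct $\wh{\bx}$ by interleaving the baseline $(01)$-alternation with constant blocks $b_i^{c_i}$ placed at the identified source-run positions (using the bit read off from the trace). Each long source run then contributes its full length to the LCS via its corresponding constant block, while the $(01)$-alternation parts match the rest of $x$ at least as well as the baseline cover of \Cref{claim:lcs-cover}; a careful accounting shows the total gain over $2n/3$ is $\Omega(\rho n)$.

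In Case B (few long runs in $\by$), the trace $\by$ consists predominantly of short runs, and hence has many transitions. Each transition in $\by$ arises from a transition in $x$ (since deletions can only merge adjacent runs, not split them), so $x$ must have many transitions, giving it a large longest alternating subsequence. The algorithm then outputs $\wh{\bx}=(01)^{n/2}$ (or a closely related pattern that inherits the baseline cover's $2n/3$ bound while additionally exploiting the alternation of $x$), and an LCS of at least $2n/3 + \Omega(\rho n)$ follows from a direct calculation relating the LAS of $x$ to its run count.

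The hard part will be Case A: correctly identifying the long source runs from the trace (accounting for noise in both trace-run lengths and positions), and constructing the hypothesis so that the $\Omega(\rho n)$ gain actually materializes without losing baseline matches elsewhere. The required concentration inequalities---Chernoff-style bounds on sums of run-length indicators across appropriate subregions of the trace---should yield the stated $1 - e^{-\Omega(\rho^3 n)}$ error probability, and the constraint $\rho = \omega(1/n^{1/3})$ is precisely what makes $\rho^3 n \to \infty$, ensuring that these concentration bounds kick in.
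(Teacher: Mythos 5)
Your high-level win-win strategy matches the paper's, but your dichotomy is the wrong one and your Case~B reasoning has a genuine gap. You claim that if the trace has few runs of length $\geq L = \Theta(1/\rho)$, then "the trace $\by$ consists predominantly of short runs, and hence has many transitions." This is false: the trace $0^{\rho n/2}1^{\rho n/2}$ has exactly two runs of length $\geq L$ (far below your $\Theta(\rho^2 n)$ threshold, so your Case~B applies), yet it has only one transition, and the natural source $0^{n/2}1^{n/2}$ has LCS only about $n/2$ with $(01)^{n/2}$ --- below even the $2n/3$ baseline. The complement of "many long runs" is simply not "many transitions." The paper's dichotomy is shaped differently and avoids this: it divides a prefix of the trace into blocks of \emph{absolute constant} size ($2000$ bits), and branches on whether most blocks are "pure" ($0^{2000}$ or $1^{2000}$). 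Many impure blocks genuinely forces many disjoint $01$'s (each impure block contributes one); many pure blocks genuinely forces, with high probability per block, a source region dominated by a single bit. Your coarse $L=\Theta(1/\rho)$ run-length test cannot separate the cases correctly.

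You are also missing the structural device that makes the construction clean, which is precisely the difficulty you flagged about "losing baseline matches elsewhere." The paper only modifies the construction on a \emph{prefix}: it looks at the first $\rho n/3$ bits of the trace (roughly the first $n/3$ bits of $x$), obtains the $\Omega(\rho n)$ improvement there, and keeps the remaining $2n/3$ of $\wh{\bx}$ in the same form as the baseline cover of \Cref{claim:lcs-cover}, so no global interleaving is needed. Moreover, in the "many transitions" case the concrete mechanism for cashing in the gain is \Cref{fact:alternate}: since $x^{(1)}$ contains $\Omega(\rho n)$ disjoint $01$'s, one can match all of $x^{(1)}$ with a \emph{shorter} alternation $(01)^{c}$, $c = n/3 - \Omega(\rho n)$, and reallocate the saved length to a $(01)^a z^b$ suffix that adds $\Omega(\rho n)$ to the LCS. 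Without this (or some substitute), your sketch of Case~B does not produce the required gain even after fixing the dichotomy.
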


An easy computation using the high-probability bound provided by \Cref{thm:worst-case-small-rho} shows that if $\rho \geq \omega(1/n^{1/3})$, then we get that $L_{1,\worst}(\delta,n) \geq (2/3 + \Omega(\rho)) \cdot n$, giving the bound on expected LCS that is claimed in \Cref{thm:worst-case-small-rho-informal}.

The algorithm for \Cref{thm:worst-case-small-rho} improves on the $(2/3-o(1))n$-LCS algorithm
  described in \Cref{sec:LCSalgo}.
The high-level idea is to do better than the $(n+m)/3$ benchmark given by \Cref{claim:lcs-cover} on 
  the $(n/3)$-prefix $x^{(1)}$ of $x$.
For intuition, suppose we could find an $\wh{x}^{(1)} \in \zo^\ast$ such that $$\abs[\big]{\LCS(\wh{x}^{(1)}, x^{(1)})} \ge \frac{\abs{\wh{x}^{(1)}} + \abs{x^{(1)}}}{3} + \eps n.$$
Then we could potentially apply the approach of the one-trace algorithm from the previous subsection on the remaining bits of $x$, and outputs $\wh{x}$ that extends $\wh{x}^{(1)}$ to achieve an LCS of roughly 
\[
  \frac{ \abs{\wh{x}^{(1)}} + \abs{x^{(1)}} }{3} + \eps n
  + \frac{(n - \abs{\wh{x}^{(1)}}) + (n - \abs{x^{(1)}})}{3}
  = \frac{2n}{3} + \eps n .
\]

We now discuss how to beat the $(n+m)/3$ benchmark in more detail.
Let $L=[\rho n/3]$ and $y_L$ be the $(\rho n/3)$-length prefix of the trace $y$.
We divide $y_L$ into blocks of size $2000$.
If a block~contains only 0s, then it is very likely (probability at least, say, 0.9) that there is a corresponding subword in $x$ of size about $2000/\rho$ that contains mostly 0s; such a subword has large LCS (say, at least $1999/\rho$) with the string $0^{2000/\rho}$.
So if most blocks contain only 0s or only 1s (Case 2 in the description of Algorithm $A$ given below), then by outputting an $\wh{x}^{(1)}$ which is a corresponding sequence of $0^{2000/\rho}$'s and $1^{2000/\rho}$'s, such an $\wh{x}^{(1)}$ will have an LCS with $x^{(1)}$ that is much larger than $(\abs{x^{(1)}} + \abs{\wh{x}^{(1)}})/3$.

On the other hand, if most blocks contain both a 0-bit and a 1-bit, then we know that the string $x^{(1)}$ must alternate between 0s and 1s at least $t := \Omega(\rho n)$ times. In this case (Case 1 in the algorithm description), we can use the shorter string $(01)^{n/3 - t}$ to achieve an LCS of size $n/3$ with $x^{(1)}$, which also gives us an $\Omega(\rho n)$ savings.

The rest of \Cref{sec:worst-case-small-rho} gives a formal proof of \Cref{thm:worst-case-small-rho}.

\subsubsection{The Algorithm $A$}
In this subsection we describe the algorithm $A$ to prove \Cref{thm:worst-case-small-rho}.
Let $\gamma:=\rho/720000$. We show that given a trace $\by\sim\Del_\delta(x)$ for any
  unknown $x\in \{0,1\}^n$, the algorithm $A$ returns $\wh{\bx}$ satisfying
\begin{equation}\label{eq:hehe111}
\big|{\LCS(x, \wh{\bx})}\big| \ge \frac{2n}{3} + \frac{\rho n}{90000} - 4 \gamma n
\end{equation}
with probability at least $1-e^{-\Omega(\gamma^2 \rho n)}$.
Setting $c=1/180000$ in \Cref{thm:worst-case-small-rho}
  finishes the proof.

Given a trace $y$ of $x \in \zo^n$,
  $A$ outputs $\wh{x}:=0^n$ if its input trace $y$ has $\abs{y} < (\rho - \gamma) n$.
We refer to this case as Case 0; 
  henceforth we will assume $\abs{y} \ge (\rho-\gamma) n$ below.

Let $L := [\rho n/3]$ 
  and $y_L$ be the first $\rho n /3$ bits of $y$.
Divide $y_L$ into $B := \rho n/6000$ many blocks $y_{L_1}, \ldots, y_{L_B}$ of length $2000$   each (so $L_i := \{2000(i-1) + 1, \ldots, 2000i\}$).
Algorithm $A$ identifies the $y_{L_i}$'s that contain only $0$s or only $1$s.
Specifically, let
\[
  B' := \bigl\{ i \in [B]: \text{$y_{L_i} = z_i^{2000}$ for some $z_i \in \zo$} \bigr\} .
\]
There are two cases:
\begin{flushleft}
\begin{enumerate}
\item[] \noindent{\bf Case 1: $\abs{B'} < 0.8B$.} (In this case, a significant number of blocks are ``not pure.'') 
Let
\begin{align} \label{eq:params-abc}
  c := \frac{n}{3} - \frac{\rho n}{60000},\quad a := \frac{\rho n}{45000}\quad\text{and}\quad b := \frac{n}{3} - \frac{\rho n}{90000} .
\end{align}
Let 
  $z \in \zo$ be the majority of the last $2\rho b$ bits of $y$.
$A$ outputs the $n$-bit  
  $\smash{\wh{x} := (01)^{c + a} z^{b}}$.
\item[] \noindent{\bf Case 2: $\abs{B'} \ge 0.8 B$.}
(Most blocks are ``pure.'')
Let 
  $z\in \{0,1\}$ be the majority of the 
  last $2\rho n/9$ bits of $y$. 
  $A$ outputs the following $n$-bit string
\[
  \wh{x} := \wh{x}^{(1)} \circ (01)^{2n/9}\circ  z^{2n/9},
\]
where $\wh{x}^{(1)}$ is the concatenation of $z_i^{2000/\rho}$ 
  for each $i\in [B]$ with $z_i$ being the bit such that $y_{L_i}=z_i^{2000}$ when $i\in B'$ and $z_i=0$ when $i\notin B'$ 
  (so $\wh{x}^{(1)}$ has length $n/3$).
\end{enumerate}
\end{flushleft}
\subsubsection{Analysis of Algorithm $A$}
Let $x\in \{0,1\}^n$ be the unknown source string. We start by describing an equivalent process of drawing $\by\sim\Del_\delta(x)$.
Let $x_\infty$ be the infinite string obtained from $x$ by padding infinitely many copies of a special symbol $\ast$ at the end.
Consider sampling an infinite subsequence $\by_\infty$ of $x_\infty$ by the following infinite process:
For each round $j=1,2,\dots$, we sample a prefix $\bx^j$ of $x_\infty$ of length $\abs{\bx^j} \sim \Geo(\rho)$, then output the last bit of $\bx^j$ as the $j$-th bit of $\by_\infty$ and delete the prefix $\bx^j$ from $x_\infty$ before moving on to the next value of $j$.
Finally, we set $\by$ to be the 
  longest prefix of $\by_\infty$ that does not contain any special symbol $\ast$.
It is easy to check that $\by$ drawn from this process is identically distributed to $\Del_\delta(x)$.

We introduce some notation for working with $\bx^j$ as a byproduct of the above random process of drawing $\by\sim\Del_\delta(x)$.
For a subset $S \subseteq \N$ (e.g., $L$ introduced in the description of the algorithm), we write $\bx^S$ to denote the concatenation of $\bx^j: j\in S$, where $\bx^j$ is the prefix drawn in the $j$-th round.
Note that the string $\bx^{[\abs{\by}]}$ does not necessarily contain the source string $x$ (it may not contain some of its last few bits) but the string $\bx^{[\abs{\by}+1]}$ always contains $x$ as a prefix.

Let $\by\sim\Del_\delta(x)$ be a trace drawn using the process above, and let 
  $\wh{\bx}$ be the string returned by the algorithm $A$ when running on $\by$.
We say $A$ succeeds (on $\by$) if $\wh{\bx}$ satisfies \Cref{eq:hehe111}
  and $A$ fails otherwise.
It suffices to show that all three probabilities $\Pr_{\by\sim\Del_\delta(x)}[\by\ \text{in Case $0$}]$,
$$
\Pr_{\by\sim \Del_{\delta}(x)}\big[\by\ \text{in Case $1$ and $A$ fails}\big]\quad\text{and}\quad
\Pr_{\by\sim \Del_{\delta}(x)}\big[\by\ \text{in Case $2$ and $A$ fails}\big]
$$
are at most $e^{-\Omega(\gamma^2\rho n)}$. The upper bound for Case 0 follows 
  by the Chernoff bound (which is indeed $e^{-\Omega(\rho n)}$).
Below we analyze the two main cases of the algorithm separately.

\paragraph{Case 1: $\abs{B'} < 0.8 B $.}

Recall from the description of $A$ that we are in Case 1 if $\by\sim \Del_\delta(x)$ has length at least $(\rho-\gamma)n$ and 
  $|\bB'|<0.8 B$. 
Recall  
  from \Cref{eq:params-abc} our choices of $a,b$ and $c$, 
and let $\bz\in\zo$ be the majority of the last $2\rho b$ bits in $\by$.
We partition $x$ into $x^{(1)}\circ x^{(2)}\circ x^{(3)}$ with 
$$
\abs{x^{(1)}} = n/3,\quad \abs{x^{(2)}} = a \quad\text{and}\quad \abs{x^{(3)}} = 2b.
$$

Our goal is to show that 
$$
\Pr_{\by\sim \Del_{\delta}(x)}\bigl[\by\ \text{in Case $1$ and $A$ fails} \bigr]\le e^{-\Omega(\gamma^2\rho n)}.
$$
This follows from the following two claims:
Let $E_1$ denote the event of $|\bx^L|\ge n/3+\gamma n$ and $E_2$ denote the event of 
  $\bz$ appearing less than $b-\gamma n$ many times in $x^{(3)}$.

\begin{claim} \label{claim:events-case1}
For any string $x\in \{0,1\}^n$, we have 
$$
\Pr_{\by\sim \Del_{\delta}(x)}\bigl[\by\ \text{in Case $1$} \land ( E_1\lor E_2) \bigr]\le e^{-\Omega(\gamma^2\rho n)}. 
$$ 
\end{claim}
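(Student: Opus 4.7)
The plan is to handle $E_1$ and $E_2$ separately via a union bound: $\Pr[\text{Case 1} \land (E_1 \lor E_2)] \leq \Pr[E_1] + \Pr[E_2]$, and show each is at most $e^{-\Omega(\gamma^2 \rho n)}$. Neither half will use the Case 1 conditioning.

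For $E_1$: in the random process described in the proof, $|\bx^L| = \sum_{j=1}^{\rho n/3} |\bx^j|$ is a sum of $\rho n/3$ i.i.d.\ $\Geo(\rho)$ random variables with expectation $n/3$. I would apply \Cref{claim:neg-binomial} with $m := \rho n/3$ and relative slack $3\gamma$ (so that the absolute slack $3\gamma\cdot\rho^{-1}m$ equals $\gamma n$), which gives $\Pr[E_1] \leq e^{-\Omega(\gamma^2 \rho n)}$ directly.

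For $E_2$: I would switch to the equivalent formulation $\by = x_{[n] \setminus \bD}$, where $\bD$ is a random deletion set that contains each element of $[n]$ independently with probability $\delta$. First observe that $E_2$ can only occur if some bit $z^\ast \in \zo$ appears strictly more than $b + \gamma n$ times in $x^{(3)}$ (otherwise every bit of $\zo$ appears at least $b-\gamma n$ times, making $E_2$ vacuous), and in that case $E_2$ holds iff $\bz = 1 - z^\ast$. So it suffices to show that with probability $1 - e^{-\Omega(\gamma^2 \rho n)}$, the majority of the last $2\rho b$ bits of $\by$ equals $z^\ast$. Let $\by^{(3)}$ denote the subsequence of $\by$ contributed by positions $[n-2b+1 : n]$ of $x$ (i.e., by $x^{(3)}$), and denote the counts in $x^{(3)}$ by $b_{z^\ast} - b_{1-z^\ast} \geq 2\gamma n$. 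I would establish two concentration facts via Bernstein's inequality: (a) $\bigl||\by^{(3)}| - 2\rho b\bigr| \leq \rho \gamma n /4$, using that $|\bD \cap [n-2b+1:n]| \sim \Bin(2b,\delta)$ has variance at most $2\rho n$; and (b) the signed count (count of $z^\ast$ minus count of $1-z^\ast$) in $\by^{(3)}$ is at least $\rho \gamma n$, since it is a sum of independent $\{-1,0,+1\}$-valued variables of mean at least $2\rho \gamma n$ and total variance at most $2\rho n$. Since the last $2\rho b$ bits of $\by$ differ from $\by^{(3)}$ by at most $\bigl||\by^{(3)}| - 2\rho b\bigr|$ bits (either truncated from the front of $\by^{(3)}$ or added from earlier in $\by$), (a) and (b) together yield a signed count of at least $\rho \gamma n - \rho \gamma n / 4 > 0$ in the last $2\rho b$ bits, forcing $\bz = z^\ast$ and hence contradicting $E_2$.

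The main obstacle will be the transfer from $\by^{(3)}$ to the last $2\rho b$ bits of $\by$: a naive high-probability bound on the discrepancy in their lengths is $\Theta(\gamma b) = \Theta(\gamma n)$, which exceeds the $\Theta(\rho \gamma n)$ signal in $\by^{(3)}$ when $\rho = o(1)$. The crucial observation is that the variance of $\Bin(2b,\delta)$ is only $\Theta(\rho n)$ rather than $\Theta(n)$, so Bernstein concentrates $|\by^{(3)}|$ at the $\sqrt{\rho n}$ scale, and this is what brings the leakage down to $O(\rho \gamma n)$, safely subdominant to the $\Theta(\rho \gamma n)$ signal we have established in $\by^{(3)}$.
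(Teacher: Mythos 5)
Your proof is correct and follows essentially the same route as the paper's: a union bound over $E_1$ and $E_2$, with $E_1$ handled via the negative-binomial concentration in \Cref{claim:neg-binomial}, and $E_2$ handled by multiplicative-Chernoff/Bernstein concentration on the number and composition of surviving $x^{(3)}$ bits (the paper bounds the count of survivors and the count of surviving $z$'s directly; you use the equivalent length + signed-count formulation, but the concentration being used is the same). Your observation that the discrepancy between $\by^{(3)}$ and the last $2\rho b$ bits of $\by$ concentrates at scale $O(\rho\gamma n)$ rather than $O(\gamma n)$ is exactly the point the paper exploits via the Chernoff bound on $\Bin(2b,\rho)$.

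One small imprecision: you state that neither half of the union bound uses the Case~1 conditioning, but the conditioning on Case 1 (in particular $|\by| \ge (\rho-\gamma)n > 2\rho b$) is implicitly needed for $\bz$ to be well-defined when handling $E_2$. This does not affect the validity of the argument, since you bound $\Pr[\textrm{Case 1} \land E_2]$ and the well-definedness is guaranteed on that event.
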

\begin{claim}\label{claim:events-case12}
The algorithm $A$ succeeds whenever $\by\sim \Del_\delta(x)$ satisfies (1) $\by$ falls in Case 1; (2) $\overline{E_1}$: $|\bx^L|<n/3+\gamma n$; and 
  (3) $\overline{E_2}$: $\bz$ appears at least $b-\gamma n$ many times in $x^{(3)}$.
\end{claim}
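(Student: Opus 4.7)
The plan is to show that the output $\wh{\bx} = (01)^{c+a} \circ \bz^{b}$ achieves a large LCS with $x = x^{(1)} \circ x^{(2)} \circ x^{(3)}$ (where $|x^{(1)}| = n/3$, $|x^{(2)}| = a$, $|x^{(3)}| = 2b$) by matching the two halves of $\wh{\bx}$ separately against the corresponding halves of $x$: $(01)^{c+a}$ against $x^{(1)} \circ x^{(2)}$ and $\bz^{b}$ against $x^{(3)}$, and combining via super-additivity of LCS under concatenation.

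The second piece is immediate: by $\overline{E_2}$, $\bz$ appears at least $b - \gamma n$ times in $x^{(3)}$, so $|\LCS(\bz^{b}, x^{(3)})| \ge b - \gamma n$. The heart of the proof is to establish the much stronger claim that $x^{(1)} \circ x^{(2)}$ is itself a subsequence of $(01)^{c+a}$, yielding an LCS contribution equal to its full length $n/3 + a$. Once these two bounds are combined, the total LCS is at least
\[
(n/3 + a) + (b - \gamma n) \ =\ 2n/3 + \rho n/90000 - \gamma n ,
\]
which comfortably exceeds the $2n/3 + \rho n/90000 - 4\gamma n$ required by \Cref{eq:hehe111}.

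To establish the embedding, I would first lower bound the number of transitions (i.e., adjacent differing bits) in $x^{(1)} \circ x^{(2)}$. Since we are in Case 1, $|B'| < 0.8 B$, so at least $0.2 B = \rho n/30000$ of the length-$2000$ blocks of $y_L$ are non-pure; each such block contains at least one internal transition, so $y_L$ has at least $\rho n/30000$ transitions. Under $\overline{E_1}$, $|\bx^L| < n/3 + \gamma n < n/3 + a$ (using $a = 16 \gamma n$), so $\bx^L$ is a prefix of $x^{(1)} \circ x^{(2)}$ and hence $y_L$, being a subsequence of $\bx^L$, is a subsequence of $x^{(1)} \circ x^{(2)}$. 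Because deleting characters from a binary string cannot increase the number of transitions, $x^{(1)} \circ x^{(2)}$ has at least $\rho n/30000$ transitions as well.

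The final ingredient is a structural lemma: any binary string $s$ with $T$ transitions embeds as a subsequence of $(01)^k$ whenever $k \ge |s| - \lfloor T/2 \rfloor$. This can be proved by a greedy embedding in which each adjacent $01$-pair of $s$ is placed into a single $01$-block of $(01)^k$, saving one block per such pair, while the remaining characters each consume their own block. Applying this to $s = x^{(1)} \circ x^{(2)}$ with $|s| = n/3 + a$ and $T \ge \rho n/30000$ gives
\[
|s| - \lfloor T/2 \rfloor \ \le\ n/3 + \tfrac{\rho n}{45000} - \tfrac{\rho n}{60000} \ =\ n/3 + \tfrac{\rho n}{180000} \ =\ c + a ,
\]
so the embedding exists. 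The main subtlety I expect is in verifying this greedy-embedding lemma carefully (including the case analysis on whether $s$ starts with $0$ or $1$); the constants $a, b, c$ in \Cref{eq:params-abc} are tuned so that the transition savings $T/2 \ge \rho n/60000$ exceed the length gap $|x^{(1)} \circ x^{(2)}| - (c+a) = \rho n/180000$ by exactly a factor of $3$.
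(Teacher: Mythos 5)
Your proof is essentially correct and follows the same fundamental strategy as the paper: decompose $x$ and $\wh{\bx}$ into aligned pieces, bound the LCS of each piece separately, invoke super-additivity of LCS, and use a structural fact about embedding binary strings with many transitions into $(01)^k$ (your greedy-embedding lemma is equivalent to the paper's \Cref{fact:alternate}). The one place you diverge is in the decomposition: the paper splits $(01)^{c+a}$ as $(01)^c\circ(01)^a$, matching $(01)^c$ against $x^{(1)}$ (losing $2\gamma n$ because only $0.1B-\gamma n$ of the $01$'s are guaranteed to sit inside the first $n/3$ bits) and trivially matching $(01)^a$ against $x^{(2)}$. You instead match $(01)^{c+a}$ against $x^{(1)}\circ x^{(2)}$ in one shot, exploiting $\gamma n < a$ so that \emph{all} the transitions of $y_L$ land in $x^{(1)}\circ x^{(2)}$, with no need for the $2\gamma n$ discount; this yields the slightly sharper $2n/3 + \rho n/90000 - \gamma n$ versus the paper's $-3\gamma n$. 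Both bounds exceed the $-4\gamma n$ target, so either works.

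One thing to flag: your closing sanity-check miscomputes the length gap. We have
$\abs{x^{(1)}\circ x^{(2)}} - (c+a) = (n/3+a)-(c+a) = n/3 - c = \rho n/60000$,
not $\rho n/180000$. Thus the transition savings $\lfloor T/2\rfloor \gtrsim \rho n/60000$ \emph{exactly match} the length gap — the embedding is tight with zero slack, not generous by a factor of $3$. This means your full-containment conclusion is right only up to integer-rounding ($T$ odd gives $\lfloor T/2\rfloor = T/2 - 1/2$). This is harmless — such $\pm 1$ discrepancies are absorbed into the $\gamma n$ margin against \Cref{eq:hehe111}, and the paper's own derivation of $0.1B$ disjoint $01$'s from $0.2B$ transitions glosses over the identical rounding — but it's worth being aware that the "comfortable" slack you cite comes entirely from the $\gamma n$ terms, not from the embedding step itself. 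The paper's choice to pay $2\gamma n$ up front and argue about LCS (rather than full containment) is precisely what buys it robustness to this kind of rounding.
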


\begin{proof}[Proof of \Cref{claim:events-case1}]
It follows from \Cref{claim:neg-binomial} that the probability of $E_1$ alone is at most $e^{-\Omega(\gamma^2\rho n)}$.
So it suffices to upper bound $\Pr_{\by} [ \by\ \text{in Case 1 and $E_2$}]$. 
Assume without loss of generality that $x^{(3)}$ has at least $b+\gamma n$ many $z$'s for some $z\in \{0,1\}$;
otherwise the probability above is trivially $0$.
By Chernoff bound we have 
\begin{align*}
\Pr_{\by\sim \Del_\delta(x)}\big[\text{\# of bits in $x^{(3)}$ that survive in $\by\ge 2\rho b+\rho\gamma n/3$}\big] &\le e^{-\Omega(\gamma^2 \rho n)}\quad\text{and}\\
\Pr_{\by\sim \Del_\delta(x)}\big[\text{\# of $z$'s in $x^{(3)}$ that survive in $\by\le \rho b+2\rho\gamma n/3$}\big] &\le e^{-\Omega(\gamma^2 \rho n)} .
\end{align*}
So with probability at least $1-e^{-\Omega(\gamma^2 \rho n)}$, the number of bits in $x^{(3)}$ that survive
  in $\by$ is at most $2\rho b+\rho\gamma n/3$ and among them at least $\rho b+2\rho\gamma n/3$ bits are $z$.
In this case it cannot happen that $\by$ falls in Case $1$ and $\bz\ne z$.
It follows that $\Pr_{\by} [ \by\ \text{in Case 1 and $E_2$}]\le e^{-\Omega(\gamma^2\rho n)}$.
\end{proof}

\begin{proof}[Proof of \Cref{claim:events-case12}]
When $\by\sim \Del_\delta(x)$ falls in Case 1, the string $\wh{\bx}$ returned by $A$ is 
\begin{align*}
\wh{\bx} := (01)^c\circ (01)^a\circ \bz^{b}.
\end{align*}
We lowerbound $|\LCS(x,\wh{\bx})|$ by
$$
\big|\LCS(x^{(1)},(01)^c)\big|+\big|\LCS(x^{(2)},(01)^a)\big|+\big|\LCS(x^{(3)},\bz^{b})\big|
$$
and below we bound each of the three terms separately. The following simple fact will be useful:


\begin{fact} \label{fact:alternate}
  Suppose $x \in \zo^n$ has $t$ many disjoint $01$'s.
  Then $\LCS(x, (01)^m) = n$ when $m \ge n - t$.
\end{fact}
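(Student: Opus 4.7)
The plan is to prove the stronger claim that $x$ embeds as a subsequence of $(01)^m$ whenever $m \ge n - t$; since $|x| = n$, this immediately yields $|\LCS(x,(01)^m)| = n$.

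I would begin by fixing a collection of $t$ disjoint $01$-substrings inside $x$; each such substring occupies two consecutive indices of $x$ with the first bit $0$ and the second bit $1$. Call these $2t$ indices the \emph{paired} positions and the remaining $n - 2t$ indices the \emph{free} positions. Scanning $x$ from left to right, the bits therefore partition into $n - t$ consecutive \emph{chunks}: each chunk is either a single free bit (of value $0$ or $1$), or a paired $01$-substring occupying two consecutive indices.

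The embedding is then constructed block by block: match the $i$-th chunk of $x$ to the $i$-th copy of $01$ inside $(01)^m$. A paired $01$ chunk is aligned exactly onto its assigned block; a free $0$-bit is matched to the $0$ of its block (leaving the $1$ of that block unused), and a free $1$-bit is matched to the $1$ of its block. This map is order-preserving both within and across blocks, and it consumes exactly one $(01)$-block per chunk, hence at most $n - t$ blocks in total. Since $m \ge n - t$, the embedding fits inside $(01)^m$, as required.

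I do not foresee a real obstacle here: the only step needing care is the chunk decomposition, which is immediate once one observes that the paired positions come in pairs of consecutive indices that are pairwise non-overlapping, so the left-to-right scan of $x$ naturally splits into the claimed $n - t$ chunks.
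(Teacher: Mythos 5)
Your chunk decomposition is a valid and clean argument under the reading that "$t$ disjoint $01$'s" means $t$ pairwise non-overlapping occurrences of $01$ as a \emph{contiguous} factor of $x$. The paper states \Cref{fact:alternate} without proof, so there is no internal argument to compare against.

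The one thing worth flagging is a mismatch between the hypothesis your proof uses and the hypothesis that is actually available where the fact is invoked (bounding $|\LCS(x^{(1)},(01)^c)|$ in Case~1). There, the $t$ disjoint $01$'s in $x^{(1)}$ are obtained by pulling back adjacent $01$-pairs from distinct blocks of the trace $\by_L$; after the pullback the $0$ and the $1$ of each pair need \emph{not} be adjacent in $x$, only located in ordered pairwise-disjoint intervals, i.e.\ one has indices $p_1 < q_1 < p_2 < q_2 < \cdots < p_t < q_t$ with $x_{p_k}=0$ and $x_{q_k}=1$. (The fully unrestricted "disjoint subsequence occurrences" reading would make the fact false: for $x = 0011$ the pairs $(1,3)$ and $(2,4)$ are disjoint $01$-subsequences, giving $t=2$, yet $0011$ is not a subsequence of $(01)^2 = 0101$. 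The "ordered disjoint intervals" reading is the one that is both true and matches the paper's use.) Fortunately this more general version reduces to yours in a single line: within each interval $[p_k,q_k]$, the leftmost index $j>p_k$ with $x_j=1$ yields an adjacent $01$ at $(j-1,j)$, and distinct $k$ give disjoint pairs since the intervals are disjoint. You should either make that reduction explicit or state up front that you are proving the equivalent "adjacent" formulation, so the hypothesis you use visibly covers what the surrounding proof supplies.
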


We start with $|\LCS(x^{(1)},(01)^c)|$.
Because $\abs{\bB'} < 0.8B$, we have that at least $0.2B$ of the $\by_{L_i}$'s contain both $0$ and $1$, and thus there are at least $0.1B$ many disjoint $01$'s appearing in $\by_L$.
Given that $\abs{\bx^L} < n/3 + \gamma n$, the first $n/3 + \gamma n$ bits of $x$ contain at least $0.1B$ many disjoint $01$'s, and so the first $n/3$ bits of $x$ (i.e. $x^{(1)}$) contains at least $0.1B - \gamma n$ many disjoint $01$'s.
Using $c=n/3-0.1B$ and \Cref{fact:alternate}, we have
\[
  \abs[\big]{\LCS(x^{(1)}, (01)^c)}
  = \abs[\bigg]{\LCS\Bigl( x^{(1)}, (01)^{n/3 - 0.1B} \Bigr)}
  \ge \abs[\bigg]{\LCS\Bigl( x^{(1)}, (01)^{n/3 - \left(0.1B - \gamma n \right)} \Bigr) } - 2 \gamma n
  \ge \frac{n}{3} - 2 \gamma n .
\]
Next, given that $x^{(2)}$ only has length $a$ and $x^{(3)}$ 
  contains at least $b-\gamma n$ many $\bz$'s, 
 trivially we have
\[
  \abs[\big]{\LCS(x^{(2)}, (01)^{a})}
  = a\quad\text{and}\quad \abs[\big]{\LCS(x^{(3)}, \bz^b)}\ge b-\gamma n.
\]
It follows that 
$$
\abs[\big]{\LCS(x , \wh{\bx})}\ge \frac{n}{3}+a+b-3\gamma n = \frac{2n}{3}+\frac{\rho n}{90000}-3\gamma n
$$
and $A$ succeeds. This finishes the proof of the claim.
\end{proof}

\paragraph{Case 2: $\abs{B'} \ge 0.8B$.}
Recall that we are in Case 2 if $\by\sim \Del_\delta(x)$ has length at least $(\rho-\gamma)n$ and 
  $|\bB'|\ge 0.8B$.
For each $i\in [B]$ we set $\bz_i$ to be the bit such that $\by_{L_i}=\bz_i^{2000}$ if $i\in \bB'$
  and set $\bz_i=0$ if $i\notin \bB'$.
We also write $\bz$ to denote the majority of the last $2\rho n/9$ bits of $\by$.  
  
The proof proceeds in a similar fashion as Case 1. Let $x=x^{(1)}\circ x^{(2)}\circ x^{(3)}$ with
$$
\abs{x^{(1)}} = n/3,\quad 
\abs{x^{(2)}} = 2n/9 \quad\text{and}\quad
\abs{x^{(3)}} = 4n/9.
$$  
Our goal is to show that 
$$
\Pr_{\by\sim \Del_{\delta}(x)}\big[\by\ \text{in Case $2$ and $A$ fails}\big]\le e^{-\Omega(\gamma^2\rho n)}.
$$
Let $E_1$ denote the event of $|\bx^L|\ge n/3+\gamma n$, $E_2$ denote the event of $\bz$
  appearing less than $2n/9-\gamma n$ many times in $x^{(3)}$,
  and $E_3$ denote the following event:
\begin{enumerate}
\item[] $E_3$: For at least $0.02B$ of $i\in \bB'$, the subword $\bx^{L_i}$
  contains at most $0.9 \cdot 2000/\rho$ many $\bz_i$'s.
\end{enumerate} 
This follows from the following two claims:

\begin{claim} \label{claim:events-case2}
For any string $x\in \{0,1\}^n$, we have 
$$
\Pr_{\by\sim \Del_\delta(x)}\big[\by\ \text{in Case $1 \land ( E_1\lor E_2\lor E_3)$}\big]\le e^{-\Omega(\gamma^2\rho n)}. 
$$
\end{claim}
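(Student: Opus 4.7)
The proof will bound $\Pr_{\by}[\by\in\text{Case 2}\land E_j]$ separately for $j=1,2,3$, each by $e^{-\Omega(\gamma^2\rho n)}$. The bound for $E_1$ is immediate, paralleling the corresponding step in \Cref{claim:events-case1}: $|\bx^L|$ is a sum of $\rho n/3$ independent $\Geo(\rho)$ draws with mean $n/3$, so \Cref{claim:neg-binomial} gives $\Pr[E_1]\le e^{-\Omega(\gamma^2\rho n)}$, unconditionally on Case 2.

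For $E_2$ I would follow the Chernoff-style template from the proof of \Cref{claim:events-case1}. A ``reverse'' application of \Cref{claim:neg-binomial} shows that, except with probability $e^{-\Omega(\gamma^2\rho n)}$, the last $2\rho n/9$ surviving bits originate from a window $W$ near the end of $x$ of length $2n/9\pm O(\gamma n)$. Assuming $\bz=b$ and $E_2$, the count of $b$'s in $x^{(3)}$, and hence in $W\subseteq x^{(3)}$, is less than $2n/9-\gamma n$. A Chernoff bound on the $\Bin(|W|,\rho)$-many $b$-survivors inside $W$ then implies that, with probability $1-e^{-\Omega(\gamma^2\rho n)}$, fewer than $\rho n/9$ of the last $2\rho n/9$ trace bits equal $b$, contradicting $\bz=b$. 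A union bound over $b\in\zo$ finishes this case.

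The novel ingredient is $\Pr[\text{Case 2}\land E_3]$, which I would handle in two stages. First, a per-block tail bound: for each fixed $i\in[B]$, $b\in\zo$, and any realization of the starting position $s:=|\bx^{[2000(i-1)]}|$ and length $\ell:=|\bx^{L_i}|$ with $\ell\in(1\pm\gamma')\cdot 2000/\rho$ for a small absolute constant $\gamma'>0$, the conditional probability of ``$\by_{L_i}=b^{2000}$ and $x[s+1:s+\ell]$ contains fewer than $0.9\cdot 2000/\rho$ copies of $b$'' is at most $e^{-\Omega(2000)}$. This is because, conditional on $s$ and $\ell$, the $2000$ ``jump'' positions are uniformly distributed over the size-$2000$ subsets of $[s+1,s+\ell]$ whose maximum element equals $s+\ell$; if $x[s+1:s+\ell]$ has at most $0.9\cdot 2000/\rho$ copies of $b$, the probability that all $2000$ jumps land on $b$-positions is at most the binomial ratio $\binom{0.9\cdot 2000/\rho}{1999}/\binom{(1-\gamma')\cdot 2000/\rho}{1999}\le (0.95)^{1999}=e^{-\Omega(2000)}$. \Cref{claim:neg-binomial} contributes an additional $e^{-\Omega(2000)}$ term for $\ell$ falling outside $(1\pm\gamma')\cdot 2000/\rho$.

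Second, I would upgrade this per-block bound to $\Pr[E_3]\le e^{-\Omega(\rho n)}$ via a union bound. Conditional on all block lengths $(|\bx^{L_i}|)_{i\in[B]}$, the per-block events depend on disjoint blocks of the random gap sequence and are therefore independent across $i$. Union bounding over all $\binom{B}{0.02B}\le 2^{H(0.02)B}$ candidate bad-block sets from \Cref{fact:standard-bound}, the probability that every block in any such set is bad is at most $(e^{-\Omega(2000)})^{0.02B}$, and since $H(0.02)\ll 0.02\cdot\Omega(2000)$ a comparison of exponents yields $\Pr[E_3]\le e^{-\Omega(B)}=e^{-\Omega(\rho n)}$, comfortably within the target. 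The main technical obstacle I anticipate is in the per-block bound: carefully justifying the uniform-subset characterization of the $2000$ jump positions given $s$ and $\ell$, and then verifying that $0.9$ is strictly less than the effective density $1/(1-\gamma')$ for a suitable constant $\gamma'>0$; once that is in place, the union-bound step is routine.
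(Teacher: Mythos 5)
Your decomposition into the three events $E_1,E_2,E_3$ matches the paper, and your treatment of $E_1$ is exactly what the paper does. The two points of substantive divergence are $E_3$ and $E_2$.

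For $E_3$ you take a genuinely different route. The paper's per-block argument is considerably more direct: fixing the history $\bx^1,\dots,\bx^{2000(i-1)}$ (but \emph{not} the block length $\ell_i$), it observes that $\bZ_i=1$ forces at least $2000$ of the first $0.9\cdot 2000/\rho$ copies of $0$ (or of $1$) in the current suffix of $x_\infty$ to survive; since the expected number of survivors is only $1800$, a Chernoff bound gives $\Pr[\bZ_i=1\mid\mathrm{history}]\le 0.01$, and a martingale/stochastic-domination concentration bound for $\sum_i\bZ_i$ then gives $e^{-\Omega(B)}$. You instead condition on $\ell_i$, identify the jump endpoints with a uniform size-$1999$ subset of $[\ell_i-1]$, and use a ratio-of-binomials estimate together with a separate negative-binomial tail for $\ell_i$ itself, followed by a union bound over $\binom{B}{0.02B}$ subsets. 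This is sound, but it imports two extra obstacles that the paper sidesteps entirely: (i) the binomial-ratio calculation needs case analysis at both ends of the $\rho$ range (for $\rho$ near $1$ the bound degenerates to $0$, for $\rho$ small the ratio approximation $(a/c)^k$ must be verified for $a,c$ comparable to $k$), and (ii) your ``conditional on all block lengths, independence holds'' phrasing glosses over the fact that once you condition on the $\ell_i$'s, the in-range/out-of-range dichotomy becomes deterministic per block, so the union bound over bad sets has to be split into a ``too many out-of-range $\ell_i$'' piece and an ``all $\ell_i$ in range'' piece. Both can be patched — the cleaner fix is to fold the $\ell_i$ tail into a single bound $\Pr[\text{block $i$ bad}\mid\mathrm{history}]\le e^{-\Omega(2000)}$ and then do sequential conditioning as the paper does — but as written the argument is not complete.

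Your $E_2$ argument has a real gap. The window $W$ from which the last $2\rho n/9$ trace bits originate has length $2n/9\pm O(\gamma n)$, which is only \emph{half} of $|x^{(3)}|=4n/9$. The hypothesis you extract from $E_2$ is ``$\#b$'s in $W$ is $<2n/9-\gamma n$,'' but since $|W|\approx 2n/9$ that bound is vacuous — $W$ could be essentially all $b$'s and still satisfy it. Hence the claimed Chernoff conclusion that fewer than $\rho n/9$ of the last $2\rho n/9$ trace bits are $b$ does not follow; you would need $\#b$'s in $W<n/9-O(\gamma n)$, which $E_2$ does not give you. (You appear to be following the ``$2\rho n/9$'' in the algorithm's Case-2 description literally. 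The analysis, which uses $|\LCS(x^{(3)},\bz^{2n/9})|\ge 2n/9-\gamma n$ with $|x^{(3)}|=4n/9$, in fact needs $\bz$ to be the \emph{global} majority of $x^{(3)}$, which requires sampling (essentially) all of $x^{(3)}$'s survivors, i.e.\ the last $4\rho n/9$ trace bits rather than $2\rho n/9$. With $4\rho n/9$ the window $W$ covers all of $x^{(3)}$ and your argument goes through verbatim; with $2\rho n/9$ it does not. This discrepancy is worth flagging independently of the proof attempt.)
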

\begin{claim}\label{claim:events-case22}
The algorithm $A$ succeeds whenever $\by\sim \Del_\delta(x)$ satisfies (1) $\by$ falls in Case 2; (2) $\overline{E_1}$: $|\bx^L|<n/3+\gamma n$; (3) $\overline{E_2}$: $\bz$ appears at least $2n/9-\gamma n$ many times in $x^{(3)}$; and 
  (4) $\overline{E_3}$: At most $0.02B$ of $i\in \bB'$ has $\bx^{L_i}$ contain at most $0.9 \cdot 2000/\rho$ many $\bz_i$'s.
\end{claim}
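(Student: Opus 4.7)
The plan is to partition both $x$ and $\wh{\bx}$ into three aligned pieces and lower bound the total LCS by the sum of the three piecewise LCSs. Specifically, I will write $x = x^{(1)} \circ x^{(2)} \circ x^{(3)}$ with $|x^{(1)}| = n/3$, $|x^{(2)}| = 2n/9$, $|x^{(3)}| = 4n/9$, matching the three pieces of $\wh{\bx} = \wh{\bx}^{(1)} \circ (01)^{2n/9} \circ \bz^{2n/9}$, and use
\[
|\LCS(x,\wh{\bx})| \ \ge\ |\LCS(x^{(1)},\wh{\bx}^{(1)})| + |\LCS(x^{(2)},(01)^{2n/9})| + |\LCS(x^{(3)},\bz^{2n/9})|.
\]

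The two rightmost terms will be handled directly. For $|\LCS(x^{(2)},(01)^{2n/9})|$ I will use the observation that any binary string of length $\le m$ is a subsequence of $(01)^m$ (matched greedily, consuming at most two positions per target bit), so this term equals $|x^{(2)}| = 2n/9$. For $|\LCS(x^{(3)},\bz^{2n/9})|$ I will invoke $\overline{E_2}$ directly: it guarantees that $x^{(3)}$ contains at least $2n/9 - \gamma n$ copies of $\bz$, so matching $\bz^{2n/9}$ against those copies gives LCS at least $2n/9 - \gamma n$.

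The main step is bounding $|\LCS(x^{(1)},\wh{\bx}^{(1)})|$, and this is where the structural properties of Case 2 are used. Recall that $\wh{\bx}^{(1)}$ is the concatenation, over $i \in [B]$, of the blocks $\bz_i^{2000/\rho}$, while $\bx^L$ (the ``source'' of the first $\rho n /3$ trace bits) is partitioned by the disjoint, in-order subwords $\bx^{L_1},\dots,\bx^{L_B}$. I will pair $\bz_i^{2000/\rho}$ against $\bx^{L_i}$ block by block. By $\overline{E_3}$, at most $0.02 B$ indices $i \in \bB'$ fail to have $\ge 0.9 \cdot 2000/\rho = 1800/\rho$ copies of $\bz_i$ in $\bx^{L_i}$; since Case 2 gives $|\bB'| \ge 0.8 B$, the set $G$ of ``good'' indices (those in $\bB'$ meeting the count condition) satisfies $|G| \ge 0.78 B$. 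For each $i \in G$, I can match a prefix $\bz_i^{1800/\rho}$ of the $i$th block of $\wh{\bx}^{(1)}$ against $1800/\rho$ of the $\bz_i$ positions inside $\bx^{L_i}$, yielding a partial matching of size $1800/\rho$. Because the $\bx^{L_i}$ are disjoint and appear in order inside $\bx^L$ (and likewise the blocks of $\wh{\bx}^{(1)}$ inside $\wh{\bx}^{(1)}$), these partial matchings combine into a single monotone matching between $\wh{\bx}^{(1)}$ and $\bx^L$ of total size at least $0.78 B \cdot 1800/\rho = 0.234\, n$.

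It remains to convert this into a matching against $x^{(1)}$ alone. By $\overline{E_1}$ we have $|\bx^L| < n/3 + \gamma n$, so $\bx^L$ overshoots $x^{(1)}$ by at most $\gamma n$ bits; discarding matches that land past position $n/3$ of $x$ loses at most $\gamma n$ pairs, hence $|\LCS(x^{(1)},\wh{\bx}^{(1)})| \ge 0.234\, n - \gamma n$. Adding the three contributions yields $|\LCS(x,\wh{\bx})| \ge 0.234\, n - \gamma n + 2n/9 + (2n/9 - \gamma n) = 0.678\, n - 2\gamma n$, which comfortably exceeds $2n/3 + \rho n/90000 - 4\gamma n$ because $0.234 - 2/9 \approx 0.012 \gg 1/90000 \ge \rho/90000$. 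The main obstacle I anticipate is the ``overhang'' step: because $\bx^L$ need not land exactly inside $x^{(1)}$, one has to check carefully that at most $\gamma n$ matches are lost when restricting the combined matching to the first $n/3$ positions of $x$. Once that is handled, the rest is routine block counting with a comfortable slack over the target $\rho n/90000$.
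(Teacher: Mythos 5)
Your proof is correct and takes essentially the same route as the paper: the same three-piece decomposition of $x$ and $\wh{\bx}$, the same handling of the middle and right pieces via $\overline{E_2}$, the same block-by-block matching of $\wh{\bx}^{(1)}$ against $\bx^{L_1},\dots,\bx^{L_B}$ using $\overline{E_3}$ and $|\bB'|\ge 0.8B$ to get at least $0.78B$ good blocks contributing $0.9\cdot 2000/\rho$ each, and the same use of $\overline{E_1}$ to absorb the $\le\gamma n$ overhang of $\bx^L$ past $x^{(1)}$, yielding $|\LCS(x,\wh{\bx})| \ge 0.678n - 2\gamma n$, which exceeds the $2n/3 + \rho n/90000 - 4\gamma n$ target. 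The ``overhang'' step you flag is handled exactly as you anticipate (and as the paper does implicitly via $|\LCS(x^{(1)},\wh{\bx}^{(1)})| \ge |\LCS(\bx^L,\wh{\bx}^{(1)})| - \gamma n$): since $\bx^L$ is a prefix of $x$ of length at most $n/3 + \gamma n$, any matching into $\bx^L$ restricted to positions $\le n/3$ loses at most $\gamma n$ edges.
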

\begin{proof}[Proof of \Cref{claim:events-case2}]
Events $E_1$ and $E_2$ can be handled similarly as in the proof of \Cref{claim:events-case1}.
Below we show that $\Pr_{\by}[ E_3 ] \le e^{-\Omega(\gamma^2\rho n)}$.
To this end, note that $E_3$ means there are at least $0.02B$ many $i\in [B]$
  such that $\by_{L_i}$ is all $\bz_i$ for some $\bz_i\in \{0,1\}$ while
  $\bx^{L_i}$ has at most $0.9\cdot 2000/\rho$ many $\bz_i$.
  
Let $\bZ_i$ be the indicator random variable for the event above for each $i\in [B]$.
We show below that conditioning on any outcomes of $\bx^1,\ldots,\bx^{2000(i-1)}$,
  the probability of $\bZ_i=1$ is at most $0.01$.
It follows that $E_2$ occurs with probability at most $e^{-\Omega(B)}=e^{-\Omega(\rho n)}$.

For each $i\in [B]$, after fixing any outcomes of $\bx^1,\ldots,\bx^{2000(i-1)}$,
  a necessary condition for $\bZ_i$ to be $1$ is that among the first $0.9\cdot 2000/\rho$ 
  many $0$'s in the current $x_{\infty}$, at least $2000$ of them survive in $\by_{\infty}$, or among the first $0.9\cdot 2000/\rho$
  many $1$'s in $x_{\infty}$, at least $2000$ of them survive in $\by_{\infty}$.
  The probability of $\bZ_i=1$ can be bounded from above by $0.01$ using the Chernoff bound.
  \end{proof}

\begin{proof}[Proof of \Cref{claim:events-case22}]  
When $\by\sim \Del_\delta(x)$ falls in Case 2, the algorithm $A$ returns $$\wh{\bx}=\wh{\bx}^{(1)}\circ (01)^{2n/9}\circ
  \bz^{2n/9} ,$$ where
  $\wh{\bx}^{(1)}$ is the concatenation of $\bz_i^{2000/\rho}$, $i\in [B]$.
We lowerbound $|\LCS(x,\wh{\bx})|$ by
\begin{align*}
&\big|\LCS(x^{(1)},\wh{\bx}^{(1)})\big|+ \big|\LCS(x^{(2)},(01)^{2n/9})\big|+\big|\LCS(x^{(3)},\bz^{2n/9})\big|\\
&\hspace{1.5cm}\ge \big|\LCS(x^{(1)},\wh{\bx}^{(1)})\big| + 2n/9 + 2n/9-\gamma n.
\end{align*}
To bound $|\LCS(x^{(1)},\wh{\bx}^{(1)})|$, we write $\bB''$ to denote the set of $i\in \bB'$
  such that $\bx^{L_i}$ contains at least $0.9 \cdot 2000/\rho$ many $\bz_i$'s.
  It follows from Item (4) in \Cref{claim:events-case22} that $|\bB''|\ge 0.98 \cdot 0.8 B \ge 0.78 B$.
We have 
\begin{align*}
  \abs[\big]{\LCS(x^{(1)}, \wh{\bx}^{(1)})}
  &\ge \abs[\big]{\LCS(\bx^L, \wh{\bx}^{(1)})} - \gamma n  \\[.8ex]
  &\ge \sum_{i \in \bB''} \abs[\big]{\LCS(\bx^{L_i}, \bz_i^{2000/\rho})} - \gamma n \\
  &\ge 0.78\cdot \frac{\rho n}{6000}\cdot 0.9 \cdot \frac{2000}{\rho}-\gamma n\\ & = 0.702\cdot \frac{n}{3}-\gamma n.
\end{align*}
Therefore, altogether we have
\begin{align} \label{eq:lcs-case2}
  \abs[\big]{\LCS(x,\wh{\bx})} 
   \ge \frac{0.702 \cdot 3 + 4}{9} \cdot n - 2 \gamma n \nonumber \ge (0.678 - 2 \gamma) n  
\end{align}
and $A$ succeeds. This finishes the proof of the claim.
\end{proof}



\subsection{Bounds on the performance of any one-trace (or few-trace) algorithms}

Complementing \Cref{thm:worst-case-small-rho}, we show that for worst-case approximate trace reconstruction, even if the total number of bits obtained across multiple traces is $n^{0.999}$, it is not possible to achieve expected $\LCS$ of $(2/3 + c)n$ for any constant $c > 0$. The following theorem gives a more detailed version of \Cref{thm:worst-case-small-rho-upper-bound-informal}.


\begin{theorem} [Worst-case upper bound on any few-trace algorithm, small retention rate] \label{thm:worst-case-small-rho-upper-bound}
Let $\kappa>0$ be any absolute constant and let $t(n),\rho(n)=1-\delta(n)$ be such that $t(n) \rho(n) \leq 1/n^{\kappa}$. For sufficiently large $n$, we have 
\[
L_{t(n),\worst}(\delta(n),n) \leq (2/3 + o_n(1))n.\]
\end{theorem}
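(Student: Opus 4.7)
The plan is to combine the analysis of Bukh--Ma codes due to Guruswami, Haeupler, and Shahrasbi~\cite{GHS20} with an information-theoretic ``deck-matching'' argument. I would first invoke the result of~\cite{GHS20} to obtain a large family $\mathcal{C} \subseteq \zo^n$ of Bukh--Ma codewords with the property that any two distinct codewords $c, c' \in \mathcal{C}$ satisfy $|\LCS(c, c')| \leq (2/3 + o(1))n$. Each Bukh--Ma codeword has a highly regular structure: it is a concatenation of long monochromatic runs whose lengths encode an underlying ``source'' string over a small alphabet, and this regularity will be the central lever that I exploit in the next step.

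The key technical ingredient is to construct a family of distributions $\{\mathcal{D}_j\}_{j=1}^{N}$ with $N = n^{\omega(1)}$, each supported on a distinct (disjoint) subset of $\mathcal{C}$, such that all the $\mathcal{D}_j$'s have identical expected $k$-decks for some $k = \omega(n^{1-\kappa})$. I would leverage known combinatorial constructions of large deck-equivalent families of source strings and then argue that the Bukh--Ma encoding preserves deck-equivalence up to the relevant scale: when $k$ is well below the Bukh--Ma run-length scale, any length-$k$ subsequence of a codeword depends only on ``coarse'' combinatorial statistics of the underlying source, which the deck-matching construction is designed to preserve. I expect this transfer step to be the main obstacle, since it must be carried out at a scale $k$ that simultaneously dominates the total typical trace length \emph{and} is small enough to admit an exponentially large deck-equivalent family.

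Given such a family, I would then argue information-theoretically. The joint distribution of $t$ traces depends on $x$ only through its $L$-deck, where $L$ upper-bounds the length of any individual trace. Since $\E[|\by^{(i)}|] = \rho(n)\cdot n$ and $t(n)\rho(n) \leq n^{-\kappa}$, applying \Cref{claim:neg-binomial} to each trace together with a union bound shows that, with probability $1 - o(1)$, every trace satisfies $|\by^{(i)}| \leq L_0 = O(n^{1-\kappa})$. Conditioning on this high-probability event, the joint trace distribution depends only on the $L_0$-deck of $x$, which by construction coincides across all $\mathcal{D}_j$'s. A coupling argument then implies that for any algorithm $A$, the output $\wh{\bx} = A(\by^{(1)}, \ldots, \by^{(t)})$ has total-variation distance $o(1)$ from being independent of the index $j$ underlying the source string.

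To close the argument, suppose toward contradiction that some $t$-trace algorithm achieves expected LCS at least $(2/3 + c)n$ for an absolute constant $c > 0$ on every $x \in \bigcup_j \mathrm{supp}(\mathcal{D}_j)$. Draw $\bx \sim \mathcal{D}_{\bj}$ for $\bj$ uniform in $[N]$. Because $\wh{\bx}$ is (essentially) independent of $\bj$, and because (using the Bukh--Ma pairwise bound of~\cite{GHS20} promoted into a list-decoding-style statement) for any fixed $\wh{x}$ only a small number of codewords $c$ can satisfy $|\LCS(\wh{x}, c)| \geq (2/3 + c)n$, averaging over $\bj$ forces $\E[|\LCS(\wh{\bx}, \bx)|] \leq (2/3 + o(1))n$, contradicting the assumption. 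Since this is an average over $\bx \in \bigcup_j \mathrm{supp}(\mathcal{D}_j) \subseteq \zo^n$, it immediately yields a worst-case upper bound, completing the proof. A secondary obstacle is extracting the list-decoding-style bound from the pairwise guarantee of~\cite{GHS20}, but this should follow from the highly structured (and sparse in LCS balls) nature of the Bukh--Ma construction.
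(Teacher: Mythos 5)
Your high-level strategy correctly identifies the two main ingredients (the Guruswami--Haeupler--Shahrasbi list-LCS bound for Bukh--Ma codes, and a deck-matching construction that makes the trace distribution carry no usable information), and you even note that the deck-matching transfer step is ``the main obstacle.'' But your version of that step is not just hard, it is set up at the wrong scale, and there is a missing structural idea that the paper's proof hinges on.

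The crucial gap is that you try to match $k$-decks of the entire $n$-bit source string at scale $k = \omega(n^{1-\kappa})$, arguing that the joint trace distribution depends only on a deck at roughly the total trace length. This requires an exponentially large family of strings that are $k$-deck-equivalent at a \emph{polynomially large} scale $k$, with each member of the family close to a distinct Bukh--Ma codeword. No known construction comes close: deck-equivalent families at scale $k$ degenerate quickly as $k$ grows (reconstruction from the $O(\sqrt{n})$-deck is already known to be possible), and there is no reason the Bukh--Ma codewords --- which have wildly different run-length profiles --- should admit such a family. The paper sidesteps this entirely with a \emph{segmentation} idea that you do not have: it partitions the source string into $n/\ell$ blocks of length $\ell = n^{1/k}$ for a \emph{constant} $k$ (concretely $k \approx 4/\kappa$), draws each block independently from a mixture $\calM^{(u)}$ supported on periodic $\ell$-bit strings, and matches the $k$-decks of the \emph{blocks}, not of the whole string. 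Because blocks are short and independent, with probability $1 - o(1)$ no block contributes more than $k$ surviving bits across all $t$ traces, and then the observed bits are governed solely by the common block-level $k$-deck. Matching constant-size decks of $\ell$-bit periodic strings is an explicit finite-dimensional linear-algebra problem (a Vandermonde system, Claim~\ref{claim:indep-of-u}), not a deep combinatorial conjecture.

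Two smaller issues. First, the ``list-decoding-style'' strengthening of the GHS pairwise bound that you flag as a secondary obstacle is not actually an obstacle: the GHS analysis already gives exactly the list statement needed (at most $O(1/\eps^3)$ codewords in $C_{n,\eps}$ have LCS $\geq (2/3+\eps/6)n$ with any fixed $x$; this is \Cref{thm:GHS}). Second, you want $N = n^{\omega(1)}$ deck-equivalent distributions, but this is unnecessary: the Bukh--Ma family has only $\Theta(\log n)$ codewords, and since the GHS list size $O(1/\eps^3) = o(\log n)$, the \emph{average} LCS over this small family is already $(2/3+o(1))n$, which is all the triangle-inequality argument requires. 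Also, your claim that the joint distribution of $t$ traces depends only on the deck at the scale of the \emph{maximum} trace length is not right; collectively the traces can reveal up to $\sum_i |\by^{(i)}|$ bits, so without the per-block argument the relevant scale would be even worse than you state.
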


In order to prove \Cref{thm:worst-case-small-rho-upper-bound}, we first introduce some additional notation.

\subsubsection{Notation} \label{sec:deck-notation}

\noindent {\bf Decks.} 
For $k \in \N$, the \emph{$k$-deck} of a string $z \in \zo^n$, denoted $\deck_k(z)$, is the vector in $\Z^{\zo^k}$ whose $y$-th element (for $y \in \zo^k$) is the number of occurrences of $y$ as a length-$k$ subsequence of $z$. 

Let ${\cal M}$ be a mixture of $n$-bit strings with mixing weights $p_1,\dots,p_m$ on strings $z^1,\dots,z^m \in \zo^n$ (in other words ${\cal M}$ is a distribution over $n$-bit strings). The \emph{$k$-deck of ${\cal M}$}, denoted $\deck_k({\cal M})$, is defined to be the following vector in $\R^{\zo^k}$:
\[
\deck_k({\cal M}) = \sum_{i=1}^m p_i \deck_k(z^i).
\]

Given $y \in \zo^k$ we write $\deck_k(z)_y$ to denote the $y$-th element of $\deck_k(z)$ and $\deck_k({\cal M})_y$ to denote the $y$-th element of $\deck_k({\cal M})$.
Note that for any string $z \in \zo^n$ we have $\sum_{y \in \zo^k} \deck_k(z)_y = {n \choose k}$, and likewise $\sum_{y \in \zo^k} \deck_k({\cal M})_y={n \choose k}$ for any mixture ${\cal M}$ of $n$-bit strings.

\medskip

\noindent {\bf Segments.} 
We view an $n$-bit source string $x \in \zo^n$ as being composed of $n/\ell$ consecutive segments of length $\ell$, for some $\ell = \ell(n)$. 

\medskip

\noindent {\bf Average LCS of a set.} 
Given any set of strings ${\cal S} \subseteq \zo^n$, define
\[
\AvgLCS({\cal S}) := \max_{x' \in \zo^n} {\frac 1 {|{\cal S}|}} \sum_{s \in {\cal S}} |\LCS(x',s)|,
\]
i.e.,~$\AvgLCS({\cal S})$ is the largest possible value (over all possible hypothesis strings $x' \in \zo^n$) of the average LCS between an element of ${\cal S}$ and $x'$. 

We will relate $L_{t(n),\worst}(\delta(n),n)$ to $\AvgLCS({\cal S})$ of a set ${\cal S}$ which is (a slight modification of) the \emph{Bukh--Ma code}, a set of $n$-bit strings that was first studied in \cite{BukhMa14} and further analyzed in \cite{GHS20}.

\subsubsection{The Bukh--Ma code} \label{sec:bukh-ma}

Fix a segment length $\ell=\ell(n)$ which divides $n$. Take $\eps$ to be a suitable $o_n(1)$ value, and let $C_{n,\eps}$ be the Bukh--Ma code analyzed in \cite{GHS20}:  
\begin{equation} \label{eq:Cneps}
C_{n,\eps}=\left\{
(0^r1^r)^{{\frac n {2r}}}: r = {\frac 1 {\eps^{4u}}}, u = 1,\dots,{\frac 1 2} \log_{1/\eps^4} \ell
\right\}.
\end{equation}

We denote the string $(0^r1^r)^{{\frac n {2r}}}$ where $r = {\frac 1 {\eps^{4u}}}$ by $A_u$, for $u = 1,\dots,{\frac 1 2}\log_{1/\eps^4} \ell$.
We remark that for each string $A_u$ in the Bukh--Ma code above, the ``period" $2r = 2/\eps^{4u}$ divides the segment length $\ell$.

\begin{theorem}[Implicit in the proof of \cite{GHS20}, Theorem 1.4] \label{thm:GHS}
    For any $x \in \zo^n$, there can be at most ${\frac {1200}{\eps^3}}$ many strings $A_u \in C_{n,\eps}$ 
    that have $|\LCS(x,A_u)| \geq (2/3 + \eps/6)n.$
\end{theorem}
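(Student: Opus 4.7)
The plan is to extract this bound from the proof of Theorem~1.4 in \cite{GHS20}, which resolves the Bukh--Ma conjecture and establishes list-decodability of $C_{n,\eps}$ against deletion errors essentially up to the minimum distance. Our statement is the specialization of that list-decoding bound to the thresholds we need.

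The backbone of the GHS argument is a pairwise LCS bound: for any two distinct codewords $A_u, A_v \in C_{n,\eps}$, $|\LCS(A_u, A_v)| \leq (2/3 + O(\eps))\, n$. This is the resolution of the Bukh--Ma conjecture, and it exploits the geometric spacing of the periods $2 r_u = 2/\eps^{4u}$: when $u < v$, any long common subsequence of $A_u$ and $A_v$ must align macroscopic $A_u$-blocks of $0$s and $1$s with macroscopic $A_v$-blocks, and a careful combinatorial analysis using the period ratio $\eps^{-4(v-u)} \geq \eps^{-4}$ pins down the $2/3$ threshold up to lower-order terms.

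To go from this pairwise bound to the list-size bound of $1200/\eps^3$, my plan is to follow the GHS reduction. Assume for contradiction that more than $N = 1200/\eps^3$ codewords $A_{u_1}, \ldots, A_{u_N}$ all satisfy $|\LCS(x, A_{u_i})| \geq (2/3 + \eps/6)\, n$, and fix a maximum matching $M_i$ between $x$ and each $A_{u_i}$. Since LCS is not a metric, a direct Johnson-style bound via a triangle inequality is unavailable; instead, the key observation is that two monotone matchings of $x$ to two codewords which share many $x$-positions automatically induce a correspondingly long common subsequence of those two codewords (monotonicity is inherited from the left-to-right order on $x$). Combining this with a second-moment / pigeonhole analysis of the shared-position pattern across the $N$ matchings, one concludes that some pair $(A_{u_i}, A_{u_j})$ must have $|\LCS(A_{u_i}, A_{u_j})| > (2/3 + \Omega(\eps))\, n$, contradicting the pairwise bound.

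The main obstacle is the bookkeeping on constants: the pairwise bound leaves only an $O(\eps)$-slack, and the improvement in the shared-position moment provided by having $N$ codewords is only on the order of $1/N$ times the size of the matchings, so $N$ must be tuned very carefully to close the gap. This is where the constant $1200/\eps^3$ arises, and I expect that chasing the explicit dependencies through the GHS argument -- rather than introducing any new idea -- will yield the stated bound.
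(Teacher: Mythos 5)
Your proposal and the paper diverge in a way that matters. The paper's ``proof'' of \Cref{thm:GHS} is a direct citation of the list-decoding bound that GHS20 prove in their Section~3 (after their Lemma~3.1): for any $x$, the number of Bukh--Ma codewords $s$ with $\mathrm{adv}(x,s) > \eps/2$ is at most $1200/\eps^3$, where $\mathrm{adv}(x,s) = (3|\LCS(x,s)| - |x| - |s|)/|x|$. All the paper does is translate ``advantage $> \eps/2$'' into ``$|\LCS(x,s)| \geq (2/3 + \eps/6)n$,'' which is a one-line algebraic rewrite. It does not re-derive the bound.

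Your plan is a genuinely different (and unfortunately flawed) route: reduce the list-size bound to the \emph{pairwise} LCS bound on codewords via a shared-positions argument. The reduction does not close. Each of your $N$ matchings $M_i$ covers a $(2/3+\eps/6)$-fraction of the positions of $x$, so a second-moment (Jensen/Cauchy--Schwarz) calculation gives an average pairwise shared-position fraction of roughly $(2/3+\eps/6)^2 - O(1/N) \approx 4/9 + O(\eps)$, and even the best-case pairwise intersection from inclusion-exclusion alone is only $1 - 2(1/3 - \eps/6) = 1/3 + \eps/3$ of the positions. Both are far below the $2/3 + \Omega(\eps)$ you would need the induced common subsequence of $A_{u_i}$ and $A_{u_j}$ to have in order to contradict the pairwise bound $|\LCS(A_u,A_v)| \leq (2/3+O(\eps))n$. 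Your proposal already flags that LCS is not a metric and a Johnson-style reduction is unavailable; the shared-position second moment is not a substitute for the missing triangle inequality, because it only delivers $\approx 4/9$, not $\approx 2/3$. To actually prove the list bound you need the direct combinatorial argument that GHS20 give (which exploits the geometric period spacing of $C_{n,\eps}$ and analyzes the matchings against $x$ directly), not a reduction from the pairwise case. If you want to keep the paper's approach, the correct move is simply to cite GHS20's explicit list-size bound and do the $\mathrm{adv} \leftrightarrow \LCS$ translation.
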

\newcommand{\adv}{\mathrm{adv}}
\noindent
\emph{Proof sketch:} We explain how \Cref{thm:GHS} is implicit in the proof of Theorem~1.4 of \cite{GHS20}. 
In \cite{GHS20}, it is shown 
(see Section~3, starting after the proof of their Lemma~3.1) that for any $x \in \zo^n$,  if a set of $m$ strings from $C_{n,\eps}$ is such that each of the $m$ strings (call the string $s$) has $\adv(x,s)> \eps/2$, then we must have $m 
\leq 1200/\eps^3$. 
Since
$\adv(x,s) = {\frac {3 |\LCS(x,s)| - |x|-|s|}{|x|}}$
(see \cite{GHS20}'s Definitions~2.4 and 2.5),
having $\adv(x,s) > \eps/2$ is 
equivalent to having $|\LCS(x,s)| \geq (2/3 + \eps/6)n$. 
\qed

\medskip

Fix $x \in \zo^n$. Using \Cref{thm:GHS}, we can upper bound the average $\LCS$ of $x$ with $C_{n,\eps}$ by 
\begin{equation} \label{eq:twothirds}
    {\frac 1 {|C_{n,\eps}|}} \sum_{s \in C_{n,\eps}} |\LCS(x,s)| \leq {\frac {2 \cdot 1200/\eps^3}{\log_{1/\eps^4} \ell}} \cdot n + (2/3 + \eps/6)n = (2/3 + o(1))n.
\end{equation}
As this is true for all $x \in \zo^n$, we conclude that 
\begin{equation}
\label{eq:avglcs}
\AvgLCS(C_{n,\eps}) \leq (2/3 + o(1))n.
\end{equation}

\subsubsection{Relating $L_{t(n),\worst}(\delta(n),n)$ to $\AvgLCS({\cal S})$}

The following claim will allow us to upper bound the performance of any algorithm that receives $t = t(n)$ traces at deletion rate $\delta(n)$ by (essentially) $\AvgLCS({\cal S})$ for any set ${\cal S}$ satisfying certain properties.


\begin{claim} \label{claim:mixture}
Let $\ell$ be such that both $\ell$ and $n^\ell$ are at least $n^c$ for some positive constant $c$.
Let $\calS_\ell = \{s^{(1)}_\ell, s^{(2)}_\ell, \dots, s^{(m)}_\ell\} \subset \zo^\ell$ be a set of $\ell$-bit strings. 
Define the set of $n$-bit strings $\calS_n = \{s^{(1)}_n, s^{(2)}_n, \cdots, s^{(m)}_n\} \subset \zo^n$, where each 
    string $s^{(u)}_n$ is constructed by concatenating $n/\ell$ copies of $s^{(u)}_\ell$. 
    For each $u \in [m]$ let ${\cal M}^{(u)}$ be a mixture of $\ell$-bit strings with the following properties:
    
    \begin{enumerate}
    
    \item With probability $1-o(1)$, a random $\ell$-bit string $\bz$ drawn from ${\cal M}^{(u)}$ has $\LCS(\bz,s^{(u)}_\ell) \geq (1-o(1))\ell$;
    
    \item For each $u \in [m]$ the $k$-deck $\deck_k({\cal M}^{(u)})$ is the same.
    
    \end{enumerate}
    
    Let $\rho(n) = 1 - \delta(n)$. Then we have
    \[
        L_{t(n),\worst}(\delta(n),n) \leq t(n) \cdot \ell^k \cdot \rho(n)^{k+1} \cdot n^2 + \AvgLCS(\calS_n) + o(n).
    \]
\end{claim}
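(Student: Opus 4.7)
My plan is to apply Yao's minimax principle with an adversarial distribution over source strings which makes the hidden index $u$ essentially invisible to any algorithm that receives few traces. Specifically, I would define $\mathcal{D}$ over $\zo^n$ by first drawing $u\in[m]$ uniformly and then setting $\bx$ to be the concatenation of $n/\ell$ independent draws from ${\cal M}^{(u)}$. Since $L_{t,\worst}(\delta,n) \le \max_A \Ex_{\bx\sim\mathcal{D}}\Ex_{\bar{\by}}[|\LCS(A(\bar{\by}),\bx)|]$, it suffices to upper-bound this expectation.

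Let $E$ be the event that some one of the $t$ traces retains more than $k$ bits from some single length-$\ell$ segment of $\bx$. A union bound across the $tn/\ell$ (trace, segment) pairs, together with the standard tail estimate $\Pr[\Bin(\ell,\rho) > k] \le \ell^{k+1}\rho^{k+1}$, gives $\Pr[E] \le tn\ell^{k}\rho^{k+1}$. On the complementary event $\overline{E}$, every trace reveals at most $k$ positions per segment, so (using Property~2, and the fact that for distributions over strings of a common length $\ell$ the $k$-deck determines the $j$-decks for all $j\le k$) the conditional distribution of $\bar{\by}$ on $\overline{E}$ does not depend on $u$; equivalently, the posterior of $u$ given $(\bar{\by},\overline{E})$ remains uniform on $[m]$, so that the algorithm's output $\hat{x}:=A(\bar{\by})$ is conditionally independent of $u$.

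Next I bound $|\LCS(\hat{x},\bx)|$ by comparing with $s^{(u)}_n$: the $1$-Lipschitz property of $|\LCS(\hat{x},\cdot)|$ under edit distance, combined with $d_{\mathrm{edit}}(\bx,s^{(u)}_n)=2(n-|\LCS(\bx,s^{(u)}_n)|)$, yields $|\LCS(\hat{x},\bx)| \le |\LCS(\hat{x},s^{(u)}_n)| + 2(n-|\LCS(\bx,s^{(u)}_n)|)$. Property~1 together with a Chernoff bound across the $n/\ell$ independent segments of $\bx$ gives $\Ex[n-|\LCS(\bx,s^{(u)}_n)|] = o(n)$. Averaging over the (conditionally uniform) $u$ with $\hat{x}$ held fixed, the definition of $\AvgLCS$ gives $\tfrac{1}{m}\sum_u |\LCS(\hat{x},s^{(u)}_n)| \le \AvgLCS(\calS_n)$. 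Combining, the expected LCS is at most $\Pr[E]\cdot n + \AvgLCS(\calS_n) + o(n) \le tn^2\ell^{k}\rho^{k+1} + \AvgLCS(\calS_n) + o(n)$, which is the claimed bound.

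The main obstacle I anticipate is the $k$-deck indistinguishability step on $\overline{E}$: the $t$ traces from the same segment are correlated through the shared $\bz_v\sim{\cal M}^{(u)}$, so the joint trace distribution from a segment involves ``product-of-deck'' statistics of the form $\Ex[\prod_j \deck_{|y_j|}(\bz)_{y_j}]$ rather than a single $k$-deck entry. One must verify that on $\overline{E}$ (where each $|y_j|\le k$) this joint statistic is determined by the low-order decks, so that matching $k$-decks of the ${\cal M}^{(u)}$'s really does suffice to make the conditional distribution of the whole trace tuple $\bar{\by}$ identical across $u$; the cleanest path is to condition first on the deletion masks, observe that the distribution of the revealed bit values then depends only on the marginal of ${\cal M}^{(u)}$ at the at-most-$k$ surviving positions, and then identify this marginal as a linear functional of the $j$-decks for $j\le k$.
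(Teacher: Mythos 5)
Your proposal follows the same route as the paper: an adversarial mixture $\mathcal{D}$ over source strings (draw $\bu$ uniform, concatenate $n/\ell$ i.i.d.\ draws from $\mathcal{M}^{(\bu)}$), the same bad event $E$ with the same union bound $\Pr[E]\le t\,n\,\ell^k\rho^{k+1}$, the same use of $k$-deck indistinguishability to argue the output $\hat x$ is (conditionally) independent of $\bu$, and the same edit-distance triangle inequality plus a Chernoff bound on Property~1 to move from $|\LCS(\hat x,\bx)|$ to $\frac1m\sum_u|\LCS(\hat x,s^{(u)}_n)|\le\AvgLCS(\calS_n)$. The only cosmetic difference is that the paper hands the algorithm the auxiliary per-trace, per-segment bit counts, whereas you condition on the deletion masks; these are two phrasings of the same reduction.

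The subtlety you flag at the end is real, and your proposed fix is not quite complete. On $\overline{E}$, each individual trace sees at most $k$ positions of a segment, but the $t$ traces collectively may see up to $tk$ positions of the shared draw $\bz_v\sim\mathcal{M}^{(u)}$. After conditioning on the masks and averaging over the uniform choice of masks consistent with the bit counts, the joint law of the observations from a single segment is a linear functional of quantities like $\E_{\bz}\bigl[\deck_{k_1}(\bz)_{a_1}\cdots\deck_{k_t}(\bz)_{a_t}\bigr]$, which expands into indicator events depending on up to $k_1+\dots+k_t$ positions of $\bz$; this is governed by the $(k_1+\dots+k_t)$-deck, not the $k$-deck. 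So matching $k$-decks alone does not make the posterior of $\bu$ uniform when $t>1$. To make the argument go through you must either (a) redefine $E$ so that the \emph{total} number of surviving bits from each segment across all $t$ traces is at most $k$ --- this changes $\Pr[E]$ to roughly $(n/\ell)\binom{t\ell}{k+1}\rho^{k+1}\le (t\rho)^{k+1}\ell^k n$, i.e.\ an extra $t^k$ factor, which is still harmless in the application to \Cref{thm:worst-case-small-rho-upper-bound} --- or (b) require the stronger hypothesis that the $tk$-decks of the $\mathcal{M}^{(u)}$ coincide, which would in turn require modifying the construction in \Cref{lem:good-mixture}. Option (a) is the cleaner patch. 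Also, be careful with the phrase ``the marginal of $\mathcal{M}^{(u)}$ at the at-most-$k$ surviving positions'': the marginal at a \emph{fixed} position set is not a deck statistic; it is only after averaging over the uniformly random position set (of a given size) that one recovers a quantity proportional to a deck entry. That averaging is exactly why conditioning on bit counts (rather than on the full masks) is the right level at which to invoke the $k$-deck hypothesis.
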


\begin{proof}
    Let ${\cal M}$ be the following distribution over $n$-bit strings: to draw $\bx \sim {\cal M}$, first draw a uniform $\bu \sim [m]$, then independently draw $n/\ell$ many $\ell$-bit strings $\bx^{(1)},\dots,\bx^{(n/\ell)} \sim {\cal M}^{(\bu)}$, and concatenate them to yield $\bx = \bx^{(1)} \cdots \bx^{(n/\ell)}.$

    Let $A$ be any algorithm that takes as input $t := t(n)$ traces $\by^{(1)}, \cdots, \by^{(t)}$ of $\bx$, and outputs an $n$-bit hypothesis string. We suppose that in addition to the input traces, $A$ is also told, for each trace, how many bits of the trace come from each of the $n/\ell$ segments of the source string; we upper bound $L_{t(n),\worst}(\delta(n),n)$ by upper bounding the performance of any algorithm that also receives this extra auxiliary information.
    
    The probability that any of the $n/\ell$ many $\ell$-bit segments of $\bx$ has at least $k + 1$
    bits from it surviving into any of the $t$ traces is at most $t \cdot (n/\ell) \cdot (\rho(n) \ell)^{k+1} = t n \ell^k \rho(n)^{k+1}$. In this case we trivially upper bound the $\LCS$ between $\bx$ and the output of $A$ by $n$.

    Otherwise, at most $k$ bits survive from each segment in each trace. The distribution of these bits is the same, regardless of the random $\bu \sim [m]$ chosen in the construction of $\bx$. This follows from property (2.) above and the easily observable fact that if the $k$-deck $\deck_k({\cal M}^{(u)})$ is the same for each $u \sim [m]$, then the $k'$-deck $\deck_{k'}({\cal M}^{(u)})$ is also the same for each $u \sim [m]$, for all $k' \leq k$. In this case, the optimal string for algorithm $A$ to output is the $n$-bit string $x^*$ that achieves $\AvgLCS(\calS_n)$.

    By property (1.) above and a standard Chernoff bound, with $1-o(1)$ probability we have that a $1-o(1)$ fraction of the $n/\ell$ strings $\bx^{(1)}, \cdots, \bx^{(n/\ell)}$ drawn from $\calM^{(\bu)}$ satisfy $|\LCS(\bx^{(i)}, s^{(\bu)}_\ell| \geq (1-o(1))\ell$, 
    so with $1-o(1)$ probability the string $\bx=\bx^{(1)}\cdots \bx^{(n/\ell)}$ has $|\LCS(\bx,s^{(\bu)}_n)| \geq (1-o(1))n$.
    

    Recall that $x^* \in \zo^n$ is the string achieving $\AvgLCS(\calS_n)$. We will use the triangle inequality on the edit distance $d_{edit}(z,z') := n - |\LCS(z,z')|$ (which is a metric). We have
    \[
      d_{edit}(\bx, x^*) \geq d_{edit}(x^*, s^{(\bu)}_n) - d_{edit}(s^{(\bu)}_n, \bx).
    \]
    Rewriting this inequality in terms of $\LCS$, we have
    \[
       | \LCS(\bx, x^*)| \leq |\LCS(x^*, s^{(\bu)}_n)| + n - |\LCS(\bx,s^{(\bu)}_n)| \leq |\LCS(x^*, s^{(\bu)}_n)| + o(n).
    \]
    We emphasize that $\bx$ is a function of the random $\bu \sim [m]$, while $x^*$ is independent of $\bu$. Taking expectation over $\bu$, we get that
    \[
        \E_{\bu}[|\LCS(\bx, x^*)|] \leq \AvgLCS(\calS_n) + o(n).
    \]
    Combining the two cases above, we obtain the lemma.
\end{proof}

\begin{proof}[Proof of \Cref{thm:worst-case-small-rho-upper-bound} using \Cref{claim:mixture}] In \Cref{lem:good-mixture} below, for any constant $k \in \N$, we will exhibit a set of mixtures ${\cal M}^{(u)}$ satisfying the properties in \Cref{claim:mixture}, with the set $\calS_n$ being $C_{n,\eps}$. Choosing $k = 4/\kappa$ (constant), $\ell = n^{1/k}$, and using the fact that $\AvgLCS(C_{n,\eps}) \leq (2/3 + o(1))n$ (recall \Cref{eq:avglcs}), we conclude that
    \begin{align*}
        L_{t(n),\worst}(\delta(n),n)
        &\leq t(n) \, \ell^k \, \rho(n)^{k+1} \, n^2 + \AvgLCS(\calS_n) + o(n)\\
        &\leq (t(n) \rho(n))^{k+1} n^3 + (2/3 + o(1))n\\
        &\leq n^{3 - (k+1)\kappa} + (2/3 + o(1))n\\
        &\leq (2/3 + o(1))n. \qedhere
    \end{align*}
\end{proof}

\ignore{

}


\subsection{Construction of $\calM$ satisfying \Cref{claim:mixture} for any constant $k$}

Let ${\cal S}_\ell$ be the set of $m:={\frac 1 2} \log_{1/\eps^4} \ell$ many $\ell$-bit strings

\[
{\cal S}_\ell = \left\{
(0^{1/\eps^{4u}}1^{1/\eps^{4u}})^{\ell/(2/\eps^{4u})}
\right\},
\quad u=1,\dots,m.
\]

Fix any positive integer $k$ (which should be thought of as a fixed constant, while $\ell \to \infty$).
In this section we construct a collection of $m$ mixtures ${\cal M}^{(1)},\dots,{\cal M}^{(m)}$, where each ${\cal M}^{(u)}$ is a mixture of $\ell$-bit strings, which meet the conditions required by \Cref{claim:mixture}. 
In more detail, we show that the mixtures ${\cal M}^{(1)},\dots,{\cal M}^{(m)}$ that we construct satisfy the following:

\begin{lemma} \label{lem:good-mixture}
For each $u \in [m]$ we have the following:

\begin{enumerate}

\item  With probability $1-o_\ell(1)$, a random $\ell$-bit string $\bz$ drawn from ${\cal M}^{(u)}$ has 
\[
  \abs[\bigg]{\LCS \Bigl(\bz,(0^{1/\eps^{4u}}1^{1/\eps^{4u}})^{\ell/(2/\eps^{4u})} \Bigr)} \geq (1-o_\ell(1))\ell.
\]

\item For each $u \in [m]$ the $k$-deck $\deck_k({\cal M}^{(u)})$ is the same.

\end{enumerate}

\end{lemma}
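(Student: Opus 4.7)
The plan is to construct each $\mathcal{M}^{(u)}$ as a small-perturbation mixture
\[
\mathcal{M}^{(u)} \;:=\; (1-\eta_\ell)\,\delta_{A_u} \;+\; \eta_\ell\, \mathcal{R}^{(u)},
\]
where $A_u := (0^{1/\eps^{4u}}1^{1/\eps^{4u}})^{\ell/(2/\eps^{4u})}$, $\delta_{A_u}$ is the point mass on $A_u$, the corrector distribution $\mathcal{R}^{(u)}$ over $\zo^\ell$ is to be chosen, and $\eta_\ell = o_\ell(1)$. Property~(1) is then immediate: a draw $\bz \sim \mathcal{M}^{(u)}$ equals $A_u$ with probability $1 - \eta_\ell = 1-o_\ell(1)$, in which case $|\LCS(\bz, A_u)| = \ell$. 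All the work lies in designing $\mathcal{R}^{(u)}$ so that Property~(2) holds.

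Writing $D^{(u)} := \deck_k(A_u) \in \mathbb{R}^{2^k}$ and using linearity of $\deck_k$ under mixing, the condition $\deck_k(\mathcal{M}^{(u)}) = W$ (a common target for all $u \in [m]$) is equivalent to
\[
\deck_k(\mathcal{R}^{(u)}) \;=\; \frac{W - (1-\eta_\ell)\,D^{(u)}}{\eta_\ell},
\]
and such an $\mathcal{R}^{(u)}$ exists (by Carathéodory) iff the right-hand side lies in the convex polytope $\mathcal{C}_\ell := \mathrm{conv}\{\deck_k(z) : z \in \zo^\ell\}$; note that the sum-to-$\binom{\ell}{k}$ constraint on the right-hand side is automatic because both $W$ and every $D^{(u)}$ share this sum. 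I would then pick $W$ to be a deeply-interior point of $\mathcal{C}_\ell$---for instance $W := \deck_k(\mathcal{U}_\ell)$, the $k$-deck of the uniform distribution on $\zo^\ell$, for which $W_y = \binom{\ell}{k}/2^k$ for every $y \in \zo^k$---and argue, using the bit-flip symmetry of $\mathcal{C}_\ell$ together with the availability of extremal vertices $\deck_k(0^\ell), \deck_k(1^\ell), \deck_k((01)^{\ell/2})$, that $W$ is bounded away from $\partial\mathcal{C}_\ell$ in the specific directions $W - D^{(u)}$ by a ``macroscopic'' distance $\Omega(\binom{\ell}{k})$.

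The main obstacle, and the step I would write out in the most detail, is bounding $\|W - D^{(u)}\|_\infty$. I would handle this by a direct expansion of $D^{(u)}_y$ via the $2r_u$-periodic block structure of $A_u$ (with $r_u := 1/\eps^{4u}$): for each fixed pattern $y \in \zo^k$, $D^{(u)}_y$ is a polynomial in $r_u$ and $N_u := \ell/(2r_u)$ whose leading $\Theta(\ell^k)$ coefficient is ``universal''---depending only on the alternation pattern of $y$ and not on $u$---and matches $W_y$ up to a $(1+o_\ell(1))$ factor. As a sanity check, for $y=01$ one computes $D^{(u)}_{01} = \ell^2/8 + \ell r_u/4$, to be compared with $W_{01} = \binom{\ell}{2}/4 \approx \ell^2/8$. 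Because $m = \tfrac{1}{2}\log_{1/\eps^4}\ell$ forces $r_u \le \sqrt{\ell}$ for every $u \in [m]$, the routine calculation yields $\|W - D^{(u)}\|_\infty = O(\ell^{k-1/2}) = o(\binom{\ell}{k})$. Combined with the interior-radius bound, any $\eta_\ell$ satisfying $\eta_\ell^{-1}\|W - D^{(u)}\|_\infty \ll \binom{\ell}{k}$---for instance $\eta_\ell := \ell^{-1/4}$---places the required corrector deck $W + \eta_\ell^{-1}(1-\eta_\ell)(W - D^{(u)})$ inside $\mathcal{C}_\ell$, at which point Carathéodory furnishes $\mathcal{R}^{(u)}$ as an explicit convex combination of at most $2^k+1$ point masses on strings in $\zo^\ell$, completing the construction.
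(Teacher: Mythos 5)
Your proposal takes a genuinely different route from the paper's. The paper builds $\mathcal{M}^{(u)}$ as a mixture supported on exactly $k$ \emph{periodic} strings $x^{(r_0)},\dots,x^{(r_{k-1})}$ with $r_0 = 1/\eps^{4u}$ and $r_1,\dots,r_{k-1}$ independent of $u$; the key algebraic fact is that $\deck_k(x^{(t)})_y$ is a degree-$(k-1)$ polynomial in $t$ (Claim~\ref{claim:k-deck-xt}), so that matching $k$-decks across $u$ reduces to solving a $k\times k$ Vandermonde system, and the explicit Vandermonde inverse formula is used to verify that the weights are nonnegative and that $p_0 = 1-o_\ell(1)$. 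You instead perturb the point mass on $A_u$ by an abstract corrector $\mathcal{R}^{(u)}$ and invoke Carath\'eodory to assert the existence of $\mathcal{R}^{(u)}$ once the target corrector deck $W + \tfrac{1-\eta_\ell}{\eta_\ell}(W-D^{(u)})$ is shown to lie in the convex polytope $\mathcal{C}_\ell = \mathrm{conv}\{\deck_k(z):z\in\zo^\ell\}$. This is cleaner conceptually and would actually give a quantitatively stronger $p_0 = 1 - \ell^{-1/4}$ (versus the paper's $1 - O(1/\log\log\ell)$), if it went through.

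However, the step you state as needing to be ``written out in the most detail'' is the wrong one. Your bound $\|W - D^{(u)}\|_\infty = O(\ell^{k-1/2})$ is correct and checkable; the real gap is the geometric claim that $W$ is $\Omega\bigl(\binom{\ell}{k}\bigr)$ away from $\partial\mathcal{C}_\ell$ in the directions $W - D^{(u)}$, which you assert via ``bit-flip symmetry'' and ``availability of extremal vertices'' without proof. This is not a soft fact. The polytope $\mathcal{C}_\ell$ is highly anisotropic: the entries of $\deck_k(z)$ are strongly correlated, and the barycenter of a polytope can be very close to its boundary in specific directions even when the diameter is large (e.g.\ for $k=2$ you cannot simultaneously move $D_{00}$ and $D_{11}$ far up and leave $D_{01}+D_{10}$ fixed, because $\binom{a}{2} + \binom{\ell-a}{2}$ and $a(\ell-a)$ are jointly constrained). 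Bit-flip/reversal symmetry fixes both $W$ and each $D^{(u)}$, hence fixes the direction $W-D^{(u)}$, and says nothing about how far $\mathcal{C}_\ell$ extends along that fixed direction. The extremal vertices you list ($0^\ell$, $1^\ell$, $(01)^{\ell/2}$) are also not obviously the ones needed: $\deck_2((01)^{\ell/2})$ is in fact very close to $W$, not a far extreme point. So the load-bearing claim of your argument is unsupported, and I do not see an easy way to supply it for general $k$. The paper's construction avoids this entirely by restricting the support of the corrector to periodic strings $x^{(r_j)}$, whose decks have the explicit low-degree polynomial structure that makes both feasibility (positivity of $p_j$) and $u$-independence tractable via the Vandermonde inverse.
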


\noindent {\bf The mixture ${\cal M}^{(u)}$.}
Fix $u \in [m]$ and let $r_0 := 1/\eps^{4u}$.
For $t$ dividing $\ell$, let $x^{(t)}$ denote the $\ell$-bit string
\[
x^{(t)} := (0^t 1^t)^{\ell/(2t)},
\]
so $x^{(r_0)}$ is the $u$-th string $(0^{1/\eps^{4u}}1^{1/\eps^{4u}})^{\ell/(2/\eps^{4u})}$ in ${\cal S}_\ell$.
The mixture ${\cal M}^{(u)}$ will be supported on $k$ strings in $\zo^\ell$,
\[
\supp({\cal M}^{(u)}) = \{x^{(r_0)}, x^{(r_1)}, \dots, x^{(r_{k-1})}\},
\]
where $r_1,\dots,r_{k-1}$ are values that will satisfy $r_0 \ll r_1 \ll \cdots \ll r_{k-1} \ll \ell$  and that will be specified later.
The mixing weight $p_j$ on the $j$-th string $x^{(r_j)}$ will be chosen so that  each $p_j \geq 0$,
 $\sum_{j=0}^{k-1} p_j = 1$ (so ${\cal M}^{(u)}$ is indeed a valid distribution), and 
$p_0 = 1-o_\ell(1)$, which gives item (1) of \Cref{lem:good-mixture}.

To achieve item (2) of \Cref{lem:good-mixture} we will carefully choose the weights $p_0,\dots,p_{k-1}$ so that for each $y \in \zo^k$, the value
$\deck_k({\cal M}^{(u)})_y$ is a function only of $\ell$ (and in particular is independent of the value of $u$).  Towards this end, let us begin to analyze the $k$-deck of a single string $x^{(t)}$. The following is easily verified:

\begin{claim} \label{claim:k-deck-xt}
Fix any $y \in \zo^k$. The value $\deck_k(x^{(t)})_y$ is of the form
\begin{equation} \label{eq:dkxty}
\deck_k(x^{(t)})_y = \sum_{i=0}^{k-1} t^i f_{y,i}(\ell) 
\end{equation}
for some polynomials $f_{y,0}(\ell),\dots,f_{y,k-1}(\ell)$.
\end{claim}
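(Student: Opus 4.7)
\medskip

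\noindent \textbf{Proof plan for \Cref{claim:k-deck-xt}.} The plan is to count $\deck_k(x^{(t)})_y$ combinatorially and then observe that the resulting expression, viewed as a function of $\ell$ and $t$, is a polynomial with a special structure. Recall that $x^{(t)}$ is the concatenation of $N := \ell/(2t)$ consecutive blocks, each equal to $0^t 1^t$. Any occurrence of $y = y_1 \cdots y_k$ as a length-$k$ subsequence of $x^{(t)}$ picks out $k$ positions, which I will group according to the block they fall into. If these $k$ positions touch exactly $s$ distinct blocks (with $1 \le s \le k$), the partition of $[k]$ induced by this grouping is a partition into $s$ consecutive nonempty intervals $I_1 < I_2 < \cdots < I_s$.

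For each such partition $\pi = (I_1, \ldots, I_s)$, the number of occurrences contributed is $0$ unless, for every $h \in [s]$, the restriction $y|_{I_h}$ has the form $0^{a_h} 1^{b_h}$ with $a_h + b_h = |I_h|$ (since each block is $0^t 1^t$). When this compatibility condition holds, the contribution factors as
\[
\binom{N}{s} \cdot \prod_{h=1}^{s} \binom{t}{a_h}\binom{t}{b_h},
\]
where the first factor chooses which $s$ of the $N$ blocks are used and the product counts the ways of selecting the positions within each chosen block. Summing over all compatible partitions $\pi$ yields a closed form for $\deck_k(x^{(t)})_y$.

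The main point is then to analyze this closed form as a polynomial in the two variables $\ell$ and $t$. Writing
\[
\binom{N}{s} \;=\; \binom{\ell/(2t)}{s} \;=\; \frac{1}{s!\,(2t)^s}\,\ell \cdot \prod_{i=1}^{s-1}(\ell - 2it),
\]
we see that the numerator is divisible by $\ell$ and, expanded in $\ell$, its nonzero monomials are of the form $\ell^j t^{s-j}$ with $1 \le j \le s$. On the other side, for each $h \in [s]$ we have $|I_h| \ge 1$, so $\binom{t}{a_h}\binom{t}{b_h}$ is a polynomial in $t$ of degree $a_h + b_h$ whose lowest-order term has degree at least $1$; hence $\prod_{h=1}^{s} \binom{t}{a_h}\binom{t}{b_h}$ is a polynomial in $t$ of degree at most $k$ that is divisible by $t^s$. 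The $t^s$ factor cancels the $(2t)^s$ in the denominator, so the whole product $\binom{N}{s}\prod_h \binom{t}{a_h}\binom{t}{b_h}$ is a genuine polynomial in $\ell$ and $t$ in which every monomial $\ell^j t^i$ satisfies $j \ge 1$ and $i + j \le k$, forcing $i \le k-1$. Summing over partitions preserves this property, so $\deck_k(x^{(t)})_y = \sum_{i=0}^{k-1} t^i f_{y,i}(\ell)$ for suitable polynomials $f_{y,i}$ in $\ell$, as claimed.

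The only real obstacle is the two cancellation bookkeeping facts used above: (i) $\binom{\ell/(2t)}{s}$ is divisible by $\ell$ (because the factor $i = 0$ in the numerator contributes $\ell$), and (ii) $\prod_h \binom{t}{a_h}\binom{t}{b_h}$ is divisible by $t^s$ (because each of the $s$ factors contributes at least one power of $t$). Once these two observations are in hand, the degree bound $i \le k-1$ and the polynomiality of the $f_{y,i}$ follow automatically.
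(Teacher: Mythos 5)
Your proof is correct, and it actually supplies an argument that the paper omits: the paper states \Cref{claim:k-deck-xt} with only the remark that it ``is easily verified'' and gives no proof. Your decomposition is the natural one: grouping the $k$ chosen positions in $x^{(t)}$ by the $0^t 1^t$-block they land in, parameterizing by the induced partition of $[k]$ into $s$ consecutive nonempty intervals, and writing the contribution of each compatible partition as $\binom{N}{s}\prod_{h}\binom{t}{a_h}\binom{t}{b_h}$. The two divisibility observations carry the argument: the factor of $\ell$ extracted from $\binom{\ell/(2t)}{s}$ guarantees every surviving monomial $\ell^j t^i$ has $j\ge 1$, and the factor of $t^s$ from $\prod_h\binom{t}{a_h}\binom{t}{b_h}$ (each $|I_h|\ge 1$, so each factor contributes at least one power of $t$) exactly cancels the $(2t)^s$ in the denominator, yielding a genuine polynomial of total degree $\le k$ in which $i\le k-1$. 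One small remark worth noting in the write-up: the decomposition $y\!\mid_{I_h} = 0^{a_h}1^{b_h}$ is unique whenever it exists (including the all-$0$ and all-$1$ cases), so there is no overcounting across the choices of $(a_h,b_h)$. With that noted, the claim follows and the degree bound $i\le k-1$ is tight (e.g., already for $s=1$).
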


From \Cref{eq:dkxty} we immediately get that
\begin{equation}
\label{eq:dkMjy}
\deck_k({\cal M}^{(i)})_y = \sum_{j=0}^{k-1} p_j \left(\sum_{i=0}^{k-1} r_j^i f_{y,i}(\ell) \right)
=
\sum_{i=0}^{k-1} \left( \sum_{j=0}^{k-1} p_j r_j^i \right) f_{y,i}(\ell).
\end{equation}
Recall that $r_0=1/\eps^{4u}$, so clearly $r_0$ depends on $u$, and that we have yet to choose $r_1,\dots,r_{k-1}$. \Cref{eq:dkMjy}  leads us to consider the following linear system:
\begin{equation}
\label{eq:lin-system}
V p = b
\end{equation}
where $V$ is the $k \times k$ Vandermonde matrix whose rows and columns we index by $i \in \{0,\dots,k-1\}$ and $j \in \{0,\dots,k-1\}$,
\begin{equation} \label{eq:vandermonde}
V_{i,j} = r_j^i,
\end{equation}
and $p$ and $b$ are $k \times 1$ 
column vectors
\newcommand{\myvec}[1]{\ensuremath{\begin{pmatrix}#1\end{pmatrix}}}
\[
p = 
\myvec{p_0 \\ \vdots \\p_{k-1}},
\quad \quad \quad
b = 
\myvec{b_0\\ \vdots \\ b_{k-1}}.
\]
We will prove the following claim:

\begin{claim}
\label{claim:indep-of-u}
There are values $b_0,\dots,b_{k-1}$ 
that have no dependence on $u$ so that the solution
\begin{equation}
\label{eq:solution}
p = V^{-1} b
\end{equation}
to the system (\ref{eq:lin-system}) has each $p_j \geq 0$, $\sum_{j=0}^{k-1} p_j = 1$, and $p_0 = 1 - o_\ell(1).$ 
\end{claim}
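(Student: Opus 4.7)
The plan is to take $b$ to be the moment vector of a fixed probability measure $\mu=\sum_{v=0}^{k-1}q_v\delta_{\sigma_v}$ supported on $k$ ``anchor points'' $\sigma_0<\sigma_1<\cdots<\sigma_{k-1}$ chosen entirely independently of $u$, and, for each $u$, to set $r_j:=\sigma_j$ for every $j\geq 1$ (so that only $r_0$ depends on $u$). With this choice, $b_i:=\sum_v q_v\sigma_v^i$ is manifestly independent of $u$, and the condition $Vp=b$ says exactly that the measure $\sum_j p_j\delta_{r_j}$ matches $\mu$ in its first $k-1$ power moments (and in total mass).

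To solve for $p$ explicitly, observe that the signed measure $\nu:=\sum_j p_j\delta_{r_j}-\mu$ is supported on the $(k{+}1)$-point set $\{r_0,\sigma_0,\sigma_1,\dots,\sigma_{k-1}\}$ and has vanishing $0$-th through $(k{-}1)$-th moments. The kernel of the corresponding $k\times(k{+}1)$ Vandermonde is one-dimensional, spanned by the vector whose entry at a point $x$ is $\prod_{x'\neq x}(x-x')^{-1}$, and the known coefficient $-q_0$ of $\nu$ at $\sigma_0$ pins down the free scalar. After simplification this gives closed forms
\[
  p_0=q_0\prod_{j=1}^{k-1}\frac{\sigma_0-\sigma_j}{r_0-\sigma_j}, \qquad p_j-q_j=q_0\cdot\frac{\sigma_0-r_0}{\sigma_j-r_0}\cdot\prod_{\substack{j'=1\\j'\neq j}}^{k-1}\frac{\sigma_0-\sigma_{j'}}{\sigma_j-\sigma_{j'}}\ \text{ for } j\geq 1.
\]

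For the concrete choice, let $\sigma_0$ be any value slightly exceeding $\max_u r_0=1/\eps^{4m}=\sqrt{\ell}$, and let $\sigma_j:=\sigma_0\cdot N^j$ for $j\geq 1$, where $N:=\lfloor\ell^{1/(2k)}\rfloor$ (this guarantees $r_0\ll r_1\ll\cdots\ll r_{k-1}\ll\ell$ as required; rounding is used to ensure each $\sigma_j$ divides $\ell/2$, which we may arrange e.g.~by taking $\ell$ to be a power of $2$ WLOG). Set $q_0:=1-(k-1)\ell^{-1/(4k)}$ and $q_j:=\ell^{-1/(4k)}$ for $j\geq 1$, so $\sum_v q_v=1$. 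Each factor of the $p_0$-formula equals
\[
  1-\frac{\sigma_0-r_0}{\sigma_j-r_0}=1-O(1/N^j),
\]
so $p_0=q_0(1-O(1/N))=1-o_\ell(1)$, and $\sum_j p_j=1$ is immediate from $b_0=\sum_v q_v=1$.

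The main obstacle, and the step where care is needed, is establishing $p_j\geq 0$ for every $j\geq 1$. A direct sign count on the Lagrange product yields $\operatorname{sign}(p_j-q_j)=(-1)^{j+1}$, so for odd $j$ one immediately has $p_j\geq q_j>0$. For even $j\geq 2$ one must dominate $|p_j-q_j|$ by $q_j$; using $\sigma_0-r_0\leq\sigma_0$, $\sigma_j-r_0\gtrsim\sigma_j$, and $|\sigma_j-\sigma_{j'}|\gtrsim\max(\sigma_j,\sigma_{j'})$ for $j'\neq j$, a telescoping estimate along the geometric sequence $\sigma_j=\sigma_0 N^j$ gives $|p_j-q_j|=O(q_0\cdot N^{-j(j+1)/2})=O(\ell^{-j(j+1)/(4k)})$, which is a factor of $\ell^{-(j^2+j-1)/(4k)}=o_\ell(1)$ smaller than $q_j=\ell^{-1/(4k)}$. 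Hence $p_j=q_j(1-o(1))>0$ for all sufficiently large $\ell$, completing the verification.
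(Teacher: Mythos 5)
Your proof is correct but takes a genuinely different route from the paper's. The paper invokes the explicit inverse-Vandermonde formula (in terms of elementary symmetric polynomials), chooses $b_j$ of the engineered form $\prod_{s=1}^j r_s/(\log\log\ell)^j$ so that in the alternating sum $\sum_j(-1)^{i+j}e^{(i)}_{k-1-j}b_j$ the diagonal term $j=i$ dominates, and sets $r_j=\ell^{2/3}(\log\ell)^j$. You instead reinterpret $Vp=b$ as a \emph{moment-matching} condition: fix a reference $k$-point probability measure $\mu=\sum_v q_v\delta_{\sigma_v}$, take $b$ to be its moment vector (trivially $u$-independent), and ask for a second measure on $\{r_0,\sigma_1,\dots,\sigma_{k-1}\}$ matching $\mu$'s first $k-1$ moments. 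The signed difference $\nu$ then lies in the one-dimensional kernel of a $k\times(k{+}1)$ Vandermonde, which Lagrange interpolation diagonalizes; this yields closed forms for $p_0$ and $p_j-q_j$, makes the sign pattern $\operatorname{sign}(p_j-q_j)=(-1)^{j+1}$ transparent (so odd-$j$ positivity is immediate), and reduces even-$j$ positivity to the estimate $|p_j-q_j|=O(N^{-j(j+1)/2})\ll q_j$, which your geometric choice $\sigma_j=\sigma_0N^j$ handles by telescoping. This is arguably cleaner than the paper's term-by-term comparison, and it exposes the degree of freedom (any reference measure with separated atoms and small tail weight works) rather than one specific weight sequence. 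One shared loose end: both proofs need each period $2r_j$ to divide $\ell$, and neither fully addresses this (the paper's $r_j=\ell^{2/3}(\log\ell)^j$ is generically not even an integer); your remark about rounding/power-of-two $\ell$ is as adequate as what the paper does. So there is no gap relative to the published argument.
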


By \Cref{eq:dkMjy} this means that the $k$-deck 
\[
\deck_k({\cal M}^{(i)})_y  = \sum_{i=0}^{k-1} b_i f_{y,i}(\ell),\quad \quad
y \in \zo^k,
\]
has no dependence on $u$, giving item (2) of \Cref{lem:good-mixture} and completing its proof.
It thus remains to prove \Cref{claim:indep-of-u}.

\subsubsection{Proof of \Cref{claim:indep-of-u}}

We start by recalling an explicit formula for the inverse of a Vandermonde matrix:

\begin{fact} [\cite{Turner66}]
Let $V=(V_{ij})_{i,j \in \{0,\dots,k-1\}}$ be the $k \times k$ Vandermonde matrix $V_{i,j} = r_j^i$ as specified in \Cref{eq:vandermonde}.
Let $e^{(i)}_j$ be the $j$-th elementary symmetric polynomial on the $k-1$ variables $r_0,\dots,r_{i-1},r_{i+1},\dots,r_{k-1}$.  
Then the inverse matrix $V^{-1}$ is given by
\begin{equation} 
\label{eq:inverse}
V^{-1}_{i,j} = {\frac {(-1)^{j} \cdot e^{(i)}_{k-1-j}}
{\prod_{s \neq i} (r_s - r_i)}}.
\end{equation}
\end{fact}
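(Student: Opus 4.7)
The plan is to prove the formula by interpreting $V$ through the lens of polynomial interpolation. The convention $V_{i,j}=r_j^i$ makes $V^{T}$ the standard ``evaluation'' matrix: if $a=(a_0,\dots,a_{k-1})^{T}$ is the coefficient vector of a polynomial $p(x)=\sum_{j} a_j x^j$ of degree at most $k-1$, then $(V^{T}a)_i = \sum_j r_i^{\,j} a_j = p(r_i)$. In particular, since the $r_j$'s are (implicitly assumed) distinct, $V^{T}$ is invertible, hence so is $V$, and $(V^{T})^{-1}$ is precisely the ``coefficient-extraction'' operator that, given evaluations $y_i = p(r_i)$, returns the coefficients $a_j$ of the unique interpolating polynomial $p$. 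Since $V^{-1} = ((V^{T})^{-1})^{T}$, entry $(V^{-1})_{i,j}$ equals entry $((V^{T})^{-1})_{j,i}$, i.e.\ the coefficient of $x^{j}$ in the map sending $y_i \mapsto a_j$.

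First I would invoke Lagrange interpolation to write this coefficient-extraction map explicitly. Given evaluations $y_0,\dots,y_{k-1}$, the interpolating polynomial is
\[
p(x) = \sum_{i=0}^{k-1} y_i L_i(x), \qquad L_i(x) := \prod_{t \neq i} \frac{x - r_t}{r_i - r_t},
\]
so the coefficient $a_j$ of $x^j$ in $p$ is the linear combination $\sum_i y_i \cdot [x^j]\,L_i(x)$. Reading off the coefficients of the transpose-inverse from this expression gives $(V^{-1})_{i,j} = [x^j]\,L_i(x)$; the rest of the proof is to identify this coefficient with the claimed formula.

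Next I would expand the numerator $\prod_{t \neq i}(x - r_t)$ using the definition of the elementary symmetric polynomials in the $k-1$ variables $\{r_t : t \neq i\}$: for $0 \leq j \leq k-1$,
\[
[x^j]\, \prod_{t \neq i}(x - r_t) = (-1)^{k-1-j}\, e^{(i)}_{k-1-j}.
\]
Dividing by the scalar $\prod_{t \neq i}(r_i - r_t)$ gives
\[
(V^{-1})_{i,j} = \frac{(-1)^{k-1-j}\, e^{(i)}_{k-1-j}}{\prod_{t \neq i}(r_i - r_t)}.
\]
The final step reconciles this with the stated form. Using $\prod_{t \neq i}(r_i - r_t) = (-1)^{k-1}\prod_{t \neq i}(r_t - r_i)$, the two factors of $(-1)^{k-1}$ cancel and leave
\[
(V^{-1})_{i,j} = \frac{(-1)^{j}\, e^{(i)}_{k-1-j}}{\prod_{s \neq i}(r_s - r_i)},
\]
which is exactly \Cref{eq:inverse}.

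I do not expect a significant obstacle: the argument is a standard textbook derivation, and the only thing one has to be careful about is the sign bookkeeping in the last step, since the paper writes the denominator as $\prod_{s \neq i}(r_s - r_i)$ whereas the Lagrange formula naturally produces $\prod_{t \neq i}(r_i - r_t)$. A sanity check for $k=2$ (where $V = \begin{pmatrix}1 & 1\\ r_0 & r_1\end{pmatrix}$ and one can compare both formulas with the explicit $2\times 2$ inverse) would confirm the sign conventions before writing up the general case.
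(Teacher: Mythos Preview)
Your derivation via Lagrange interpolation is correct and the sign bookkeeping checks out. Note, however, that the paper does not actually prove this statement: it is presented as a cited \emph{Fact} (from \cite{Turner66}) and used without proof, so there is no ``paper's own proof'' to compare against. Your argument is the standard textbook proof of this classical formula.
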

It will be convenient for us to rewrite \Cref{eq:inverse} in a way which makes the denominator always positive (recall that we will have $r_0 \ll r_1 \ll \cdots \ll r_{k-1}$. Doing this, we obtain
\begin{equation}
\label{eq:inverse2}
V^{-1}_{i,j} = {\frac {(-1)^{i+j} \cdot e^{(i)}_{k-1-j}}
{
\left(
\prod_{0 \leq s \leq i-1} (r_i - r_s)
\right)
\cdot
\left(
\prod_{i+1 \leq s \leq k-1} (r_s - r_i)
\right)
}
},
\end{equation}
and consequently we have that
\begin{equation}
\label{eq:p}
p = V^{-1} b, \quad \text{where for $i = 0,\dots,k-1$,} \quad
p_i = 
{\frac
{
\sum_{j=0}^{k-1} (-1)^{j+i} e^{(i)}_{k-1-j} \cdot b_j
}
{
\left(
\prod_{0 \leq s \leq i-1} (r_i - r_s)
\right)
\cdot
\left(
\prod_{i+1 \leq s \leq k-1} (r_s - r_i)
\right)
}
}
\end{equation}
(note that the denominator of \Cref{eq:p} is independent of $j$).

We now choose $r_j, b_j: j \in [k-1]$ appropriately and show that the $p_i$'s satisfy the conditions in \Cref{claim:indep-of-u}.
Recall that $m = \frac{1}{2} \log_{1/\eps^4} \ell$, so $r_0 = 1/\eps^{4u} \leq1/\eps^{4m} = \sqrt{\ell}$.
For $j \in [k-1]$, we define
\[
  b_j := \frac{1}{(\log\log\ell)^j} \cdot \prod_{s=1}^j r_j \quad\text{and}\quad 
  r_j := \ell^{2/3} \cdot (\log \ell)^j 
\]
(observe that $r_0$ is already fixed to $1/\eps^{4u}$, and that the first row of the Vandermonde matrix system of equations is all-1's, which means that  $b_0 = p_0 + \cdots + p_{k-1}=1$).
These settings are chosen so that in the summation in the numerator of the expression for $p_i$ in \Cref{eq:p}, the $(j=i)$-th term, which is always positive, dominates the sum of the rest of the terms in magnitude.
Specifically, we will show that for $j < i$, the quantity $e^{(i)}_{k-1-j} b_j$ is at most $O((\log\ell)^{-1}) \cdot \prod_{s=1}^{k-1} r_s$ and for $j \ge i$, we have $e^{(i)}_{k-1-j} b_j = (\log\log\ell)^{-j} (1 + o_\ell(1)) \prod_{s=1}^{k-1} r_s$.
So the numerator is at least $(\log\log \ell)^{-i} (1 - o_\ell(1)) \prod_{s=1}^{k-1} r_s \ge 0$, and thus the $p_i$'s are positive because the denominator is positive.
Moreover, the denominator of $p_0$ in \Cref{eq:p} is at most $\prod_{s=1}^{k-1} r_s$.
This shows $p_0 = 1 - o_\ell(1)$.

We now give the full calculation.
First observe that for every $0 \le j \le k-2$ and every $S \subseteq [k-1]$ of size $k-1-j$ not equal to $\{j+1, \ldots, k-1\}$, we have
\begin{equation}
\prod_{s \in S} r_s
  \leq \ell^{2\abs{S}/3} \cdot (\log \ell)^{\sum_{s \in S} s}
  \le \ell^{2\abs{S}/3} \cdot (\log \ell)^{(\sum_{s=j+1}^{k-1} j) - 1}
  = \frac{1}{\log \ell} \prod_{s=j+1}^{k-1} r_s .
  \label{eq:lowerorder}
\end{equation}
So for $j < i$ we have $i \in \{j+1, \ldots, k-1\}$ and so the (positive) quantity $  e_{k-1-j}^{(i)} \cdot b_j$ is ``small,'' i.e. at most $O((\log\ell)^{-1}) \cdot \prod_{s=1}^{k-1} r_s$:
\begin{align}
  e_{k-1-j}^{(i)} \cdot b_j
  &= \biggl( \sum_{\substack{S \subseteq \{0,\dots,k-1\} \setminus \{i\} \\ \abs{S} = k-1-j}} \prod_{s \in S} r_s \biggr) \cdot \biggl( (\log\log\ell)^{-j} \prod_{s=1}^j r_j \biggr) \nonumber \\
  &\le \biggl( \binom{k-1}{j} \frac{1}{\log \ell} \prod_{s=j+1}^{k-1} r_s \biggr) \cdot \biggr( (\log\log\ell)^{-j} \prod_{s=1}^j r_s \biggr) \nonumber \\
  &\le (\log\log\ell)^{-j} \cdot \biggl( \prod_{s=1}^{k-1} r_s \biggr) \cdot \frac{k^j}{\log \ell}
\end{align}
For $j = i$ the (positive) quantity $  e_{k-1-j}^{(i)} \cdot b_j$ is ``large,'' i.e. at least $(\log\log\ell)^{-i} \prod_{s=1}^{k-1} r_s$;
more precisely,
we have
\begin{align}
  e_{k-1-j}^{(i)} \cdot b_j
  &= \biggl( \sum_{\substack{S \subseteq \{0,\dots,k-1\} \setminus \{i\} \\ \abs{S} = k-1-j}} \prod_{s \in S} r_s \biggr) \cdot \biggl( (\log\log\ell)^{-j} \prod_{s=1}^j r_j \biggr) \nonumber \\
  &\ge \biggl(  \prod_{s=j+1}^{k-1} r_s \biggr) \cdot \biggl( (\log\log\ell)^{-j} \prod_{s=1}^j r_j \biggr) \nonumber \\
  &=  (\log\log\ell)^{-j} \cdot \biggl( \prod_{s=1}^{k-1} r_s \biggr).
\end{align}
For $j > i$  the (positive) quantity $  e_{k-1-j}^{(i)} \cdot b_j$ is again ``small,'' i.e. $(\log\log\ell)^{-j} (1 + o_\ell(1)) \prod_{s=1}^{k-1} r_s$:
\begin{align}
  e_{k-1-j}^{(i)} \cdot b_j
  &= \biggl( \prod_{s=j+1}^{k-1} r_s + \sum_{\substack{S \subseteq \{0,\dots,k-1\} \setminus \{i\} \\ \abs{S} = k-1-j \\ S \ne \{j+1, \ldots, k-1\}}} \prod_{s \in S} r_s \biggr) \cdot \biggl( (\log\log\ell)^{-j} \prod_{s=1}^j r_j \biggr) \nonumber \\
  &\le  \Biggl( \biggl( \prod_{s=j+1}^{k-1} r_s \biggr) \cdot \biggl( 1 + \binom{k-1}{j} \frac{1}{\log\ell} \biggr) \Biggr) \cdot \biggl( (\log\log\ell)^{-j} \prod_{s=1}^j r_j \biggr) \nonumber \\
  &= (\log\log\ell)^{-j} \cdot \biggl( \prod_{s=1}^{k-1} r_s \biggr) \cdot \biggl(1 + \frac{k^j}{\log \ell} \biggr) .
\end{align}
Therefore for every $i \in \{0, \ldots, k-1\}$, the alternating sum is dominated by the contribution from $j=i$: more precisely, we have
\begin{align*}
  \sum_{j=0}^{k-1} (-1)^{j+i} e_{k-1-j}^{(i)} \cdot b_j
  &\ge \biggl( \prod_{s=1}^{k-1} r_s \biggr) \Biggl(\sum_{j=i}^{k-1} (-1)^{i+j} (\log\log\ell)^{-j} - \frac{1}{\log\ell} \sum_{\substack{0 \le j \le k-1 \\ j \ne i}} \left(\frac{k}{\log\log\ell}\right)^j \Biggr) \\
  &\ge \biggl( \prod_{s=1}^{k-1} r_s \biggr) \biggl( (\log\log\ell)^{-i} \left(1 - \frac{1}{\log\log\ell} \right) - \frac{2}{\log \ell} \biggr) \\
  &\ge 0 .
\end{align*}
Since, as noted earlier, the denominator of \Cref{eq:p} is positive, this shows that $p_i \ge 0$ for every  $i \in \{0, \ldots, k-1\}$.
Moreover, we have
\begin{align*}
  p_0
  &= \frac{ \sum_{j=0}^{k-1} (-1)^{j} e_{k-1-j}^{(i)} \cdot b_j }{ \prod_{s=1}^{k-1} (r_s - r_0) } \\
  &\ge \frac{ \biggl( \prod_{s=1}^{k-1} r_s \biggr) \Bigl( 1 - \frac{1}{\log\log\ell} - \frac{2}{\log\ell} \Bigr) } { \prod_{s=1}^{k-1} r_s } \\
  &\ge 1 - \frac{2}{\log\log\ell}.
\end{align*}
This completes the proof of \Cref{claim:indep-of-u}. \qed




\section{Worst-case one-trace reconstruction, medium and small deletion rate} \label{sec:worst-case-small-delta}

In this section we consider the medium and small deletion rate regime. In particular, throughout this section we suppose that $\delta \le 1/2$. (Note that if $\delta > 1/2$, then the quantity $1-\delta + \delta^2/2 -\delta^3/2 + \delta^4/2 - \delta^5/2$ is less than $2/3$, so the performance guarantee given by \Cref{thm:worst-case-small-delta} is weaker than the guarantee given by \Cref{thm:worst-case-small-rho-informal} / \Cref{thm:worst-case-small-rho}.)

\subsection{An efficient algorithm achieving expected LCS $(1-\delta + \delta^2/2 -\delta^3/2 + \delta^4/2 - \delta^5/2 - o(1))n$}

As mentioned in the introduction, it is very easy for a one-trace algorithm to achieve expected LCS at least $(1-\delta)n$: this can be accomplished simply by having the hypothesis string $\wh{x}$ be any string that contains the input trace $\by$ as a subsequence.  The expected LCS of such a hypothesis string is clearly at least $\E[|\by|]$, which is $(1-\delta)n$ by linearity of expectation.

The following theorem shows how to improve on this naive bound:

\begin{theorem} [Worst-case algorithm, small deletion rate]\label{thm:worst-case-small-delta}
Let $\delta=\delta(n) \leq 1/2$ be the deletion rate. There is an $O(n)$-time (randomized) algorithm \textsc{Small-rate-reconstruct} which is given as input the values $n$ and $\delta$ and a single trace $\by \sim \Del_\delta(x)$, where $x \in \zo^n$ is an unknown and arbitrary source string.
For any $\gamma \leq 1$, \textsc{Small-rate-reconstruct} outputs a hypothesis string $\wh{\bx} \in \zo^n$ which satisfies
\[
\E\big[|\LCS(\wh{\bx},x)|\big] \geq 
    \Bigl( 1 - e^{-\Omega(\gamma^2 n)} \Bigr) \left( 1 - \delta + \frac{\delta^2}{2} - \frac{\delta^3}{2} + \frac{\delta^4}{2} - \frac{\delta^5}{2} \right)n - 3 \gamma n
\]
(so in particular, taking $\omega(1/\sqrt{n})\leq \gamma \leq o(1)$, we get that the expected value of $|\LCS(\wh{\bx},x)|$ is at least $(1 - \delta + \delta^2/2 - \delta^3/2 + \delta^4/2 - \delta^5/2 - o(1))n$).
\end{theorem}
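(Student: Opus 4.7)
The approach I would take is to analyze a randomized algorithm that interleaves independent uniform random bits into the trace $\by = y_1 \cdots y_m$ using geometrically distributed gap lengths. Concretely: independently draw $H_0, H_1, \ldots, H_m \sim \Geo(1-\delta)-1$ (so $\Pr[H_i = k] = (1-\delta)\delta^k$), draw uniformly random strings $\br_i \in \zo^{H_i}$, and output
\[
  \wh{\bx} := \br_0 \, y_1 \, \br_1 \, y_2 \, \br_2 \cdots y_m \, \br_m,
\]
truncating to length $n$ or padding with further random bits as needed. The intuition is that $\Geo(1-\delta)-1$ is exactly the marginal distribution of the deletion gaps in the channel, so the inserted bits land on plausible ``deletion sites'' of $x$.

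To lower-bound $|\LCS(\wh{\bx}, x)|$ I would exhibit an explicit matching.  Let $s_1 < \cdots < s_m$ be the surviving positions (so $y_i = x_{s_i}$), and let $G_i := s_{i+1} - s_i - 1$ be the number of deleted bits between the $i$-th and $(i+1)$-th surviving bits. The matching has two layers. The \emph{base layer} matches each trace bit $y_i$ in $\wh{\bx}$ with $x_{s_i}$, contributing $m$ edges. The \emph{augmenting layer} works inside each gap: pair the $j$-th random bit of $\br_i$ with $x_{s_i + j}$ for $j = 1, \ldots, \min(G_i, H_i)$, counting a matched edge exactly when these bits agree. This is a valid increasing matching, so
\[
  |\LCS(\wh{\bx}, x)| \;\geq\; m + \sum_{i=1}^{m-1} \sum_{j=1}^{\min(G_i, H_i)} \Indicator{(\br_i)_j = x_{s_i+j}}.
\]

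Taking expectations is the heart of the argument. Each bit $(\br_i)_j$ is independent uniform on $\zo$, so $\E\bigl[\Indicator{(\br_i)_j = x_{s_i+j}} \mid G_i, H_i\bigr] = 1/2$. Using the two-step generative view of the deletion channel from \Cref{sec:preliminaries}, each $G_i$ is marginally $\Geo(1-\delta)-1$, the $G_i$'s are essentially independent (modulo boundary conditioning handled below), and they are independent of the algorithmic gaps $H_i \sim \Geo(1-\delta)-1$. Hence
\[
  \E[\min(G_i, H_i)] = \sum_{k \geq 1} \Pr[G_i \geq k]\Pr[H_i \geq k] = \sum_{k \geq 1} \delta^{2k} = \frac{\delta^2}{1-\delta^2},
\]
and summing across the $m-1 \approx (1-\delta)n$ gaps yields an expected augmentation of $\delta^2 n / (2(1+\delta))$. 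Expanding $1/(1+\delta) = 1 - \delta + \delta^2 - \delta^3 + \delta^4 - \cdots$ and truncating the alternating series just after the negative term $-\delta^3$ (a valid lower bound since the omitted tail $\delta^4/(1+\delta)$ is positive) gives
\[
  \frac{\delta^2}{2(1+\delta)} \;\geq\; \frac{\delta^2}{2}\bigl(1 - \delta + \delta^2 - \delta^3\bigr) \;=\; \frac{\delta^2 - \delta^3 + \delta^4 - \delta^5}{2},
\]
which when added to the $(1-\delta)n$ base-layer contribution produces the target LCS.

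The main technical obstacle will be the concentration bookkeeping. I would apply a standard Chernoff bound to $\bm := |\by|$ to get $\bm = (1-\delta)n \pm \gamma n$, and apply \Cref{claim:neg-binomial} (with $\rho = 1-\delta$, after writing $H_i = (\Geo(1-\delta)) - 1$) to the insertion lengths $\sum_i H_i$ to get $\sum_i H_i = \delta n \pm O(\gamma n)$, each with failure probability $e^{-\Omega(\gamma^2 n)}$. Inside the intersection of these good events: (i) the pre-truncation hypothesis has length $n \pm O(\gamma n)$, so truncation or padding to exactly $n$ destroys at most $O(\gamma n)$ matches; and (ii) although the $G_i$'s are technically constrained to sum to $n - \bm$, a coupling with i.i.d.\ $\Geo(1-\delta)-1$ variables shows that this conditioning perturbs the per-gap expectation by only $o(1)$, so summed across gaps it contributes an additive $O(\gamma n)$ error. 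These error terms fit exactly inside the $-3\gamma n$ slack and the $(1 - e^{-\Omega(\gamma^2 n)})$ multiplier of the claimed bound.
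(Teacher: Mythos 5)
Your proposal is correct and follows essentially the same route as the paper: the algorithm you describe (interleave $\Geo(1-\delta)-1$ uniform random bits between consecutive trace bits, truncate/pad to length $n$) is exactly \textsc{Small-rate-reconstruct}, and your analysis — exhibit the explicit matching that pairs each trace bit $y_i$ with $x_{s_i}$ and then diagonally pairs the first $\min(G_i,H_i)$ gap bits, compute $\E[\min(G_i,H_i)]=\delta^2/(1-\delta^2)$, halve for the random agreement probability, and truncate the alternating series — is the paper's argument verbatim in slightly different notation (your $(G_i,H_i)$ are the paper's $(|\bx^{i+1}|-1,\,|\wh{\bx}'^{i+1}|-1)$, and your $\min(G_i,H_i)$ is the paper's $\bd_{i+1}-1$). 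The one place where the paper is a bit cleaner is the independence bookkeeping: you handle the constraint that the $G_i$'s sum to $n-\bm$ by invoking an unelaborated coupling, whereas the paper pads the source string to an infinite $x_\infty$ with $\ast$-symbols so that the segment lengths $|\bx^i|\sim\Geo(1-\delta)$ are genuinely i.i.d.\ with no sum constraint, which makes the Wald-style step immediate; both ways end up absorbing the error into the same $O(\gamma n)$ slack.
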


\begin{algorithm}[t]
\caption{\textsc{Small-rate-reconstruct}} 
\begin{algorithmic}[1] \label{alg:small-rate-reconstruct}
\State Set $j=1$ and $p_y = 1.$
\State  While $p_y \leq |\by|$ do:\\
~~~~~With probability $1-\delta$ set $\widehat{\bx}_j := \by_{p_y}$ and increment $p_y$; \\ ~~~~~with the remaining $\delta$ probability set $\widehat{\bx}_j$ to a uniform bit from $\zo$.\\
~~~~~Set $j := j+1.$
\State If $|\wh{\bx}|<n$ then append $0^{n-|\wh{\bx}|}$ to $\wh{\bx}$, and if $|\wh{\bx}|>n$ then delete bits $\wh{\bx}_{n+1},\dots$ from $\wh{\bx}.$
\State Output the $n$-bit string $\wh{\bx}.$
\end{algorithmic}
\end{algorithm} 

\noindent {\bf Intuition.}
The algorithm \textsc{Small-rate-reconstruct} is given as \Cref{alg:small-rate-reconstruct}.
To analyze the algorithm it is convenient to consider the string $\wh{\bx}'$ which is $\wh{\bx}$ immediately before Step~5 is performed (i.e. with no padding with 0s or deletion applied).  We will show later that $\wh{\bx}'$ is with high probability ``very close to $\wh{\bx}$'', so we can chiefly reason about $\wh{\bx}'$ and take care of the minor difference between $\wh{\bx}'$ and $\wh{\bx}$ at the end of the argument.

We first observe that $\wh{\bx}'$ clearly contains $\by$ as a subsequence.
The main idea of the proof is that a non-negligible fraction of the times that Step~3 is performed, one or more uniform random bits from $\zo$ will be placed in between consecutive bits $\by_{p_y}$ and $\by_{p_y+1}$ in creating the hypothesis string $\wh{\bx}'$ exactly when one or more bits of $x$ were deleted in between $\by_{p_y}$ and $\by_{p_y+1}$ in the creation of the trace $\by$. Each time this happens there is a 1/2 chance that at least one ``additional bit'' beyond the subsequence $\by$ of $\wh{\bx}'$ can be incorporated into a matching between $x$ and $\wh{\bx}'$.
This is the source of the extra $(\delta^2/2 - \delta^3/2 + \delta^4/2 - \delta^5/2)n$ in the LCS bound. {Intuitively, the number of additional bits between every $\by_{p_y}$ and $\by_{p_y + 1}$ in each of $x$ and $\wh{\bx}'$ is distributed according to $\Geo(1-\delta) - 1$, so there are at least $\min\{\Geo(1-\delta)-1, \Geo(1-\delta)-1\} = \Geo(1-\delta^2)-1$ many additional bits between $\by_{p_y}$ and $\by_{p_y+1}$ in \emph{both} of $x$ and $\wh{\bx}'$, and there is a 1/2 chance each uniform additional $\zo$ bit in $\wh{\bx}'$ matches an additional bit in $x$}.

We now provide formal details. 
\begin{proof}[Proof of \Cref{thm:worst-case-small-delta}]
Let $x\in \{0,1\}^n$ be the unknown source string.
  Consider appending infinitely many copies of a special symbol $\ast$ to $x$ to form an infinite string $x_\infty$.
  We sample an infinite subsequence $\by_\infty$ of $x_\infty$ by the following infinite process:
  For each $j=1,2,\dots$, we sample a prefix $\bx^j$ of $x_\infty$ of length $\abs{\bx^j} \sim \Geo(1-\delta)$, then output the last bit of $\bx^j$ as the $j$-th bit of $\by_\infty$ and delete the prefix $\bx^j$ from $x_\infty$ before moving on to the next value of $j$.

  Note that the longest prefix of $\by_\infty$ that does not contain any $x_i : i > n$ is identically distributed as the trace $\by \sim \Del_\delta(x)$.
  Equivalently, the concatenation of the last bit in each of $\bx^1, \ldots, \bx^{\abs{\by}}$ is identically distributed as $\by \sim \Del_\delta(x)$.

  Let $\bT=\{t_1 < \cdots < t_{|\by|}\}$ be the set of $|\by|$ many locations $j \in \{1,2,\dots,\}$ such that $\wh{\bx}'_j$ was set to $\by_{p_y}$ in some execution of Step~3 of \textsc{Small-rate-reconstruct}.
  Note that the elements of $\bT$ are the indices of the $\bT$race bits in $\wh{\bx}'$ and that $t_i$ is the index such that $\wh{\bx}'_{t_i}$ was set to $\by_{i}$ in Step~3 of the execution of \textsc{Small-rate-reconstruct}.
  Let $$\hat{\bx}'^1 := \hat{\bx}'_{[1:t_1]}, \hat{\bx}'^2 := \hat{\bx}'_{[t_1+1:t_2]}, \ldots, \wh{\bx}'^{\abs{\by}} := \wh{\bx}'_{[t_{\abs{\by}-1}+1:t_{\abs{\by}}]}.$$
  Observe that for each $i \in [\abs{\by}]$, both $\bx^i$ and $\wh{\bx}'^i$ are identically distributed.
  In particular, their lengths $\abs{\bx^i}$ and $\abs{\wh{\bx}'^i}$ are distributed according to $\Geo(1-\delta)$, and so the minimum of both lengths, i.e.  $\bd_i := \min\{\abs{\bx^i}, \abs{\wh{\bx}'^i}\}$, is distributed according to $\Geo(1-\delta^2)$.
  Moreover, the last bits in $\bx^i$ and $\wh{\bx}'^i$ are equal to $\by_i$, and the rest of the bits in $\wh{\bx}^i$ are uniform.
  
  For each $i \in [\abs{\by}]$, since the length-$(\bd_i-1)$ prefix of $\wh{\bx}'^i$ is random, in expectation (over the randomness of $\wh{\bx}'^i$) it agrees with the length-$(\bd_i-1)$ prefix of $\bx^i$ on $(\bd_i-1)/2$ of the bits.
  Also, the last bit of both $\bx^i$ and $\wh{\bx}'^i$ are the same.
  Hence, we have
  \[
    \E_{\wh{\bx}'^i} \Bigl[ \abs{\LCS(\bx^i, \wh{\bx}^i)} \Bigr] \ge (\bd_i-1)/2 + 1.
  \]
  Observe that the concatenation of $\bx^i : i\in [\abs{\by}]$ is a prefix of $x$, because the last bit of $\bx^{\abs{\by}}$ is the last bit of $\by \sim \Del_\delta(x)$, and the concatenation of $\wh{\bx}'^i : i \in [\abs{\by}]$ is exactly $\wh{\bx}'$.
  Thus,
  \[
    \E_{\wh{\bx}'} \Bigl[ \abs{\LCS(x, \wh{\bx}')} \Bigr]
    \ge \sum_{i=1}^{\abs{\by}} \Bigl[ \E_{\bx^i, \wh{\bx}'^i} \abs{\LCS(\bx^i,\wh{\bx}'^i)} \Bigr] \\
    \ge \sum_{i=1}^{\abs{\by}} \Bigl( (\bd_i-1)/2 + 1 \Bigr)
    = \abs{\by} + \sum_{i=1}^{\abs{\by}} (\bd_i-1)/2 .
  \]
  Since $\bd_i \sim \Geo(1-\delta^2)$, we have $\E[\bd_i-1] = \frac{1}{1 - \delta^2} - 1 = \frac{\delta^2}{1 - \delta^2}$.
  So taking expectation over $\abs{\by}$, we obtain 
  \begin{align*}
    \E \Bigl[ \abs{\LCS(x, \wh{\bx}')} \Bigr]
    &\ge \E\bigl[ \abs{\by} \bigr] + \E \bigl[ \abs{\by} \bigr] \frac{\delta^2}{2(1-\delta^2)} \\
    &= (1-\delta) n \cdot \left(1 + \frac{\delta^2}{2 (1-\delta^2)} \right) \\
    &= \left( 1 - \delta + \frac{\delta^2}{2(1 - \delta^2)} - \frac{\delta^3}{2(1-\delta^2)} \right) n \\
    &\ge \left( 1 - \delta + \frac{\delta^2}{2} - \frac{\delta^3}{2} + \frac{\delta^4}{2} - \frac{\delta^5}{2} \right) n .
  \end{align*}
  To finish the proof, we relate $\E[\abs{\LCS(\wh{\bx}, x)}]$ to $\E[\abs{\LCS(\wh{\bx}', x)}]$.
  We observe that
  \[
  |\LCS(\wh{\bx},x)| \geq \max\{0, |\LCS(\wh{\bx}',x)| - \bk\},
  \]
  where $\bk$ is the number of bits $\wh{\bx}_{n+1},\dots$ deleted from $\wh{\bx}$ in Step~5 of \textsc{Small-rate-reconstruct} if bits are deleted in that step (and $\bk=0$ otherwise).  So it remains only to show that with high probability $\bk$ is small.  
  
  Recall that the value of $|\by|$ is distributed as $\Bin(n,1-\delta)$, and given a particular outcome of the value of $|\by|$, the number of bits deleted in Step~5 is distributed as $\min\{0,\bG_1 + \cdots + \bG_{|\by|}-n\}$ where the $\bG_i$'s are independent geometric random variables with each $\bG_i \sim \Geo(\rho)$ (recall that $\rho=1-\delta$).
  We will use two tail bounds: first, by a multiplicative
  Chernoff bound, we have 
  
  \begin{claim} \label{claim:by-not-too-long}
  For $\gamma \leq 1$, we have
  $\Pr[|\by| \ge (1+\gamma) (1-\delta) n] \leq e^{-\Omega(\gamma^2 n)}.$
  \end{claim}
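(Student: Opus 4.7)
\textbf{Proof proposal for Claim \ref{claim:by-not-too-long}.} The plan is to observe that $|\by|$ is exactly a binomial random variable and then apply a standard multiplicative Chernoff bound. Recall that a trace $\by \sim \Del_\delta(x)$ is generated by independently retaining each of the $n$ bits of $x$ with probability $\rho = 1-\delta$, so $|\by| \sim \Bin(n, 1-\delta)$, and in particular $\E[|\by|] = (1-\delta)n =: \mu$.

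Apply the standard multiplicative Chernoff upper tail bound: for any $\gamma \in [0,1]$,
\[
\Pr\bigl[|\by| \ge (1+\gamma)\mu\bigr] \le \exp\!\left(-\frac{\gamma^2 \mu}{3}\right) = \exp\!\left(-\frac{\gamma^2 (1-\delta) n}{3}\right).
\]
Since we are in the regime $\delta \le 1/2$ (as stipulated at the start of Section~\ref{sec:worst-case-small-delta}), we have $1-\delta \ge 1/2$, and hence the right-hand side is at most $\exp(-\gamma^2 n/6) = e^{-\Omega(\gamma^2 n)}$, as claimed.

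There is essentially no obstacle here; the only thing to be mindful of is that the $\Omega(\cdot)$ in the exponent implicitly absorbs the factor $(1-\delta)/3$, which is a constant bounded away from zero throughout this section. (Should a later application need the bound for $\delta$ close to $1$, one would instead write the conclusion as $e^{-\Omega(\gamma^2 \rho n)}$, matching the form used elsewhere in the paper, but this is not required here.)
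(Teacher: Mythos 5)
Your proof is correct and takes exactly the approach the paper itself uses: the paper simply states ``by a multiplicative Chernoff bound'' without writing out the details, and you have correctly filled them in, including the (necessary) observation that $\delta \le 1/2$ is assumed in this section so that $(1-\delta)n \ge n/2$ and the bound simplifies to $e^{-\Omega(\gamma^2 n)}$.
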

  
  
  The second tail bound shows that $\bG_1 + \cdots + \bG_{|\by|}$ is unlikely to be much larger than $n$:

  \begin{claim} \label{claim:geom-not-too-big}
    Fix an outcome of $|\by|$ such that $|\by| \le (1+\gamma)(1-\delta)n$,
    where $\gamma \leq 1.$
  Then $\Pr[\bG_1 + \cdots + \bG_{|\by|} \geq (1+3 \gamma)n] \leq e^{-\Omega(\gamma^2 n)}.$
  \end{claim}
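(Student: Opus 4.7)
The plan is to follow the same coupling strategy used in the proof of \Cref{claim:neg-binomial} earlier in the paper: rewrite the tail event on the sum of geometric random variables as a lower-tail event on a Binomial random variable, and then apply a standard multiplicative Chernoff bound.

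Concretely, I would start from the identity
\[
\sum_{i=1}^{|\by|} \bG_i \ge N \quad \Longleftrightarrow \quad \Bin(N, \rho) < |\by|,
\]
which follows from interpreting $\sum \bG_i$ as the number of Bernoulli$(\rho)$ trials needed to see $|\by|$ successes. Setting $N := (1+3\gamma) n$, so that $\E[\Bin(N,\rho)] = (1+3\gamma)(1-\delta) n$, and using the hypothesis $|\by| \le (1+\gamma)(1-\delta) n$, I get
\[
|\by| \le \frac{1+\gamma}{1+3\gamma} \cdot \E[\Bin(N,\rho)] = \left(1 - \frac{2\gamma}{1+3\gamma}\right) \E[\Bin(N,\rho)].
\]
For $\gamma \le 1$ the relative slack is $\Theta(\gamma)$, so applying the standard multiplicative Chernoff lower-tail bound to $\Bin(N,\rho)$ gives
\[
\Pr\left[\sum_{i=1}^{|\by|} \bG_i \ge (1+3\gamma) n\right] = \Pr\bigl[\Bin(N,\rho) < |\by|\bigr] \le e^{-\Omega(\gamma^2 N \rho)}.
\]

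The final step is to convert the exponent $\gamma^2 N \rho$ into $\gamma^2 n$. This is where I would invoke the standing hypothesis of \Cref{sec:worst-case-small-delta} that $\delta \le 1/2$, so $\rho = 1 - \delta \ge 1/2$; combined with $N \ge n$ this yields $N \rho = \Omega(n)$ and hence the desired $e^{-\Omega(\gamma^2 n)}$ bound.

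There is no real obstacle here; the whole argument is a routine Chernoff calculation parallel to the proof of \Cref{claim:neg-binomial}. The only point requiring a little care is ensuring the exponent scales with $n$ and not merely with $|\by|$ or $N\rho$, which is why the assumption $\delta \le 1/2$ is used at the end to replace $\rho$ by a constant.
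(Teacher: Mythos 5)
Your proof is correct and is essentially the paper's argument: both couple the geometric sum to a Binomial lower-tail event and apply a standard multiplicative Chernoff bound, and both invoke the standing assumption $\delta\le 1/2$ to convert the factor of $\rho$ in the Chernoff exponent into a constant. The only cosmetic difference is that the paper cites \Cref{claim:neg-binomial} as a black box with $m=|\by|$, which yields an exponent $\Omega(\gamma^2|\by|)$ and then tacitly replaces $|\by|$ by $n$ (harmless, since a smaller $|\by|$ only decreases the probability, but left unstated); your version, by running the coupling with $N=(1+3\gamma)n$ trials on the Binomial side, produces the exponent $\Omega(\gamma^2 N\rho)=\Omega(\gamma^2 n)$ directly and sidesteps that step.
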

  
  \begin{proof}
    Recall that \Cref{claim:neg-binomial} upper bounds the probability that $\bG_1 + \cdots + \bG_{(1+\gamma)(1-\delta)n} \geq \frac{1+\gamma}{1-\delta} \cdot (1+\gamma) (1 - \delta) n = (1+\gamma)^2 n$.
  As $\gamma \le 1$, we get that
  \begin{align*}
  \Pr[\bG_1 + \cdots + \bG_{|\by|} \geq (1+3 \gamma)n] 
  \le e^{-\Omega(\gamma^2 |\by|)}
  = e^{-\Omega(\gamma^2 n)},
  \end{align*}
  where the final inequality is by \Cref{claim:neg-binomial}.
  \end{proof}
  
  Combining \Cref{claim:by-not-too-long} and \Cref{claim:geom-not-too-big}, we get that $\bk \leq 3 \gamma n$ except with probability $e^{-\Omega(\gamma^2 n)}$. Consequently, 
we have that
  \begin{align*}
    \E[|\LCS(\wh{\bx},x)]
    &\ge \Bigl( 1 - e^{-\Omega(\gamma^2 n)} \Bigr) \Bigl( \E[|\LCS(\wh{\bx}',x)] - 3\gamma n \Bigr) \\
    &\geq \Bigl( 1 - e^{-\Omega(\gamma^2 n)} \Bigr) \left( 1 - \delta + \frac{\delta^2}{2} - \frac{\delta^3}{2} + \frac{\delta^4}{2} - \frac{\delta^5}{2} \right)n - 3 \gamma n,
  \end{align*}
  and the theorem is proved.
\end{proof}

\medskip

\ignore{

}

\subsection{Bounds on the performance of one-trace algorithms in the low deletion rate regime}

As noted in the Introduction, it is natural to try to complement \Cref{thm:worst-case-small-delta} by proving an upper bound on the best expected LCS that can be achieved by any one-trace algorithm in the low deletion rate regime. An average-case bound is of course stronger than a worst-case bound of this sort; in \Cref{sec:average-case} we will show that even in the average-case setting, the best achievable LCS given a single trace is at most $(1 - \Theta(\delta))n$.


\section{Average-case one-trace reconstruction, high and medium deletion rate} \label{sec:average-case-small-rho}

In this section we bound the performance of any average-case few-trace algorithm when the retention rate is low.
Given the length of the source string $n$, we write 
  $L_{0,\avg}(n)$ to denote the performance of an optimal zero trace algorithm:
$$
L_{0,\avg}(n)=\max_{z\in \{0,1\}^n} \Ex_{\bx\sim \{0,1\}^n} \big[\abs{\LCS(\bx,z)}\big]
$$
(note that this quantity does not depend on $\delta$), and recall from \Cref{sec:alg-performance-notation} that for $t>0$,
$L_{t,\avg}(\delta,n)$ captures the information-theoretic optimal performance of any $t$-trace algorithm at deletion rate $\delta$.

We show that when $t\rho$ is small, where $t$ is the number of traces and 
  $\rho$ is the retention rate, it is not possible to do much better than $L_{0,\avg}(n)$:

\begin{theorem} [Average-case upper bound on any algorithm, small retention rate] \label{thm:small-rho-average-case}
Let $n$ be the length of the source string, $t$ be the number of traces and 
  $\rho=1-\delta$ be the retention rate.
Then 
\[
L_{0,\avg}(n) \leq L_{t,\avg}(\delta,n) \leq L_{0,\avg}(n) + t \rho \cdot n.
\]
\end{theorem}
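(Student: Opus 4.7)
\medskip
\noindent\textbf{Proof plan for \Cref{thm:small-rho-average-case}.}  The lower bound $L_{0,\avg}(n) \le L_{t,\avg}(\delta,n)$ is immediate: any zero-trace algorithm may be simulated by a $t$-trace algorithm that simply ignores its input. All of the work goes into the upper bound.

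For the upper bound, the plan is to first reduce to a setting where the algorithm is handed strictly more information than it would have in reality, and then to use a coupling between the true source $\bx$ and a uniform independent string. Specifically, let $\bS^{(1)},\ldots,\bS^{(t)} \subseteq [n]$ denote the random sets of surviving positions in each of the $t$ traces, and write $\bS := \bS^{(1)} \cup \cdots \cup \bS^{(t)}$. Providing the algorithm with $\bS^{(1)},\ldots,\bS^{(t)}$ in addition to the traces can only increase the largest achievable expected LCS; after this augmentation, the algorithm's input is equivalent to being told the set $\bS$ together with $\bx_{\bS}$ (the bits of the source string at the surviving positions). Write $\wh{\bx} = A(\bS, \bx_{\bS})$ for the (deterministic WLOG) output of the best such algorithm.

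The key step is a one-bit-Lipschitz argument for LCS: if $x,x' \in \zo^n$ differ in exactly $k$ coordinates, then $\bigl||\LCS(\wh x, x)| - |\LCS(\wh x, x')|\bigr| \le k$ for every $\wh x$. I will apply this with $\bx' $ obtained from $\bx$ by replacing $\bx_{\bS}$ with an independent uniform string $\bu \in \zo^{\bS}$, leaving $\bx_{[n]\setminus\bS}$ untouched. Since $\bx$ is uniform and $\bx_{[n]\setminus\bS}$, $\bx_{\bS}$, $\bu$ are all independent (conditional on $\bS$), the resulting $\bx'$ is uniform on $\zo^n$; moreover $d_H(\bx,\bx') \le |\bS|$, so
\[
\E\bigl[|\LCS(\wh{\bx},\bx)| \,\bigm|\, \bS, \bx_{\bS}\bigr]
\;\le\; \E\bigl[|\LCS(\wh{\bx},\bx')| \,\bigm|\, \bS, \bx_{\bS}\bigr] + |\bS|.
\]
Crucially, conditional on $\bS$, the hypothesis $\wh{\bx}$ depends only on $\bx_{\bS}$ while $\bx'$ depends only on $(\bu, \bx_{[n]\setminus\bS}$), so $\wh{\bx}$ and $\bx'$ are independent given $\bS$, with $\bx'$ uniform on $\zo^n$. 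By the definition of $L_{0,\avg}(n)$ applied pointwise, the first term on the right is at most $L_{0,\avg}(n)$.

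Taking expectations over $\bS$ and $\bx_{\bS}$ yields
\[
\E\bigl[|\LCS(\wh{\bx},\bx)|\bigr]
\;\le\; L_{0,\avg}(n) + \E[|\bS|]
\;\le\; L_{0,\avg}(n) + \sum_{i=1}^t \E[|\bS^{(i)}|]
\;=\; L_{0,\avg}(n) + t\rho n,
\]
where the second inequality is a union bound and $\E[|\bS^{(i)}|] = \rho n$ because each coordinate survives the $i$-th trace independently with probability $\rho$. I do not expect any step to be an obstacle: the Lipschitz property of LCS under Hamming perturbation is standard, and the only subtle point is checking the conditional independence of $\wh{\bx}$ and $\bx'$, which follows immediately once one observes that $\bx_{\bS}$ and $(\bu, \bx_{[n]\setminus\bS})$ are independent given $\bS$.
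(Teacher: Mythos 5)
Your proof is correct, and it uses the same underlying strategy as the paper — augment the algorithm's input, then couple the \emph{a posteriori} distribution of $\bx$ to a uniform string via a bounded Hamming-distance perturbation and invoke the one-bit Lipschitz property of LCS — but the two arguments diverge at the choice of augmentation, which changes the structure of the rest of the proof. The paper augments only with what it calls the \emph{collision information} $C(\bR_1,\dots,\bR_t)$ (which bits are shared across traces, but not the absolute positions), and this yields an \emph{a posteriori} distribution $\calD_y$ where an $m$-bit merged string $y$ sits at a uniformly random $m$-subset and the remaining bits are uniform; the coupling then runs from a fresh uniform $\bx$ to a $\calD_y$-distributed $\bx'$ by overwriting $\bx_{\bS}$ with $y$. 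You instead hand the algorithm the actual retained positions $\bS^{(1)},\dots,\bS^{(t)}$, which collapses the \emph{a posteriori} distribution to ``known bits on $\bS$, uniform elsewhere'' and lets you couple in the opposite direction, from the \emph{a posteriori} $\bx$ to a uniform $\bx'$ by freshly randomizing $\bx_{\bS}$; the conditional-independence-given-$\bS$ observation then does the work that the paper's sum over $\Pr[\bI=I]$ does. Both routes end by bounding $\E[|\bS|]=n(1-\delta^t)\le t\rho n$, and your union-bound version is just as good. One small imprecision worth flagging: the augmented input is not literally \emph{equivalent} to $(\bS,\bx_{\bS})$, since it also contains the partition of $\bS$ into the $t$ sets; but the partition is independent of $\bx$ (the deletion channel is oblivious), so conditional on $\bS$ the random variables $(\bS^{(1)},\dots,\bS^{(t)})$, $\bx_{\bS}$, $\bx_{[n]\setminus\bS}$, and $\bu$ are mutually independent, and your conditional-independence claim between $\wh{\bx}$ and $\bx'$ holds regardless — so the proof is fine, it just shouldn't assert literal equivalence.
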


We note that as a special case of the theorem above, if $t(n)\rho(n)=o(n)$ then the leading constant of what can be achieved with $t$ traces is no better than if no traces were given.  This is in contrast with the worst-case setting, as witnessed by \Cref{thm:worst-case-small-rho-informal} and the discussion immediately preceding it.

The lower bound is immediate so in the rest of this section we prove the upper bound.
We start with some notation for working with multiple traces in the proof of Theorem \ref{thm:small-rho-average-case}.
Given $n$ and $t$, let $R_1,\ldots,R_t\subseteq [n]$ be $t$ subsets which should be viewed as
  locations retained from  an $n$-bit string to obtain its $t$ traces (so if the source string is $x$
  then the traces are $y^{(s)}=x_{R_s}$ for $s=1,\dots,t$). 
Let $i_1<\cdots <i_m$ be an enumeration of indices in $ R_1\cup \cdots \cup R_t$.
We write $C=(C_1,\ldots,C_m)$ to denote the tuple where $C_j$ is the set of those $s\in [t]$ such that $i_j\in R_s$.
We will refer to $C$ as the \emph{collision information} of $R_1,\ldots,R_t$, denoted by
  $C=C(R_1,\ldots,R_t)$.

\begin{example}
Consider the case that $n=8$, $t=3$, the source string $x$ is $11010011$, and the three traces $y^{(1)}=1100,$ $y^{(2)}=110,$ $y^{(3)}=1001$ are obtained from $x$ as shown below:
\begin{center}
\begin{tabular}{r c c c c c c c c}
$x:$ & 1 & 1 & 0 & 1 & 0 & 0 & 1 & 1\\
$y^{(1)}:$ & 1 & 1 &  &  & 0 & 0 &  & \\
$y^{(2)}:$ &  & 1 &  & 1 & 0 &  &  & \\
$y^{(3)}:$ &  & 1 &  &  & 0 & 0 & 1 & 
\end{tabular}
\end{center}
In this case we have that $m=6$, $i_1=1$, $i_2 = 2$, $i_3=4$, $i_4=5$, $i_5=6$, $i_6=7$, and $C_1=\{1\}$, $C_2=\{1,2,3\}$, $C_3 = \{2\}$, $C_4 = \{1,2,3\}$, $C_5=\{1,3\}$, $C_6=\{3\}$.  As discussed in \Cref{obs:uniform} below, given the traces $y^{(1)},y^{(2)},y^{(3)}$ and the collision information $C$, it is possible to reconstruct an $m$-bit subsequence $y$ (in this example $y=111001$) of $x$, but not the $n-m$ bits of $x$ that are missing from $y$ nor the locations of where the $m$ bits of $y$ are situated in $x$.
\end{example}

We will consider average-case algorithms that are given not only $t$ traces $\by_1,\ldots,\by_t\sim \Del_\delta(\bx)$
  of a random string $\bx\sim \{0,1\}^n$ but also the collision information $\bC=C(\bR_1,\ldots,\bR_t)$,
  where $\bR_s\subseteq [n]$ is the set of locations that are retained in obtaining trace $\by^{(s)}$ from $\bx$ for each $s\in [t]$.
Let $L_{t,\avg}^*(\delta,n)$ denote the performance of the best algorithm $A$ under this setting:
$$
L_{t,\avg}^*(\delta,n) :=  \max_{A} \Ex_{\bx \sim \zo^n} \Ex_{\bR_1,\dots,\bR_t\sim \calR_\rho}
\Bigl[ \abs[\big]{ \LCS\bigl(A(\bx_{\bR_1},\ldots,\bx_{\bR_t},\bC),\bx\bigr) } \Bigr],
$$
where we write $\calR_\rho$ to denote   the distribution where $\bR\sim \calR_\rho$ is  
  drawn by including each element in $[n]$ independently with probability $\rho$ and
 $\bC=C(\bR_1,\ldots,\bR_t)$ is the collision information of sets $\bR_1,\ldots,\bR_t$.
It is clear that $L_{t,\avg}^*(\delta,n)\ge L_{t,\avg}(\delta,n)$. We prove Theorem \ref{thm:small-rho-average-case} by showing that 
$$
L_{t,\avg}^*(\delta,n)\le L_{0,\avg}(n)+t\rho\cdot n.
$$

\begin{observation}
[\emph{A posteriori} distribution of a uniform random source string given $t$ traces and their collision information] \label{obs:uniform}
Let $\bx$ be a uniform random source string drawn  from $\zo^n$.
Let $I=(y^{(1)},\ldots,y^{(t)},C)$ be any fixed outcome of $t$ traces from $\Del_\delta(\bx)$ together with
  the collision information of their locations retained. Then the \emph{a posteriori} distribution of $\bx$ given $I$ is as follows:
\begin{flushleft}
\begin{enumerate}
\item Let $C=(C_1,\ldots,C_m)$ for some $m\le n$.
We define an $m$-bit string $y$ as follows. For each $j\in [m]$, pick an $s\in C_j$ and set $\smash{z_j=y^{(s)}_k}$ where 
  $k$ is the number of $j'\le j$ such that $s\in C_{j'}$.
  (Note that the value of $z_j$ does not depend on the choice of $s\in C_j$.)
  
\item The rest of the process is the same as the description of 
  the a posteriori distribution of $\bx$ given one trace $y$ (see Observation \ref{ob:post1}); for convenience we will write $\calD_y$ to denote the distribution of $\bx$ described below. Draw a uniform random $m$-element subset of $[n]$ (say $\bS = \{\bs_1,\dots,\bs_m\}$ where $1 \leq \bs_1 < \cdots < \bs_m \leq n$);

\item For each $j \in [m]$ set $\bx_{\bs_j}=z_j$, and for each $i \notin \bS$ set $\bx_i$ to an independent uniform bit.
\end{enumerate}
\end{flushleft}
\end{observation}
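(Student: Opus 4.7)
The statement directly generalizes Observation~\ref{ob:post1} from one trace to $t$ traces, and the plan is to mimic its structure by decoupling the ``survival pattern'' from the ``bit values'' and then observing that the collision information constrains only the former. Concretely, for each $i \in [n]$ let $\bv_i \in \zo^t$ be the vector whose $s$-th coordinate equals $1$ iff $i \in \bR_s$. Under the generative model, the vectors $\bv_1,\ldots,\bv_n$ are i.i.d.\ (each coordinate Bernoulli$(\rho)$) and are independent of the source string $\bx \sim \zo^n$. The input $I = (y^{(1)},\ldots,y^{(t)},C)$ is a deterministic function of $(\bx,\bv_1,\ldots,\bv_n)$, so it suffices to compute the posterior with respect to these two independent pieces.

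The first main step is to prove that conditional on $C = (C_1,\ldots,C_m)$, the set $\bS := \{i : \bv_i \neq \mathbf{0}\}$ is uniformly distributed over $m$-element subsets of $[n]$. Let $c_j \in \zo^t$ be the characteristic vector of $C_j$. For any fixed $m$-subset $S = \{j_1 < \cdots < j_m\} \subseteq [n]$, the joint event ``$\bS = S$ and $(\bv_{j_1},\ldots,\bv_{j_m}) = (c_1,\ldots,c_m)$'' has probability
\[
\Bigl(\prod_{k=1}^m \Pr[\bv_1 = c_k]\Bigr) \cdot (1-\rho)^{t(n-m)},
\]
which is independent of the particular choice of $S$. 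Renormalizing across the $\binom{n}{m}$ choices of $S$ yields the claimed uniformity.

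The second step handles the bit values inside and outside $\bS$. For each $i_j \in \bS$, every trace $s \in C_j$ reports the same bit, namely $\bx_{i_j}$, so the consistency implicit in the observation's definition of $z_j$ is automatic and $\bx_{i_j} = z_j$ is forced by $I$. For $i \notin \bS$, the prior $\bx_i \sim \zo$ is independent of $(\bv_1,\ldots,\bv_n)$ and $I$ places no constraint on $\bx_i$, so the posterior on $\bx_{[n]\setminus \bS}$ is uniform and independent. Combining the two steps gives exactly the distribution $\calD_y$ described in items~2 and~3 of the observation.

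The only (mild) obstacle I anticipate is making the exchangeability argument in step one fully rigorous: one must be careful that conditioning on $C$---rather than on the full tuple $(\bv_1,\ldots,\bv_n)$---truly leaves the positions uniformly distributed. The product formula above makes this transparent, because $\prod_k \Pr[\bv_1 = c_k]$ depends only on the ordered sequence $(c_1,\ldots,c_m)$ recorded in $C$ and not on the positions $j_1,\ldots,j_m$, so every $m$-subset receives the same posterior weight.
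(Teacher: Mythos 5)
The paper states Observation~\ref{obs:uniform} without an accompanying proof, presenting it as self-evident in the same manner as Observation~\ref{ob:post1}, so there is no proof of record against which to check the student's strategy. That said, your formalization is the natural one: introducing the i.i.d.\ retention-pattern vectors $\bv_i$, observing independence of $(\bv_1,\dots,\bv_n)$ from $\bx$, and proving exchangeability of the position set $\bS$ via the product formula is exactly the right decomposition, and it cleanly handles the worry you raise at the end about conditioning on $C$ versus the full tuple of $\bv_i$'s.

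There is, however, one small but genuine loose end in how you ``combine the two steps.'' Your first step establishes that $\bS$ is uniform conditional on the collision information $C$ alone, but the posterior described in the observation is conditional on the \emph{entire} input $I=(y^{(1)},\dots,y^{(t)},C)$, which also includes the trace bit values. To conclude that $\bS$ is still uniform after further conditioning on the traces, you need (and implicitly use, but do not state) the fact that
\[
\Pr\bigl[\,y^{(1)},\dots,y^{(t)} \mid \bS = S,\ C\,\bigr] = 2^{-m}
\]
for every admissible $S$: given $\bS=S$ and $C$, the traces are a deterministic function of the restriction $\bx_S$, which is uniform over $\zo^m$ independently of $S$, and the observed traces correspond to a unique $m$-bit string $z$. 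Once you note this, Bayes' rule shows $\Pr[\bS=S\mid I]\propto \Pr[\bS=S,\ C]\cdot 2^{-m}$ is still constant in $S$, and your second step then correctly pins down the bits inside and outside $\bS$. This is a one-sentence fix, but as written your argument only justifies uniformity of $\bS$ given $C$, not given $I$, so the combination is not fully supported.
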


We are now ready to prove Theorem \ref{thm:small-rho-average-case}.
\begin{proof}[Proof of Theorem \ref{thm:small-rho-average-case}]
Let $A$ be an optimal algorithm that achieves $L^*_{t,\avg}(\delta,n)$.
Let $\bx\sim \zo^n$~and let $\smash{\bI=(\by^{(1)},\ldots,\by^{(t)},\bC)}$ be the 
  input of $A$, where $A$ outputs $A(\bI)\in \zo^{n}$.
Given an outcome $I$ of $\bI$, we write $y(I)$ to denote the string derived from $\bI$ as in Step 1 of \Cref{obs:uniform}. Then
\begin{align}\label{eq:haha}
L^*_{t,\avg}(\delta,n)=\sum_{I} \Pr\big[\bI=I\big]\cdot \Ex_{\bx\sim \calD_{y(I)}} \Bigl[ \abs[\big]{ \LCS\bigl(A(I),\bx\bigr) } \Bigr],
\end{align}
where the sum is over all possible inputs $I$ of $A$. 
We need the following claim:
\begin{claim}\label{claim:short}
Fix any string $y\in \{0,1\}^m$ for some $m\le n$.
For any string $z\in \{0,1\}^n$, we have 
$$
\Ex_{\bx\sim \calD_y} \Bigl[ \abs[\big]{ \LCS\bigl(z,\bx\bigr) }\Bigr]\le L_{0,\avg}(n)+ m.
$$
\end{claim}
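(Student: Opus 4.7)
The plan is to couple a draw $\bx \sim \calD_y$ with a truly uniform draw $\bx' \sim \zo^n$ so that the two strings differ in at most $m$ positions, and then invoke the $1$-Lipschitz property of $|\LCS(z,\cdot)|$ under single-bit flips.

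First I would unpack $\bx \sim \calD_y$ using the explicit two-step description from \Cref{obs:uniform}: sample a uniform $m$-subset $\bS \subseteq [n]$ with increasing enumeration $\bs_1 < \cdots < \bs_m$, set $\bx_{\bs_j} := y_j$ for each $j \in [m]$, and draw $\bx_i$ as an independent uniform bit for each $i \notin \bS$. Using the same randomness, I define $\bx'$ by keeping $\bx'_i := \bx_i$ for $i \notin \bS$ and resampling $\bx'_{\bs_j}$ as an independent uniform bit for each $j \in [m]$. By construction every coordinate of $\bx'$ is an independent uniform bit, so $\bx' \sim \zo^n$; moreover $\bx$ and $\bx'$ agree outside $\bS$, so they differ in at most $|\bS| = m$ positions.

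Next I would invoke the standard fact that flipping a single bit of one argument changes $|\LCS(z,\cdot)|$ by at most $1$: an optimal common subsequence before the flip either avoids the flipped position (in which case it remains a valid common subsequence afterward) or uses it once (in which case dropping that single matched pair yields a valid common subsequence of length one smaller). Chaining this fact across the at most $m$ coordinates where $\bx$ and $\bx'$ disagree gives the deterministic bound
\[
|\LCS(z, \bx)| \leq |\LCS(z, \bx')| + m
\]
for every outcome of the coupling. Taking expectations and using that $\bx'$ is uniform on $\zo^n$,
\[
\Ex_{\bx \sim \calD_y}\bigl[|\LCS(z, \bx)|\bigr] \leq \Ex_{\bx' \sim \zo^n}\bigl[|\LCS(z, \bx')|\bigr] + m \leq L_{0,\avg}(n) + m,
\]
where the last inequality is the definition of $L_{0,\avg}(n)$ as the maximum over hypothesis strings of $\Ex_{\bx' \sim \zo^n}[|\LCS(z, \bx')|]$. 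No step here presents a real obstacle; the only point worth stating carefully is the $1$-Lipschitz property of LCS under bit flips, which is a short elementary argument. (For the enclosing \Cref{thm:small-rho-average-case}, applying the claim to each $I$ in \eqref{eq:haha} gives $L^*_{t,\avg}(\delta,n) \leq L_{0,\avg}(n) + \Ex[|\bR_1 \cup \cdots \cup \bR_t|] \leq L_{0,\avg}(n) + t\rho n$ by a union bound on the retained-coordinate sets.)
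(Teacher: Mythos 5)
Your proof is essentially the paper's proof: both use a coupling of $\calD_y$ with the uniform distribution on $\zo^n$ under which the two strings agree off a size-$m$ set, and then invoke the $1$-Lipschitz property of $|\LCS(z,\cdot)|$ under bit flips. The paper constructs the coupling in the opposite direction (start from uniform $\bx$ and overwrite the $\bS$-coordinates with $y$ to obtain $\bx'$), but this is an immaterial difference.
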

\begin{proof}
Consider the following coupling $(\bx ,\bx')\sim \calE$ of the uniform distribution over $\zo^n$ and $\calD_y$:
  first draw $\bx\sim \zo^n$; then draw a size-$m$ subset $\bS$ of $[n]$ 
  uniformly at random and replace bits of $\bx$ at $\bS$ by $y$ to obtain $\bx'$.
It is easy to verify that $\calE$ is a coupling of the uniform distribution over $\zo^n$ and $\calD_y$.
For any string $z\in \{0,1\}^n$, we have 
\begin{align*}
\Ex_{\bx'\sim\calD_y}\Bigl[ \abs[\big]{ \LCS\bigl( z,\bx'\bigr) } \Bigr]
&=\Ex_{(\bx,\bx')\sim \calE} \Bigl[ \abs[\big]{ \LCS\bigl(z,\bx'\bigr) } \Bigr] \\
&\le \Ex_{(\bx,\bx')\sim \calE}
  \Big[ \abs[\big] {\LCS\bigl(z,\bx\bigr) }\Bigr]+m
  =\Ex_{\bx\sim \zo^n}\Bigl[ \abs[\big] {\LCS\bigl(z,\bx\bigr) }\Bigr]+m\le L_{0,\avg}(n)+m,
\end{align*}
where the inequality used the fact that $(\bx,\bx')\sim \calE$ always have Hamming distance at most $m$.
\end{proof}

Combining (\ref{eq:haha}) with Claim \ref{claim:short}, we have 
$$
L^*_{t,\avg}(\delta,n)\le \sum_{I} \Pr\big[\bI=I\big]\cdot \Big(L_{0,\avg}(n)+|y|\Big)
=L_{0,\avg}(n)+\bE\big[|\by|\big].
$$
By linearity of expectation, we have 
$$
\bE\big[|\by|\big]=
n (1 - \delta^t) = n(1-(1-\rho)^t) \leq \rho t \cdot n.
$$
This finishes the proof of the theorem.
\end{proof}



\def\cs{\mathsf{CS}}
\section{Average-case one-trace reconstruction, small deletion rate} \label{sec:worst-case-small-rho-informal}

\subsection{An efficient algorithm  improving on the \Cref{thm:worst-case-small-delta} bound}

In this section we show that the algorithm \textsc{Small-rate-reconstruct} of \Cref{thm:worst-case-small-delta}, that was shown to achieve LCS $(1 - \delta + \delta^2/2 - \delta^3/2 + \delta^4/2 - \delta^5/2 - o(1))n$ for worst-case source strings, in fact does better than this for average-case source strings.
The high level idea is that when there are $j$ additional bits between two trace bits in $\bx$ and $k$ additional bits between two trace bits in $\wh{\bx}'$, rather than matching only $\min\{j,k\}/2$ using the randomness of $\wh{\bx}'$, we take advantage of the facts that (i) both the $\bx$-bits and the $\wh{\bx}$-bits are uniform random, and (ii) if $j$ or $k$ is greater than 1, then the expected LCS between a random $j$-bit string and a random $k$-bit string is strictly larger than $\min\{j,k\}/2$, to obtain (on average) a better matching between these two blocks and hence a larger overall matching. This intuition motivates the following definition:
\begin{definition}
For integers $j,k >0$, we define $\cs(j,k)$ as 
\[\cs(j,k) := \Ex_{\bx \sim \zo^j,\bx' \sim \zo^k} \bigl[ |\LCS(\bx,\bx')| \bigr]. \]
Note that by definition, $\cs(j,k) = \cs(k,j)$, i.e., the function $\cs(\cdot, \cdot)$ is symmetric in its arguments. 
\end{definition}
While it is not clear if there is a simple explicit formula for $\cs(j,k)$, we note that a brute force algorithm can be used to compute this function. Further, for the special case of $j=k$, the function $\cs(\cdot, \cdot)$ has been studied previously in the literature~\cite{ChvatalSankoff75}. In particular, for any $d>0$, $\cs(d,d)$ is the same as the function $f(d,2)$ defined in \cite[Section 2]{ChvatalSankoff75}. Further, once $d \rightarrow \infty$, $\cs(d,d)/d$ is the same as the so-called Chvatal--Sankoff constant for the binary alphabet~\cite{kiwi2005expected,  ChvatalSankoff75}. \Cref{Tab:Tcr} gives the values of $\cs(j,k)$ for all $j + k \le 6$. 
\begin{center}
\begin{table}
\centering
\begin{tabular}{|*{6}{c|}}\cline{1-2}
  $k=5$ & $31/32$                 \\ \cline{1-3}
  $k=4$ & $15/16$ & $53/32$             \\ \cline{1-4}
  $k=3$ & $7/8$ & $23/16$ & $29/16$        \\ \cline{1-5}
  $k=2$ & $3/4$ & $9/8$ & $23/16$ & $53/32$    \\ \hline
  $k=1$ & $1/2$ & $3/4$ & $7/8$ & $15/16$ & $31/32$  \\ \hline
 value of $\cs(j,k)$   & $j=1$ & $j=2$ & $j=3$ & $j=4$ & $j=5$  \\ \hline
\end{tabular}
  
\caption{Table for values of $\cs(j,k)$ for $j+k \le 6$.} 
\label{Tab:Tcr}
\end{table}
\end{center}
\ignore{
}
\ignore{
\gray{
\begin{definition}
For an integer $d>0$, let $\cs(d)$ be defined as
\[
  \cs(d) := {\frac {\E_{\bx,\bx' \sim \zo^d}[|\LCS(\bx,\bx')|]}{d}},
\]
(so $\lim_{d \to \infty}\cs(d)$ is the Chvatal--Sankoff constant for the binary alphabet).
\end{definition}

In \cite[Section 2]{ChvatalSankoff75} (where $\cs(d) = f(d,2)/d$ for the $f$ defined in their paper), it was shown that for small values of $d$, we have
\begin{align*}
\cs(1) = {\frac 1 2}, \quad
\cs(2) = {\frac 9 {16}},\quad
\cs(3) = {\frac {29}{48}},
\end{align*}
and so on.
}
\bigskip \bigskip
\blue{Anindya proposes: generalize to 
\[
  \cs(j,k) := \Ex_{\bx \sim \zo^j,\bx' \sim \zo^k}[|\LCS(\bx,\bx')|],\]
  
  He will write down a little table of some values from $i,j=1$ up to $i+j=5$ or $6$.
}
}
\begin{theorem} \label{thm:small-delta-average-case-algorithm}
Let $\delta=\delta(n)$ be the deletion rate. The $O(n)$-time algorithm \textsc{Small-rate-reconstruct} given in \Cref{thm:worst-case-small-delta} has the following property:  for any $\gamma > 0$ and sufficiently large $n$, algorithm \textsc{Small-rate-reconstruct} outputs a hypothesis string $\wh{\bx} \in \zo^n$ satisfying
  \begin{align*}
&\Ex_{\bx \in \zo^n}\Ex_{\by \sim \Del_{\delta}(\bx)} \bigl[ |\LCS(\wh{\bx},\bx)| \bigr] \\ 
&\geq
  \Bigl(1 - e^{-\Omega(\gamma^2 n)} \Bigr) (1 - \delta) \cdot \biggl( 1 + (1-\delta)^2\sum_{j,k=1}^\infty  \cs(j,k) \cdot \delta^{j+k}  \biggr) n  - 3\gamma n. 
  \end{align*}
  \end{theorem}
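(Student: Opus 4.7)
\medskip

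\noindent\textbf{Proof plan for \Cref{thm:small-delta-average-case-algorithm}.}
The plan is to follow the block-by-block analysis from the proof of \Cref{thm:worst-case-small-delta}, and to replace the crude per-block bound used there with a sharper bound that exploits the fact that, conditioned on the trace, the ``non-trace'' bits of $\bx$ are uniformly random. Specifically, I will set up the same infinite process: consider an auxiliary infinite string $\bx_\infty$ which is $\bx \in \zo^n$ followed by infinitely many $\ast$'s, draw i.i.d.\ block lengths $|\bx^1|, |\bx^2|, \ldots \sim \Geo(1-\delta)$, and let $\by$ be the concatenation of the last bits of the first $|\by|$ blocks (which contain no $\ast$). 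Similarly, write $\wh{\bx}'$ as $\wh{\bx}'^1 \circ \wh{\bx}'^2 \circ \cdots \circ \wh{\bx}'^{|\by|}$ where block $\wh{\bx}'^i$ consists of the uniform random bits generated by \textsc{Small-rate-reconstruct} together with the copied trace bit $\by_i$ at the end; its length is also $\Geo(1-\delta)$. Since edges of matchings on disjoint blocks can be concatenated, we have
\[
|\LCS(\bx,\wh{\bx}')| \;\geq\; \sum_{i=1}^{|\by|} |\LCS(\bx^i, \wh{\bx}'^i)|,
\]
and taking expectations we will lower bound the RHS block by block.

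The key per-block calculation is the following. Condition on $|\bx^i| = j$ and $|\wh{\bx}'^i| = k$. Then $\bx^i$ has the form $(\text{uniform } j{-}1 \text{ bits}) \circ \by_i$, with uniformity coming from the fact that the source is uniform (so by \Cref{cor:uniform} the deleted bits are uniform and independent of $\by$), and $\wh{\bx}'^i$ has the form $(\text{uniform } k{-}1 \text{ bits}) \circ \by_i$ by construction of the algorithm. The two uniform random strings are independent of each other and of $\by_i$. Matching the final trace bits and applying the definition of $\cs$ to the prefixes yields
\[
\Ex\bigl[|\LCS(\bx^i,\wh{\bx}'^i)| \bigm| |\bx^i|=j,\,|\wh{\bx}'^i|=k\bigr] \;\geq\; 1 + \cs(j-1,k-1).
\]
Taking expectation over the independent lengths $|\bx^i|, |\wh{\bx}'^i| \sim \Geo(1-\delta)$ (and using $\cs(0,\cdot)=\cs(\cdot,0)=0$) gives
\[
\Ex\bigl[|\LCS(\bx^i,\wh{\bx}'^i)|\bigr] \;\geq\; 1 + (1-\delta)^2 \sum_{j,k \geq 1} \cs(j,k)\,\delta^{j+k}.
\]
Note that the blocks are i.i.d.\ across $i$, since different blocks involve disjoint positions of the uniform source $\bx$ and independent algorithmic randomness, so the per-block bound holds for every block.

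Summing over $i \in [|\by|]$ and taking expectation over $|\by|$ (which has mean $(1-\delta)n$) gives
\[
\Ex\bigl[|\LCS(\bx,\wh{\bx}')|\bigr] \;\geq\; (1-\delta)\,n \cdot \Bigl(1 + (1-\delta)^2\sum_{j,k\geq 1}\cs(j,k)\delta^{j+k}\Bigr).
\]
Finally, to pass from $\wh{\bx}'$ to the truncated/padded $\wh{\bx}$, I will reuse verbatim the tail bound arguments from \Cref{claim:by-not-too-long} and \Cref{claim:geom-not-too-big} to show that the number $\bk$ of bits lost in Step~5 satisfies $\bk \leq 3\gamma n$ except with probability $e^{-\Omega(\gamma^2 n)}$; combining these yields the claimed bound. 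The main obstacle, and the only place the average-case proof diverges meaningfully from the worst-case one, is cleanly justifying that the non-trace bits in different blocks of $\bx^i$ are jointly independent uniform bits; this requires carefully invoking \Cref{cor:uniform} so that the block-level LCS expectations compose into a sum and the per-block computation actually yields the $\cs$-series factor rather than a weaker quantity.
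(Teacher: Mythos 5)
Your proposal is correct and follows essentially the same approach as the paper's proof: the same infinite-process block decomposition, the same per-block observation that the $\bd_i-1$ deleted bits and the $\bd'_i-1$ algorithmic bits are independent and uniform (with the shared trace bit matched at the end), leading to the per-block bound $\cs(\bd_i-1,\bd'_i-1)+1$, followed by taking expectation over the independent $\Geo(1-\delta)$ lengths, summing over $i\in[|\by|]$, and reusing Claims~\ref{claim:by-not-too-long} and~\ref{claim:geom-not-too-big} to pass from $\wh{\bx}'$ to $\wh{\bx}$.
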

As an example,  instantiating with the values of $\cs(j,k)$ from Table~\ref{Tab:Tcr}, the above theorem  gives us that
\[
  L_{1,\avg}(\delta,n) 
  \ge \left( 1 - \delta + \frac12\delta^2   + \frac{17}{8}\delta^4 + \frac{55}{8}  \delta^5 + o(\delta^5) \right) n ,
\]
which improves on the  $(1 - \delta + \delta^2/2 -\delta^3/2   + \delta^4/2 -   \delta^5/2 - o(1))n$ bound of \Cref{thm:worst-case-small-delta}.
\ignore{
\gray{
So, for example, \Cref{thm:small-delta-average-case-algorithm} gives us that
\[
  L_{1,\avg}(\delta,n) 
  \ge \left( 1 - \delta + \frac{\delta^2}{2} - \frac{\delta^3}{2} + \frac{5\delta^4}{8} - \frac{5\delta^5}{8} + o(\delta^5) \right) n ,
\]
which improves on the  $(1 - \delta + \delta^2/2 - \delta^3/2 + \delta^4/2 -  \delta^5/2 - o(1))n$ bound of \Cref{thm:worst-case-small-delta}.
}}

\begin{proof}
  The proof is analogous to the one in \Cref{thm:worst-case-small-delta}.
  We first replace $x$ in the proof of \Cref{thm:worst-case-small-delta} with a uniform random $\bx \sim \zo^n$ in the proof.

  Now, for each $i \in [\abs{\by}]$, let us define $\bd_i$ to be $\abs{\bx^i}$ and $\bd'_i$ to be  $\abs{\wh{\bx}'^i}$. Since the length-$(\bd_i-1)$ prefix of both $\bx^i$ and the length-$(\bd'_i-1)$ prefix of $\wh{\bx}'^i$ are independent random strings, and the last bit of both $\bx^i$ and $\wh{\bx}'^i$ are the same, we have  
  \[
    \E_{\bx^i, \wh{\bx}'^i} \Bigl[ \abs{\LCS(\bx^i, \wh{\bx}^i)} \Bigr] \ge \cs(\bd_i-1,\bd'_i-1) + 1.
  \]
  As $\bd_i \sim \Geo(1-\delta)$ and $\bd'_i \sim \Geo(1-\delta)$, we have
  \begin{align*}
    \E\bigl[\cs(\bd_i-1,\bd'_i-1)\bigr]
    &= \sum_{j,k=1}^\infty \Bigl( \cs(j,k) \cdot \Pr[\bd_i = j+1 \text{~and~}\bd'_i = k+1] \Bigr) \\
    &= \sum_{j,k=1}^\infty \Bigl( \cs(j,k) \cdot \Pr[\Geo(1-\delta) = j+1] \cdot \Pr[\Geo(1-\delta) = k+1] \Bigr) \\
    &=  (1-\delta)^2\sum_{j,k=1}^\infty  \cs(j,k) \cdot \delta^{j+k} .
  \end{align*}
  So we have
  \begin{align*}
    \E \Bigl[ \abs{\LCS(\bx, \wh{\bx}')} \Bigr]
    &\ge \sum_{i=1}^{\abs{\by}} \E_{\bx^i, \wh{\bx}'^i}  \Bigl[ \abs{\LCS(\bx^i, \wh{\bx}'^i)} \Bigr] \\
    &\ge \E\bigl[ \abs{\by} \bigr] + \E \bigl[ \abs{\by} \bigr] (1-\delta)^2\sum_{j,k=1}^\infty  \cs(j,k) \cdot \delta^{j+k} \\
    &\ge (1 - \delta) n \cdot \biggl( 1 + (1-\delta)^2\sum_{j,k=1}^\infty  \cs(j,k) \cdot \delta^{j+k}\biggr).
  \end{align*}
  We can again relate $\E[\abs{\LCS(\wh{\bx}, \bx)}]$ to $\E[\abs{\LCS(\wh{\bx}', \bx)}]$ using the same argument in \Cref{thm:worst-case-small-delta}, from which we conclude that
  \begin{align*}
    \E\Bigl[ \abs{\LCS(\wh{\bx}, \bx)} \Bigr]
    &\ge \Bigl( 1 - e^{-\Omega(\gamma^2 n)} \Bigr) \Bigl( \E\Bigl[ \abs{\LCS(\wh{\bx}, \bx')} \Bigr] - 3 \gamma n \Bigr) \\
    &\ge \Bigl( 1 - e^{-\Omega(\gamma^2 n)} \Bigr) 
   (1 - \delta) n \cdot \biggl( 1 + (1-\delta)^2\sum_{j,k=1}^\infty  \cs(j,k) \cdot \delta^{j+k}\biggr)
- 3\gamma n ,
  \end{align*}
  proving the theorem. 
  \end{proof}

\subsection{Bounds on the performance of any one-trace algorithm}

Finally, in this section we establish an upper bound on the best possible performance that any one-trace algorithm can achieve in the small-deletion-rate regime.
We consider the average-case setting (which is of course more challenging for upper bounds, and yields worst-case upper bounds as an immediate consequence).

A relatively simple analysis shows that 
 $L_{1,\avg}(\delta,n) \leq (1 - c\delta/\log(1/\delta))n$, where $c$ is a universal positive constant.  This argument applies a union bound across all possible matchings of a given size, and is given as \Cref{thm:small-delta-average-case-upper-bound-weak} in~\Cref{sec:small-delta-average-case-weak-upper-bound}.  It is natural to suspect that this bound is weaker than it should be by a $\Theta(\log(1/\delta))$ factor, but establishing this turns out to be nontrivial.  The following theorem establishes a bound of $L_{1,\avg}(\delta,n) \leq (1 - c \delta)n$, which, up to the value of the universal constant $c$, is best possible by \Cref{thm:small-delta-average-case-algorithm} (or even by the trivial algorithm which simply outputs any $n$-bit string that contains the input trace $\by$, and thereby achieves an LCS of expected length at least $\E[|\by|]=(1-\delta)n$).

\begin{theorem}
[Average-case upper bound on any algorithm, small retention rate] 
\label{thm:small-delta-average-case-upper-bound}
There is an absolute constant $c>0$ such that for any  deletion rate $\delta=\delta(n)=\omega(1/n)$ and  sufficiently large $n$, we have $L_{1,\avg}(\delta,n) \leq (1 - c \delta)n.$
\end{theorem}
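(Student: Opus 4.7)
The plan is to improve on the union-bound argument of \Cref{thm:small-delta-average-case-upper-bound-weak} by decomposing the LCS matching into a ``revealed'' part (using trace bits of $\bx$) and a ``hidden'' part (using deleted bits of $\bx$), and then bounding the hidden part via the zero-trace LCS constant $c_2$ (where $c_2 \leq 0.88999$ by \Cref{ap:upper-bound-c2}) rather than via a summation over matchings, which is what costs the $\log(1/\delta)$ factor in the weaker bound. Fix any algorithm $A$ and write $\wh{x} = A(\by)$. By \Cref{ob:post1}, conditional on $\by = y$, the source $\bx$ is generated by choosing a uniform random size-$|y|$ subset $\bS \subseteq [n]$, setting $\bx|_\bS = y$, and drawing the remaining $n - |y|$ bits (call them $\bz$) as an independent uniform random string of length $\approx \delta n$.

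\textbf{Matching decomposition.}
For any matching $\pi$ witnessing $|\LCS(\wh{x}, \bx)|$, decompose $\pi = \pi_R \sqcup \pi_H$, where $\pi_R$ consists of pairs whose $\bx$-index lies in $\bS$ (matching $\wh{x}$-bits to revealed bits of $y$) and $\pi_H$ consists of pairs whose $\bx$-index lies in $[n] \setminus \bS$ (matching $\wh{x}$-bits to hidden bits of $\bz$). The $\wh{x}$-indices used in $\pi_R$ and $\pi_H$ are disjoint. Since $\bx|_\bS = y$, the restriction $\pi_R$ induces a valid matching of a subsequence of $\wh{x}$ with $y$, so $|\pi_R| \leq |\LCS(\wh{x}, y)| \leq |y|$. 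Letting $I(\pi_R) \subseteq [n]$ denote the set of $\wh{x}$-indices used by $\pi_R$, the hidden matching satisfies $|\pi_H| \leq |\LCS(\wh{x}|_{[n] \setminus I(\pi_R)}, \bz)|$, since $\pi_H$'s $\wh{x}$-indices are disjoint from $I(\pi_R)$ and the $\bx$-bits it touches are exactly $\bz$.

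\textbf{Applying the zero-trace bound.}
In the ``canonical'' regime where $\wh{x}$ contains $y$ as a subsequence and $|\pi_R| = |y|$, the leftover string $\wh{x}|_{[n] \setminus I(\pi_R)}$ has length exactly $n - |y| = |\bz|$. Since $\bz$ is uniform on $\{0,1\}^{|\bz|}$ and independent of $\wh{x}$ given $y$ (because $\bz$ is independent of $\by$), we have $\E_\bz[|\LCS(\wh{x}|_{[n] \setminus I(\pi_R)}, \bz)|] \leq L_{0,\avg}(|\bz|) \leq c_2 |\bz| + o(|\bz|)$. Taking expectation over $\by$ and $\bS$ then yields
\[
\E\bigl[ |\LCS(\wh{x}, \bx)| \bigr] \leq (1-\delta) n + c_2 \delta n + o(\delta n) = \bigl( 1 - (1-c_2)\delta \bigr) n + o(\delta n),
\]
proving the theorem with $c = 1 - c_2 \geq 0.11$ in this regime. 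Crucially, this avoids the $\log(1/\delta)$ factor of the weaker bound because $c_2 < 1$ is an absolute constant obtained from the zero-trace LCS optimization, not from a union bound whose cost grows with the number of candidate matchings.

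\textbf{Main obstacle.}
The main obstacle is handling algorithms where $\wh{x}$ does not contain $y$ as a subsequence, so $|\pi_R| < |y|$ is possible and the leftover string $\wh{x}|_{[n] \setminus I(\pi_R)}$ is strictly longer than $|\bz|$; in that regime the trivial bound $|\LCS(\cdot,\bz)| \leq |\bz|$ undoes the $c_2$ savings. To close this gap, the plan is to establish an asymmetric-length extension: for any fixed $\wh{w}$ of length $|\bz| + s$ with $s \geq 0$ and uniform random $\bz$, $\E_\bz[|\LCS(\wh{w}, \bz)|]$ is at most $c_2 |\bz| + O(s) + o(|\bz|)$ (or more generally, a quantity that balances against the $s$-unit drop in $|\pi_R|$). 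Combining this with the revealed-match bound $|\pi_R| \leq |y| - s$, one would obtain $\E[|\LCS(\wh{x}, \bx)|] \leq (|y| - s) + c_2 |\bz| + O(s) + o(|\bz|)$, which can be optimized over $s$ and $\wh{x}$ to give $\E[|\LCS(\wh{x}, \bx)|] \leq (1 - c\delta)n$ uniformly, for some absolute constant $c > 0$.
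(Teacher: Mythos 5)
There is a genuine gap in your argument, and it is not the one you identify as the ``main obstacle.''

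The problem is the step $\E_\bz[|\LCS(\wh{x}|_{[n] \setminus I(\pi_R)}, \bz)|] \leq L_{0,\avg}(|\bz|) \leq c_2|\bz|$. This would be fine if $I(\pi_R)$ were a fixed set, but $\pi_R$ is derived from a matching $\pi$ that witnesses $|\LCS(\wh{x},\bx)|$, and $\bx$ depends on $\bz$ (and $\bS$). So $I(\pi_R)$ is a \emph{random} set correlated with $\bz$: nature, having seen $\bz$, can route the $y$-bits through whichever embedding of $y$ into $\wh{x}$ makes the residual $\wh{x}|_{[n]\setminus I(\pi_R)}$ best fit $\bz$. What you actually need to bound is $\E_{\bz}\bigl[\max_{I}\,|\LCS(\wh{x}|_{[n]\setminus I}, \bz)|\bigr]$, where the max ranges over all ways to embed $y$ in $\wh{x}$. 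The zero-trace constant $c_2$ controls $\max_w \E_\bz[|\LCS(w,\bz)|]$, which has the max and the expectation in the opposite order. To interchange them you would need a union bound (together with concentration) over the $\binom{n}{\delta n} = 2^{\Theta(\delta\log(1/\delta)n)}$ embeddings $I$, and the concentration width $\Theta(\sqrt{\delta n})$ times $\sqrt{\delta\log(1/\delta)n}$ gives an error of order $\delta n\sqrt{\log(1/\delta)}$, which swamps the intended $(1-c_2)\delta n$ savings as $\delta\to 0$. This is precisely the $\log(1/\delta)$-type loss the theorem is designed to avoid. Your ``asymmetric-length extension'' for the non-canonical case would, even if proved, not address this.

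The paper's proof circumvents the embedding-dependence issue entirely by a different route. It first applies a triangle-inequality trick (\Cref{claim:peepy}) to replace the quantity $\E_{\by}\E_{\bx\sim\by}[|\LCS(\opt(\by),\bx)|]$ with the symmetric quantity $\E_{\by}\E_{\bx,\bx'\sim\by}[|\LCS(\bx,\bx')|]$, where $\bx$ and $\bx'$ are two independent redraws from the a posteriori distribution. This removes the hypothesis string and its embeddings from the picture altogether. The paper then upper-bounds the probability of a large matching between $\bx$ and $\bx'$ by introducing the ``score'' of a candidate matching (\Cref{claim:score}), which counts edges that synch up the hidden-bit positions across the two samples, and shows via a well-spacedness argument (\Cref{claim:well-spaced}, \Cref{claim:scorelargesmallprob}) that no candidate matching simultaneously has high score and high probability of validity. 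That refinement — splitting the union bound's exponent between the score probability and the residual-coin probability — is what recovers the full $\Theta(\delta)$ bound, and it has no analogue in your decomposition.
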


\def\OPT{\textsf{OPT}}

\subsubsection{Outline of the argument} 

Recall that under the average-case setting, the source string $\bx$ is uniform random over $\zo^n$, and our goal is to upperbound the performance of any algorithm which is given as input a single trace $\by \sim \Del_\delta(\bx).$
Given any trace string $y$, the optimal algorithm $A$ will return a string $z\in \zo^n$ to maximize
  $\E_{\bx \sim y}[|\LCS(z,\bx)|]$ (recall Observation \ref{ob:post1} for the 
    a posteriori distribution $\bx\sim y$).
Let us write $\opt(y)$ for an optimal string $z\in \zo^n$ for the expectation.
Given that $\by \sim \Del_\delta(\bx)$ and $\bx\sim \zo^n$,
  our goal is to bound $$
L_{1,\avg}(\delta,n)=\E_{\by}\Big[\E_{\bx\sim \by}\big[|\LCS(\opt(\by),\bx)|\big]\Big],
$$
where $\by$ 
  is a uniform random bitstring of length $\bk$ where $\bk \sim \Bin(n,1-\delta)$. 
For each $k$, let
$$
\OPT_k:=\E_{\by\sim \zo^k} \Big[\E_{\bx\sim \by}\big[|\LCS(\opt(\by),\bx)|\big]\Big].
$$
It is easy to see that $\OPT_k$ is nondecreasing in $k$,\footnote{To see this, note that
  any algorithm that receives a random trace of length $k$ can be simulated by 
  an algorithm that receives a trace of length $k+1$ by randomly deleting one bit from its input trace.}
  and we have (with $\bk\sim \Bin(n,1-\delta)$)
\begin{equation}\label{eq:blabla1}
L_{1,\avg}(\delta,n)=\sum_{k=0}^n \Pr\big[\bk=k]\cdot \OPT_k.
\end{equation}
 
Let $\delta'=\delta/2$ and $m=(1-\delta')m$. 
To prove \Cref{thm:small-delta-average-case-upper-bound}, we first show that it suffices
  to obtain the following upper bound for $\OPT_m$: 
\begin{equation}\label{eq:blabla2}
\OPT_m\le (1-c_1\delta')n,
\end{equation}  
for some universal positive constant $c_1$.
Consequently it suffices to analyze the optimal one-trace algorithm which is given as input a uniform random string $\by \sim \zo^m$. 

Next, we observe that by a simple triangle inequality argument, it suffices to show that 
\begin{equation}
\label{eq:jarndyce}
\Ex_{\by \sim \zo^m} \Ex_{\bx, \bx' \sim \by}\big[|\LCS(\bx,\bx')|\big]\le (1-2c_1\delta')n
\end{equation}
for some constant $c_1$;
in turn, to prove (\ref{eq:jarndyce}), it is enough to bound (informally)
\begin{equation}
\label{eq:summerson}
\Prx_{\by \sim \zo^m, \bx,\bx' \sim \by}\big[\text{$\bx$ and $\bx'$ have a ``large'' matching}\big].
\end{equation} 

In
\Cref{claim:score} we give an upper bound on the probability that a \emph{fixed} large candidate matching $M$ is a valid matching between $\bx,\bx' \sim \by.$ 
This upper bound is in terms of a quantity that we call $\score_M$, which depends on the candidate matching $M$ and is a random variable whose randomness comes from the sets $\bS$ and $\bS'$ as in \Cref{obs:uniform}'s description of the distribution of $\bx\sim \by$ and $\bx' \sim \by$. 
As we show in \Cref{claim:tailhighscore}, to establish \Cref{eq:summerson} it is enough to show that for \emph{every} large candidate matching $M$, an upper tail bound on $\score_M(\bS,\bS')$ holds.  We prove such a tail bound in \Cref{lem:highscoreunlikely}. The two main steps are (i) showing (in \Cref{claim:well-spaced}) that $\Pr[\hspace{0.05cm}\bS$ is not ``\emph{well-spaced}''$\hspace{0.02cm}]$ is very small (see \Cref{def:well-spaced} for the definition of well-spaced sets), and (ii) showing (in \Cref{claim:scorelargesmallprob}) that if $\bS=S$ is good, then $\Pr_{\bS'}[\hspace{0.05cm}\score(S,\bS')$ is large$\hspace{0.04cm}]$ is very small.

\subsubsection{Proof of \Cref{thm:small-delta-average-case-upper-bound}}

We may assume that $\delta=\delta(n)$ is at most some sufficiently small universal positive constant, since otherwise the claimed bound follows immediately from \Cref{thm:small-delta-average-case-upper-bound-weak}. We will use this assumption in various bounds throughout the proof.

Let $\delta'=\delta/2$ and $m=(1-\delta')n$ (so $\delta'$ is $\omega(1/n)$ and at most 
  some sufficiently small universal positive constant as well).
By the well-known fact \cite{KaasBuhrman80} that the median of the $\Bin(n,\delta)$ distribution belongs to $\{\lfloor n \delta \rfloor,\lceil n \delta \rceil\}$ (which is at least $\delta'n$ using $\delta=\omega(1/n)$),
  it follows from (\ref{eq:blabla1}) and the monotonicity of $\OPT_k$ that 
$$
L_{1,\avg}(\delta,n)\le 0.5\cdot \OPT_m + 0.5n
$$
and thus, to prove \Cref{thm:small-delta-average-case-upper-bound} 
  it suffices to obtain the upper bound for $\OPT_m$ in (\ref{eq:blabla2}).



Instead of working with $\OPT_m$ directly,
  an application of the triangle inequality lets us work with the expression on the LHS of
  (\ref{eq:jarndyce}) which (conveniently) does not involve $\opt(\by)$:
\begin{claim} \label{claim:peepy}
Suppose that \Cref{eq:jarndyce} holds.
Then \Cref{eq:blabla2} holds.
\end{claim}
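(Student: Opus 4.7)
The plan is to use the fact that edit distance $d_{\mathrm{edit}}(u,v) := n - |\LCS(u,v)|$ is a metric on $n$-bit strings, together with the definition of $\opt(\by)$ as the string maximizing expected LCS against the posterior $\bx \sim \by$. The key point is that $\opt(\by)$ is a \emph{single} string that must do well against \emph{two} independent posterior draws simultaneously, which forces the two posterior draws to themselves be close in expectation.

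In detail, fix any $y \in \zo^m$, and let $\bx,\bx' \sim y$ be independent draws from the posterior distribution (as described in \Cref{ob:post1}). By the triangle inequality applied to $\opt(y)$ as the midpoint, we have
\[
d_{\mathrm{edit}}(\opt(y),\bx) + d_{\mathrm{edit}}(\opt(y),\bx') \;\ge\; d_{\mathrm{edit}}(\bx,\bx'),
\]
which, unpacking the definition $d_{\mathrm{edit}}(u,v) = n - |\LCS(u,v)|$, rearranges to
\[
|\LCS(\opt(y),\bx)| + |\LCS(\opt(y),\bx')| \;\le\; n + |\LCS(\bx,\bx')|.
\]
Taking expectations over $\bx,\bx' \sim y$ (independent), and using that both $\bx$ and $\bx'$ are individually distributed as the posterior, we obtain
\[
2\,\Ex_{\bx \sim y}\bigl[|\LCS(\opt(y),\bx)|\bigr] \;\le\; n + \Ex_{\bx,\bx' \sim y}\bigl[|\LCS(\bx,\bx')|\bigr].
\]
Now taking expectations over $\by \sim \zo^m$ and dividing by $2$ gives
\[
\OPT_m \;\le\; \frac{n}{2} + \frac{1}{2}\,\Ex_{\by \sim \zo^m}\Ex_{\bx,\bx' \sim \by}\bigl[|\LCS(\bx,\bx')|\bigr].
\]
Finally, plugging in the hypothesized bound \Cref{eq:jarndyce} yields
\[
\OPT_m \;\le\; \frac{n}{2} + \frac{(1-2c_1\delta')n}{2} \;=\; (1 - c_1\delta')n,
\]
which is exactly \Cref{eq:blabla2}.

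There is no real obstacle here; the entire content of the claim is the triangle inequality for edit distance combined with the observation that the optimal hypothesis string against a posterior must serve as a common ``center'' for two independent posterior samples. This is a standard reduction for converting a one-sample reconstruction bound into a two-sample distance bound, and all subsequent effort in the proof of \Cref{thm:small-delta-average-case-upper-bound} will go into establishing \Cref{eq:jarndyce} itself.
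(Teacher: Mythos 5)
Your proof is correct and takes essentially the same route as the paper. The paper establishes the key inequality $n + |\LCS(x,x')| \geq |\LCS(x,\opt(y))| + |\LCS(x',\opt(y))|$ by a direct counting argument (any coordinate of $\opt(y)$ matched in both optimal matchings gives a common subsequence of $x$ and $x'$, and inclusion--exclusion bounds the number of such coordinates), whereas you invoke it as the triangle inequality for $d(u,v)=n-|\LCS(u,v)|$ on $\zo^n$; these are the same inequality, and the paper itself uses this metric fact in \Cref{claim:mixture}, so the remaining expectation bookkeeping is identical.
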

\begin{proof}
For any $y\in \zo^m$, any $n$-bit string $\opt(y)$, and any two $n$-bit strings $x,x'$, we have that the length of the $\LCS$ between $x$ and $x'$ is at least the number of coordinates of $\opt(y)$ that participate both in the optimal matching between $\opt(y)$ and $x$ and in the optimal matching between $\opt(y)$ and $x'$. Since this number is at least
$|\LCS(x,\opt(y))|+|\LCS(x',\opt(y))| - n$, we have that
\begin{equation}
n + \big|\LCS(x,x')\big| \geq \big|\LCS(x,\opt(y))\big| + \big|\LCS(x',\opt(y))\big|. \label{eq:homais}
\end{equation}
It follows that
\begin{align*}
2(1 - c_1 \delta') n &= n + (1 - 2c_1 \delta')n\\[0.5ex]
&\geq
n + \Ex_{\by \sim \zo^m} \Ex_{\bx, \bx' \sim \by}\big[|\LCS(\bx,\bx')|\big]\\[0.6ex]
& \geq \Ex_{\by \sim \zo^m} \Ex_{\bx , \bx' \sim \by}\big[|\LCS(\bx,\opt(\by)) | + |\LCS(\bx',\opt(\by))|\big]\\[0.8ex]
&= 2 \Ex_{\by \sim \zo^m} \Ex_{\bx \sim \by}\big[|\LCS(\bx,\opt(\by))|\big],
\end{align*}
where the first inequality is by \Cref{eq:jarndyce},  the second is by \Cref{eq:homais} (averaged over $\by$, $\bx \sim \by$ and $\bx' \sim \by$), and the third is because $\bx'$ and $\bx$ are identically distributed.
\end{proof}

Given \Cref{claim:peepy}, our goal in the rest of the proof is to establish \Cref{eq:jarndyce}.
We note that in \Cref{eq:jarndyce}, given the outcome of $\by$, the two $n$-bit strings $\bx$ and $\bx'$ are \emph{independently} distributed according to $\bx \sim \by$ and $\bx' \sim \by$; in particular, recalling \Cref{obs:uniform}, there are two independent draws performed to obtain the sets $\bS$ (for $\bx$) and $\bS'$ (for $\bx'$). This independence will be used heavily in the rest of the argument.

Recalling \Cref{obs:uniform}, we rewrite \Cref{eq:jarndyce} as
\begin{equation} \label{eq:realgoal2}
\Ex_{\by,\bS,\bS',\br,\br'}\big[|\LCS(\bx,\bx')|\big] \leq \left(1 - 2c_1 \delta'\right)n,
\end{equation}
where $\by \sim \zo^{m}$, $\bS$ and $\bS'$ are independent uniform $m$-element subsets of $[n]$,  and $\br,\br'$ are independent uniform draws from $\zo^{n-m}$ representing the ``rest of the bits'' that get filled into the locations in $[n] \setminus \bS$ and $[n] \setminus \bS'$ to complete the $n$-bit strings $\bx$ and $\bx'$, respectively. Recall that $\bx$ has $\by$ in the $m$ locations of $\bS$ and $\br$ in the other $n-m$ locations, and $\bx'$ gets the same $\by$ in the locations of $\bS'$ and $\br'$ in the other locations.

Since the length of the $\LCS$ between two strings is the size of the largest matching between them, to establish \Cref{eq:realgoal2} it suffices to prove that
\begin{equation}
\label{eq:realgoal3}
\Prx_{\by,\bS,\bS',\br,\br'}\big[\hspace{0.05cm}\text{there exists a matching between $\bx$ and $\bx'$ of size $(1- 4c_1\delta')n$}\hspace{0.04cm}\big] \leq 1/2 
\end{equation}
for some universal positive constant $c_1$.
Thus our remaining task is to establish \Cref{eq:realgoal3}.

\subsubsection{Matchings and scores}

Recall from \Cref{sec:preliminaries} that a matching $M$ of size $t$  between two $n$-bit strings $z,z'$ is a sequence of pairs $M=(M_1,\dots,M_t)$, where 
\begin{enumerate}

\item [(a)] $M_i=(v_i,v'_i)$ are such that $1 \leq v_1 < \cdots < v_t \leq n$, $1 \leq v'_1 < \cdots < v'_t \leq n$, and 
\item [(b)] for each $i \in [t]$ we have that the two bits $z_{v_i}$ and $z'_{v'_i}$ are the same.

\end{enumerate}
Let us say that a \emph{candidate matching} is a sequence of pairs $M=(M_1,\dots,M_t)$ satisfying (a); if moreover (b) holds for a pair of $n$-bit strings $z$ and $z'$, we say that the candidate matching $M$ is \emph{valid for $(z,z')$}.

Let $M=(M_1,\dots,M_t)$ be a candidate matching and let $$S=\{s_1 < \cdots < s_m\}\quad\text{and}\quad S'=\{s'_1 < \cdots < s'_m\}$$ be two $m$-element subsets of $[n].$ We say that an edge $M_i = (v_i,v'_i)$ of $M$ \emph{synchs up} with the pair $(S,S')$ if there is some $j \in [m]$ such that $v_i=s_j$ and $v'_i=s'_j$; in words, for some $j$ the candidate matching attempts to match up the $j$-th element of $S$ with the $j$-th element of $S'$. We say the \emph{score of $M$ on $(S,S')$}, denoted $\score_M(S,S')$, is
\[
\score_M(S,S'):=
\Big|\big\{i \in [t] : M_i \text{~synchs up with~}(S,S')\big\}\Big|,
\]
the number of edges of $M$ that match up corresponding elements of $S$ and $S'$.

\begin{claim} \label{claim:score}
Let $M=(M_1,\dots,M_t)$ be a candidate matching of size $t$ and let $S,S'$ be $m$-element subsets of $[n]$ such that $\score_M(S,S')=\ell$. Then
\[
\Prx_{\by,\br,\br'}\big[\hspace{0.03cm}M \text{~is a valid matching for~}(\bx,\bx')\hspace{0.03cm}\big] = 
{\frac 1 {2^{t-\ell}}},
\]
where $\bx$ and $\bx'$ are defined based on $S,S',\by,\br,\br'$ as described after \Cref{eq:realgoal2}.
\end{claim}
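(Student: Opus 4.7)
The plan is to split the $t$ edges of $M$ into the $\ell$ \emph{synching} edges, which hold automatically because they simply assert $\by_j = \by_j$, and the $t - \ell$ \emph{non-synching} edges, each of which is a nontrivial equality between two of the independent uniform bits comprising $\by, \br, \br'$. Concretely, I would classify each non-synching edge $M_i = (v_i, v'_i)$ by whether $v_i \in S$ (so $\bx_{v_i}$ is a bit of $\by$) or $v_i \notin S$ (so $\bx_{v_i}$ is a bit of $\br$), and similarly for $v'_i$. In the three ``at least one side outside'' cases, the constraint involves a fresh bit of $\br$ or $\br'$ which, because $v_1 < \cdots < v_t$ and $v'_1 < \cdots < v'_t$ are strictly increasing, appears in no other constraint. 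In the remaining ``both sides inside'' case, the constraint reduces to $\by_a = \by_b$ for some $a \neq b$ (equality of the sub-indices would correspond to a synching edge).

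I would then build the constraint (multi)graph $H$ whose vertices are the underlying bits appearing in any non-synching constraint and whose $t - \ell$ edges are the constraints themselves, and prove that $H$ is a forest. Each of the three ``fresh-bit'' cases contributes a vertex of degree $1$ in $H$ (the fresh $\br$- or $\br'$-bit), so peeling these leaves off reduces $H$ to the subgraph on $\by$-vertices coming from the ``$\by_a = \by_b$''-type edges. To show this reduced subgraph is acyclic, I would argue by contradiction: suppose $k$ such edges, at indices $i_1 < \cdots < i_k$, form a cycle through distinct $\by$-vertices $u_1, \dots, u_k$. Since each $u_j$ has degree $2$ in the cycle, the multiset $\bigsqcup_{j \in [k]} \{a_{i_j}, b_{i_j}\}$ equals $\{u_1, u_1, \dots, u_k, u_k\}$; and since the sequences $(a_{i_j})_j$ and $(b_{i_j})_j$ are both strictly increasing in $j$, each contains $k$ distinct values, forcing $\{a_{i_j} : j \in [k]\} = \{b_{i_j} : j \in [k]\} = \{u_1, \dots, u_k\}$ as sets. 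Sorting each length-$k$ sequence in increasing order (which is already its natural $j$-order) identifies them entry by entry, yielding $a_{i_j} = b_{i_j}$ for every $j$, in contradiction with the non-synching assumption $a_{i_j} \neq b_{i_j}$.

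Once $H$ is known to be a forest, the probability (over the independent uniform bits $\by, \br, \br'$) that every non-synching equality holds is a standard calculation: if $H$ has $V$ vertices, $t - \ell$ edges, and $p = V - (t - \ell)$ connected components, then each component of size $v_c$ demands that $v_c$ independent uniform bits agree (via equalities propagated along its spanning tree), an event of probability $2^{1 - v_c}$; multiplying gives $\prod_{c=1}^{p} 2^{1 - v_c} = 2^{p - V} = 2^{-(t - \ell)}$. Combined with the automatic validity of the $\ell$ synching edges, this yields exactly $1/2^{t - \ell}$, as claimed. The main obstacle is the acyclicity lemma for the ``$\by_a = \by_b$'' edges; once this is in hand, the rest of the proof is routine bookkeeping.
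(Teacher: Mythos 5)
Your proof is correct and takes a genuinely different route from the paper's. The paper handles the $t-\ell$ non-synching constraints sequentially, in increasing order of matching index, asserting that at each step the new constraint involves a ``fresh'' random bit not touched by any earlier constraint; the subtlest instance of this claim (both $v_{i_j} \in S$ and $v'_{i_j} \in S'$, so both endpoints are $\by$-bits that could in principle have appeared earlier) is left implicit. You instead argue globally: build the constraint graph on the underlying random bits, prove it is a forest --- the key lemma being acyclicity of the subgraph of ``$\by_a = \by_b$'' edges --- and then apply the calculation that a forest of equality constraints among $V$ independent uniform bits with $p$ components is satisfied with probability $2^{p-V} = 2^{-(t-\ell)}$. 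Both arguments hinge on exactly the same monotonicity facts (the map $i \mapsto \mathrm{rank}_S(v_i)$ is strictly increasing where defined, and likewise $i \mapsto \mathrm{rank}_{S'}(v'_i)$), but your graph-theoretic framing makes explicit \emph{why} the ``both inside'' case can never create a hidden dependency, via the clean observation that a cycle would force the two increasing sequences $(a_{i_j})_j$ and $(b_{i_j})_j$ to enumerate the same set and hence to coincide, contradicting non-synching. This is a reasonable trade: a bit more scaffolding, in exchange for an argument that leaves nothing to the reader. One remark worth keeping in mind if you write this up: you should also note that $\br$- and $\br'$-bits each appear in at most one edge because $v_1 < \cdots < v_t$ and $v'_1 < \cdots < v'_t$ are strictly increasing, so they are genuine degree-$1$ leaves, which is the (easy) reason a cycle can only live inside the $\by$-subgraph.
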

\begin{proof}
For each $M_i=(v_i,v'_i)$ that synchs up with $(S,S'),$ it is clear that $\smash{\bx_{v_i}=\bx'_{v'_i}}$, because both  are the same bit $\by_j$ of the string $\by$.  
There are $t-\ell$ remaining equalities $$\bx_{v_{i_1}} \stackrel{?}{=}\bx_{v'_{i_1}}, \dots, \bx_{v_{i_{t-\ell}}} \stackrel{?}{=}\bx_{v'_{i_{t-\ell}}},\quad\text{where $i_1<\cdots<i_{t-\ell}$,}$$ that must all hold in order for the candidate matching $M$ to be valid for $(\bx,\bx')$, corresponding to the $t-\ell$ edges of $M$ that do not synch up with $(S,S').$  Each of these equalities holds independently with probability $1/2$. This can be seen by considering the $t-\ell$ edges of $M$  successively in increasing order ``from left to right'': for each $j \in [t-\ell]$, for any given outcome of the bits of $\by,\br$ and $\br'$ that were involved in the first $j-1$ edges, there is a ``fresh random bit'' from either $\by, \br$ or $\br'$ involved in the $j$-th edge that causes the $j$-th equality to hold with probability 1/2.
\end{proof}

For the rest of the proof of \Cref{thm:small-delta-average-case-upper-bound},
 we fix $t:=(1-4c_1\delta')n$ for some universal constant $c_1$ to be picked later  (recall that that is the size of the matchings that we are concerned with in \Cref{eq:realgoal3}).  The following claim states that it suffices to establish that for each fixed size-$t$ candidate matching $M$, the probability that it has a high score is very low:

\begin{claim} \label{claim:tailhighscore}
Suppose that there is a universal positive constant $c_1$ such that 
  the following inequality holds for each candidate matching $M$ of size  $t=(1-4c_1\delta')n$:
\begin{equation} \label{eq:highscoreunlikely}
\Prx_{\bS,\bS' \sim {[n] \choose [m]}}
\Big[\hspace{0.05cm}\score_M(\bS,\bS') \geq \big(1-3H(4c_1 \delta')\big)n\hspace{0.05cm}\Big]
  \le {\frac 1 {4 \cdot 2^{2H(4c_1 \delta')n}}}.
\end{equation}
Then \Cref{eq:realgoal3} holds with the same constant $c_1$.
\end{claim}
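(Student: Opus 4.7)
The plan is to prove \Cref{eq:realgoal3} by a union bound over all candidate matchings $M$ of size $t=(1-4c_1\delta')n$. First I would count the number of such candidate matchings: each is specified by choosing its two endpoint sequences $v_1<\cdots<v_t$ and $v_1'<\cdots<v_t'$ in $[n]$, so there are at most $\binom{n}{t}^2$ of them. Since $n-t=4c_1\delta' n$, which is at most $n/2$ for any reasonable choice of $c_1$, \Cref{fact:standard-bound} bounds the count by $2^{2H(4c_1\delta')n}$.

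Next, for each fixed candidate matching $M$ of size $t$, I would estimate the probability that $M$ is valid for $(\bx,\bx')$ by conditioning on the random subsets $(\bS,\bS')$ and then splitting on the value of $\score_M(\bS,\bS')$. By \Cref{claim:score}, the conditional probability (over $\by,\br,\br'$) that $M$ is valid equals $2^{-(t-\score_M(\bS,\bS'))}$. On the ``high-score'' event $\{\score_M(\bS,\bS')\geq(1-3H(4c_1\delta'))n\}$ I trivially bound this conditional probability by $1$; the hypothesis \Cref{eq:highscoreunlikely} then yields a contribution of at most $1/(4\cdot 2^{2H(4c_1\delta')n})$. On the complement, we have the deterministic bound $t-\score_M(\bS,\bS')>(3H(4c_1\delta')-4c_1\delta')n$, so the conditional probability is at most $2^{-(3H(4c_1\delta')-4c_1\delta')n}$.

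Summing these per-matching bounds over the at most $2^{2H(4c_1\delta')n}$ candidate matchings, I would obtain
\[
\Prx_{\by,\bS,\bS',\br,\br'}\bigl[\exists\,M\text{ valid for }(\bx,\bx')\bigr]\;\leq\;\tfrac{1}{4}\;+\;2^{-(H(4c_1\delta')-4c_1\delta')n}.
\]
The only step that needs care is verifying that the second term above is at most $\tfrac14$, i.e.\ that $(H(4c_1\delta')-4c_1\delta')n\geq 2$. Using the Taylor expansion $H(p)=p\log(1/p)+\Theta(p)$ for small $p$, we have $H(p)-p\sim p\log(1/p)$, so the exponent is on the order of $\delta' n\log(1/\delta')$. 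Since the theorem hypothesizes $\delta=\omega(1/n)$ (hence $\delta'n=\omega(1)$) and $\delta'$ is at most a small absolute constant, this exponent tends to infinity, and is comfortably at least $2$ for large $n$. Combining everything gives an overall bound of at most $1/2$, establishing \Cref{eq:realgoal3} with the same constant $c_1$.
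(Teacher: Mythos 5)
Your proposal is correct and takes essentially the same approach as the paper: both union-bound over the at most $\binom{n}{t}^2 \leq 2^{2H(4c_1\delta')n}$ candidate matchings, split according to whether the score exceeds $(1-3H(4c_1\delta'))n$, and invoke the hypothesis for the high-score part and \Cref{claim:score} for the low-score part. The only (cosmetic) difference is that the paper organizes the split via a global bad event $\bB$ and the bound $\Pr[\bA]\le\Pr[\bB]+\Pr[\bA\mid\overline{\bB}]$, whereas you perform the split per matching before summing; the arithmetic is the same in both cases.
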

\begin{proof}
We use $\Pr[\bA] \leq \Pr[\bB] + \Pr[\bA \ | \ \overline{\bB}]$ where $\bA$ is the event ``there exists some valid matching between $\bx$ and $\bx'$ of size $t$''
and $\bB$ is the event ``there exists some candidate matching $M$ of size $t$ with $\score_M(\bS,\bS') \geq (1-3H(4c_1 \delta'))n$.''
There are $${n \choose 4c_1\delta' n}^2 \leq2^{2H(4c_1 \delta')n}$$ 
  many candidate matchings $M$ of size $t$. A union bound together with \Cref{eq:highscoreunlikely} gives that $$\Pr[\bB] \leq 
2^{2H(4c_1 \delta')n} \cdot 
 {\frac 1 {4 \cdot 2^{2H(4c_1 \delta')n}}}
= {\frac 1 4}.$$

To upperbound $\Pr[\bA \ | \ \overline{\bB}]$, 
fix any particular outcome $(S,S')$ of $(\bS,\bS')$ such that $\score_M(S,S')$ $< (1-3H(4c_1 \delta'))n$ holds for every candidate matching $M$ of size $t$. 
By \Cref{claim:score} we have that 
\begin{align*}
\Prx_{\by,\br,\br'}\big[M \text{~is valid for~}(\bx,\bx')\ | \ (\bS,\bS')=(S,S')\big] &\leq
{\frac 1 {2^{t - (1-3H(4c_1 \delta'))n}}} 
 <
{\frac 1 {2^{2.5H(4c_1 \delta')n}}},
\end{align*}
where the second inequality uses that $\delta'$ is at most some sufficiently small absolute constant.
By a union bound over all (at most $2^{2H(4c_1 \delta')n}$) many candidate matchings $M$ of size $t$, we see that 
$$\Pr\big[\bA \ | \ (\bS,\bS') = (S,S')\big]\le 
 2^{2H(4c_1 \delta')n} \cdot 
{\frac 1 {2^{2.5H(4c_1 \delta')n}}} \leq {\frac 1 4},$$ where the inequality holds since $\delta'=\omega(1/n)$.
Hence $\Pr[\bA \ | \ \overline{\bB}]\le 1/4$, and the claim is proved.
\end{proof}

For the rest of the proof fix $M$ to be any particular size-$t$ candidate matching. By \Cref{claim:tailhighscore}, our remaining task is to establish the tail bound on $\score_M(\bS,\bS')$ that is asserted by \Cref{eq:highscoreunlikely}. Since $\delta'$ is at most some absolute constant, this is an immediate consequence of the following slightly stronger (and cleaner to state) version:

\begin{lemma}
\label{lem:highscoreunlikely}
There is a universal positive constant $c_1$ such that
\begin{equation} \label{eq:highscoreunlikely2}
\Prx_{\bS,\bS' \sim {[n] \choose [m]}}
\Big[\hspace{0.05cm}\score_M(\bS,\bS') \geq \big(1-\sqrt{\delta'}\big)n\hspace{0.05cm}\Big]
\leq {\frac 1 {4 \cdot 2^{2H(4c_1 \delta')n}}}.
\end{equation}
\end{lemma}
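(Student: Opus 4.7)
The plan is to prove the tail bound via the two-step structure sketched in the outline preceding the lemma: first show that $\bS$ satisfies a ``well-spaced'' condition with very high probability, and then show that conditioned on any well-spaced $S$, the score $\score_M(S, \bS')$ (over the randomness of $\bS'$) is unlikely to be large. A union bound over these two events yields the lemma.

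For the first step, I would formulate ``well-spaced'' for a size-$m$ subset $S \subseteq [n]$ as the condition that $S$ is evenly distributed over $[n]$, in the sense that for every interval $I \subseteq [n]$ of length at least some threshold $W = W(\delta')$ (on the order of $\log(1/\delta')/\delta'$), the count $|S \cap I|$ deviates from $(1-\delta')|I|$ by at most (say) a constant multiplicative factor. The claim $\Pr[\bS \text{ not well-spaced}] \le \tfrac12 \cdot \bigl(4 \cdot 2^{2H(4c_1\delta')n}\bigr)^{-1}$ then follows from a Chernoff bound for each fixed interval, combined with a union bound over all $O(n^2)$ intervals; the threshold $W$ is chosen large enough that the Chernoff exponent exceeds the entropic cost $\Theta(\delta' n \log(1/\delta'))$ coming from the target probability.

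For the second step, fix a well-spaced $S$ and recall that $\score_M(S, S')$ equals the number of ranks $j \in [m]$ for which $s_j' = M^S(j)$, where $M^S(j)$ is the unique $v'$ with $(s_j, v') \in M$ (or $\bot$ if none). Since $\bS'$ is a uniform random $m$-subset of $[n]$, the gap sequence $(h_1, \ldots, h_{m+1})$ with $h_i := s_i' - s_{i-1}' - 1$ is uniform over non-negative integer compositions of $\delta' n$ into $m+1$ parts. The event that a specified set $J' = \{j_1 < \cdots < j_L\} \subseteq [m]$ of ranks are all synchs is equivalent to fixing $L$ partial sums of this composition, which happens with probability
\[
\prod_{k=1}^{L+1} \binom{\tau_k + \Delta_k - 1}{\Delta_k - 1} \bigg/ \binom{n}{m},
\]
where $\Delta_k := j_k - j_{k-1}$ and $\tau_k := (M^S(j_k) - M^S(j_{k-1})) - (j_k - j_{k-1})$ (with the standard boundary conventions $j_0=0$, $j_{L+1}=m+1$, $M^S(j_0)=0$, $M^S(j_{L+1})=n+1$). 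Applying Markov's inequality to the $L$-th falling factorial,
\[
\Prx_{\bS'}\bigl[ \score_M(S,\bS') \ge L \bigr] \le \Ex\Bigl[\tbinom{\score_M(S,\bS')}{L}\Bigr] = \sum_{\substack{J' \subseteq [m] \\ |J'| = L}} \prod_{k=1}^{L+1} \frac{\binom{\tau_k + \Delta_k - 1}{\Delta_k - 1}}{\binom{n}{m}},
\]
so it suffices to bound this sum by $2^{-\Omega(\delta' n \log(1/\delta'))}$. Here the well-spaced condition on $S$ is used decisively: it forces $s_j$ (and hence $M^S(j)$) to grow regularly in $j$, so that for ``typical'' $J'$ most $\tau_k$ equal $0$ (consecutive synch ranks have no slack), and the total slack $\sum_k \tau_k = \delta' n$ is forced to concentrate on the $K := m - L = (\sqrt{\delta'}-\delta')n$ ``non-trivial'' gaps in a controlled way. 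A careful estimation of the sum using binomial/entropy estimates and AM--GM then yields the desired bound.

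The main obstacle is the quantitative analysis in the second step: the combinatorial bound must be tight enough to beat the $\binom{n}{m} = 2^{\Theta(\delta' n \log(1/\delta'))}$ in the denominator by an additional factor of $2^{\Omega(\delta' n \log(1/\delta'))}$. A naive union bound over $J'$ without exploiting the well-spaced structure yields only the useless estimate that the sum is $\ge 1$; extracting the extra savings requires simultaneously (a) exploiting that well-spacedness of $S$ bounds $\max_j |M^S(j) - j \cdot n/m|$, which constrains each $\tau_k$, and (b) absorbing the large number $\binom{m}{L}$ of choices of $J'$ against the ``rareness'' (small denominator) of hitting all prescribed partial sums. This trade-off is the heart of the proof.
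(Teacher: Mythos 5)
Your high-level two-step decomposition---show that $\bS$ is ``well-spaced'' except with tiny probability, then show that for any well-spaced $S$ the score $\score_M(S,\bS')$ is unlikely (over $\bS'$) to be large---matches the paper's structure. But the two steps as you have specified them have genuine gaps, and I think the route you are pursuing is substantially harder than the one the paper actually takes.

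The central problem is that your notion of ``well-spaced'' is a property of $S$ alone (even distribution of points over intervals), whereas the argument needs a joint property of $S$ and the fixed matching $M$. The paper's definition asks for $\Omega(\delta'n)$ disjoint intervals $I_j$ of length $1+2\beta$ with $\beta=\delta'^{-3/4}$ on which (i) both the $L$-side and the $R$-side of $M$ restricted to $I_j$ form intervals, and (ii) viewing $S$ as a string in $\zo^n_m$, $S_{I_j}=1^\beta 0 1^\beta$. This very particular pattern drives a short ``domino'' argument: if $\bS'$ is all ones on a window of length $8/\sqrt{\delta'}+1$ around the partner position, then the lone $0$ in the middle of $S_{I_j}$ creates a rank offset of one, forcing a whole run of $4/\sqrt{\delta'}$ consecutive edges of $M$ to be unsynched; a high score therefore forces $\bS'$ to pack $\Omega(\delta'n)$ zeros into a union of windows of total length $O(\sqrt{\delta'}n)$, which is a clean hypergeometric tail bound of order $2^{-\Omega(\delta'\log(1/\delta')n)}$. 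Your even-distribution condition does not supply this local leverage: it controls $s_j$ (and hence $M^S(j)$, since $|M^S(j)-s_j|\le n-t=4c_1\delta'n$) only up to additive $O(\delta'n)$ slack, which is exactly the total slack $\sum_k\tau_k=\delta'n$ you are trying to distribute. Your proposed moment computation $\sum_{J'}\prod_k\binom{\tau_k+\Delta_k-1}{\Delta_k-1}/\binom{n}{m}$ then has $\binom{m}{L}\approx 2^{\Theta(\sqrt{\delta'}\log(1/\delta')n)}$ terms, dwarfing both the $\binom{n}{m}$ in the denominator and the target $2^{-\Theta(\delta'\log(1/\delta')n)}$, and you explicitly leave the required cancellation as ``the heart of the proof.'' That is precisely the part that needs an idea, and I do not see that the even-distribution hypothesis gives it to you.

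There is also a quantitative problem with your Step 1 as stated. For a single interval $I$ of length $W\sim\log(1/\delta')/\delta'$, the expected number of zeros of $S_I$ is $\delta'W=\log(1/\delta')$, so a Chernoff bound at constant relative deviation gives only $e^{-\Theta(\log(1/\delta'))}=\mathrm{poly}(\delta')$, not anything exponentially small in $n$. A union bound over $O(n^2)$ intervals therefore yields a bound that is $\gg 1$, nowhere near $2^{-\Omega(\delta'\log(1/\delta')n)}$; to push a per-interval Chernoff tail that low you would need $W=\Omega(n)$, at which point ``even on all long intervals'' carries essentially no local information. You would instead need something like ``at most a small fraction of a fixed partition of $[n]$ into length-$W$ blocks are bad,'' proved via negative correlation, but that is again a different (and weaker) condition than the one the paper uses, and it still does not feed cleanly into your Step 2.
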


\subsubsection{Proof of \Cref{lem:highscoreunlikely}}

Let the size-$t$ matching $M$ be given by $M=((v_1,v'_1),\dots,(v_t,v'_t))$. 
We define sets $L := \{v_1,\dots,v_t\}$ and $R:=\{v'_1,\dots,v'_t\}$
  with $v_1<\cdots<v_t$ and $v_1'<\cdots<v_t'$.
  
In the proof of \Cref{lem:highscoreunlikely} it will be sometimes convenient for us to view $\bS$ as a uniform random string from $\zo^n$ conditioned on containing exactly $m$ ones (and $\bS'$ as an independent random string with the same distribution). We write $\zo^n_m$ to denote the set of all such $n$-bit strings with exactly $m$ ones.
 
The key notion for the proof of \Cref{lem:highscoreunlikely} is the following:

\begin{definition} \label{def:well-spaced}
We say that an outcome $S \in \zo^n_m$ of the random variable $\bS$ is \emph{well-spaced} if it has the following property: there are at least $\delta' n/2$ many disjoint intervals $I_1,\dots,I_{\delta' n/2} \subset [n],$ each of length exactly $1+2\beta$ with 
$\smash{ \beta:= 1/\delta'^{3/4} }$, such that for each $j \in [\delta' n /2]$ we have that
\begin{flushleft}
\begin{itemize}
\item [(i)] $I_j$ is entirely contained in $L$ (so $I_j$ contains $v_{i_j},\ldots,v_{i_j+2\beta}$ for 
  some $i_j$) and moreover, their corresponding indices in $R$ ($v_{i_j}',\ldots,v_{i_j+2\beta}'$)
  also form an interval (i.e., $v_{i_j+2\beta}'=v_{i_j}'+2\beta$); and

\item [(ii)] viewing $S$ as a bit-string from $\{0,1\}^n_m$, the subword $S_{I_j}$ of $S$ is $1^\beta 0 1^\beta$,  i.e. there is a 0 exactly in the middle of interval $I_j$ and the other $2\beta$ bits in the interval are all 1.  
\end{itemize}
\end{flushleft}
\end{definition}

Given \Cref{def:well-spaced}, \Cref{lem:highscoreunlikely} is an immediate consequence of \Cref{claim:well-spaced} and \Cref{claim:scorelargesmallprob} using $\Pr[\bA] \leq \Pr[\bB] + \Pr[\bA \ | \ \overline{\bB}]$ where $\bA$ is the event ``$\score_M(\bS,\bS') \geq (1-\sqrt{\delta'})n$'' and $\bB$ is the event ``$\bS$ is not well-spaced,''
  and taking $c_1$ to be a suitably small constant relative to those constants hidden in
  the $\Omega(\cdot)$ of these two claims. 

\begin{claim} \label{claim:well-spaced}
We have $$\Prx_{\bS \sim \zo^n_m}\big[\hspace{0.05cm}\text{$\bS$ is not well-spaced}\hspace{0.05cm}\big] \leq 2^{-\Omega(\delta' \log(1/\delta')n)}.$$
\end{claim}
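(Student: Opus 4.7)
The plan is to show that, with overwhelming probability, the zero-set $Z := \{p \in [n] : \bS_p = 0\}$ of $\bS$ contains at least $\delta' n/2$ elements each at a ``good center'' and pairwise separated by distance greater than $2\beta$. The $(2\beta+1)$-length windows around such zeros will be disjoint intervals satisfying both conditions (i) and (ii), which suffices for well-spacedness.

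First I would identify the good centers combinatorially. Call a position $p \in [n]$ a \emph{good center} if $[p-\beta, p+\beta] \subseteq L$ and, letting $i$ be the index with $v_i = p-\beta$, we also have $v'_{i+j} = v'_i + j$ for all $j \in [0, 2\beta]$. Since $|L^c|, |R^c| \leq 4c_1 \delta' n$, each element of $L^c$ excludes at most $2\beta+1$ potential good centers, and each ``gap'' in $R$ (an index $i$ with $v'_{i+1} > v'_i + 1$) likewise excludes $O(\beta)$ positions. Hence the set $G$ of good centers satisfies $|G^c| = O(\beta \cdot \delta' n) = O(\delta'^{1/4} n)$.

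Next, define the \emph{valid-zero set} $Z^* := \{z \in Z \cap G : Z \cap [z-2\beta, z+2\beta] = \{z\}\}$. Any two elements of $Z^*$ are at distance $>2\beta$, and the $(2\beta+1)$-window around each element of $Z^*$ satisfies (i) by the $G$-membership and (ii) by the isolation; so $|Z^*| \geq \delta' n/2$ implies $\bS$ is well-spaced. It therefore suffices to bound $\Pr[|Z^*| < \delta' n/2]$. Since $|Z| = \delta' n$, the event $|Z^*| < \delta' n/2$ implies at least one of
\begin{align*}
\text{(A):~}& |Z \cap G^c| \ \geq \ \delta' n/4, \qquad\text{or}\\
\text{(B):~}& \bigl|\{z \in Z : \exists z' \in Z,\ 0 < |z-z'| \leq 2\beta\}\bigr| \ \geq\  \delta' n/4.
\end{align*}
For (A), $|Z \cap G^c|$ is hypergeometric with mean $\delta' |G^c| = O(\delta'^{5/4} n)$, so event (A) demands a multiplicative upper deviation by a factor $\lambda = \Omega(\delta'^{-1/4})$. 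The Poisson-tail Chernoff bound $\Pr[\mathrm{Hyp} \geq (1+\lambda)\mu] \leq (e^{\lambda}/(1+\lambda)^{1+\lambda})^\mu$, which reads as $\exp(-\Omega(\mu \lambda \log \lambda))$ for $\lambda \to \infty$, yields $\Pr[(A)] \leq \exp(-\Omega(\delta' n \log(1/\delta')))$. Event (B) implies the count of ordered close pairs $\Pi := |\{(z,z') \in Z^2 : 0 < z'-z \leq 2\beta\}|$ is at least $\delta' n/8$, whereas a direct calculation gives $\E[\Pi] = O(\beta n \delta'^2) = O(n \delta'^{5/4})$---again an $\Omega(\delta'^{-1/4})$-factor large deviation. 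To handle this I would partition $[n]$ into disjoint blocks of length $2\beta$ and observe that each close pair lies either within a single block or across two adjacent blocks; hence (B) reduces to a deviation bound on the number of blocks with $\geq 2$ zeros plus the number of adjacent block-pairs both containing a zero. Applying Chernoff in the Poisson-tail regime (justified by the negative association of hypergeometric counts on disjoint blocks) gives $\Pr[(B)] \leq \exp(-\Omega(\delta' n \log(1/\delta')))$ as well, and a union bound completes the proof.

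The main obstacle I expect is the concentration step for (B): close-pair indicators have nontrivial dependence (sharing an endpoint, straddling block boundaries), so naive Chernoff is insufficient and a careful block decomposition together with negative-association machinery is needed. The $\log(1/\delta')$ factor in the exponent arises in both (A) and (B) naturally from the Poisson-tail regime of the Chernoff bound, which is exactly the right tool when we must rule out a rare event exceeding its mean by an unbounded multiplicative factor.
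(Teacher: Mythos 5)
Your structural reduction is sound and in fact mirrors the paper's implicit decomposition: the set $G$ of ``good centers'' you define captures exactly the indices spoiled by gaps in $L$ or $R$, your bound $|G^c| = O(\delta'^{1/4}n)$ is correct, and your observation that it suffices to lower-bound $|Z^*|$ is exactly right. Your treatment of event (A) is also fine: $|Z \cap G^c|$ is genuinely hypergeometric, the Poisson-regime Chernoff bound does transfer to hypergeometrics (via Hoeffding's reduction or negative association), and plugging in $\mu = O(\delta'^{5/4}n)$ and $\lambda = \Omega(\delta'^{-1/4})$ does give $\exp(-\Omega(\delta' \log(1/\delta')\, n))$.

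The gap is in event (B). Your reduction from ``$\Pi \geq \delta' n/8$'' to ``the number of blocks with $\geq 2$ zeros plus the number of co-occupied adjacent block pairs is large'' is not valid: a single block containing $k$ zeros contributes $\binom{k}{2}$ ordered close pairs to $\Pi$ (and $k$ elements to the set of non-isolated zeros) but only $1$ to your block count, so $\Pi$ can far exceed these block statistics. The natural fix is to bound the number of zeros sitting in heavy blocks, i.e.\ $\sum_i k_i \mathbf{1}[k_i \geq 2]$, but this equals $\delta'n - |\{i : k_i = 1\}|$, which converts the problem into a \emph{lower}-tail deviation of roughly $10\%$ for the count of singleton blocks; standard (even NA-enhanced) lower-tail Chernoff then gives only $\exp(-\Omega(\delta' n))$, missing the crucial $\log(1/\delta')$ factor entirely, since the Poisson regime only helps on the upper tail. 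Alternatively, bounding $\Pi$ directly requires a concentration inequality for a degree-2 polynomial of hypergeometric indicators, which negative association of the block counts $k_i$ does not supply. The paper sidesteps all of this by revealing the $a \approx \delta' n$ zeros in $\bS_L$ one at a time and observing that \emph{conditioned on any history}, the $j$-th placement lands within $2\beta$ of a previous zero (or in a bad interval position) with probability at most $5\delta'^{1/4}$; a direct $\binom{a}{\delta'n/10} \cdot (5\delta'^{1/4})^{\delta' n /10}$ union bound then immediately delivers the $\log(1/\delta')$ in the exponent with no dependency analysis needed. You should replace your (B) concentration step with this sequential union-bound argument, or find a genuinely valid way to linearize $\Pi$ into bounded NA increments.
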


\begin{claim}
\label{claim:scorelargesmallprob}
Fix any well-spaced $S \in \zo^n_m.$ Then
$$
\Prx_{\bS' \sim \zo^n_m}\Big[\hspace{0.05cm}\score_{M }(S,\bS') \geq \big(1-\sqrt{\delta'}\big)n\hspace{0.05cm}\Big] \leq 2^{-\Omega(\delta' \log(1/\delta'))n}.
$$
\end{claim}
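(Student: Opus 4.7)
The plan is to bound the score by a per-interval analysis within the well-spaced structure.  For each $j \in \{1,\dots,\delta' n/2\}$, let $J_j := \{v'_{i_j},\dots,v'_{i_j+2\beta}\}$ be the corresponding length-$(2\beta+1)$ block of $R$, let $s_j$ count the synch-ups among the $2\beta+1$ edges $M_{i_j+k}$ for $k \in \{0,\dots,2\beta\}$, and set the per-interval loss $L_j := (2\beta+1) - s_j$.  Since these $\delta' n/2$ blocks of edges are pairwise disjoint and $M$ has $t = (1-4c_1\delta')n$ edges altogether, we have $\score_M(S,\bS') \leq t - \sum_j L_j$, so the event ``$\score_M(S,\bS') \geq (1-\sqrt{\delta'})n$'' forces $\sum_j L_j \leq (\sqrt{\delta'} - 4c_1\delta')n \leq \sqrt{\delta'}n$ for $c_1$ small enough.

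The heart of the proof will be the per-interval tail bound
\[
\Pr_{\bS'}[L_j \leq c] \leq O(c\delta') \qquad \text{for every integer } 1 \leq c \leq \beta/2.
\]
To establish it, write $w \in \zo^{2\beta+1}$ for the indicator string of $\bS' \cap J_j$ and $\tau := 1^\beta 0 1^\beta$ for that of $S \cap I_j$, and consider the walk $c_k := \sum_{\ell=0}^{k}(w_\ell - \tau_\ell)$, whose steps lie in $\{-1,0,+1\}$.  A short unwinding of the definitions reveals that $M_{i_j+k}$ synchs up iff $w_k = \tau_k = 1$ and $c_k = d_j$, where $d_j := |S \cap [1,v_{i_j}-1]| - |\bS' \cap [1,v'_{i_j}-1]|$ depends only on $\bS'$ outside $J_j$.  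Hence $L_j \leq c$ forces the walk $(c_k)$ to attain a common value on at least $2\beta+1 - O(c)$ positions, which in turn forces $w$ to differ from $\tau$ in at most $O(c)$ places.  Stratifying such $w$'s by their number $z$ of zeros, the dominant contribution comes from $z=1$ patterns (the pattern $1^\beta 0 1^\beta$ itself, and the single-zero patterns obtained by displacing the zero to within distance $O(c)$ of position $\beta$); there are $O(c)$ such patterns, each with probability at most $\delta' (1-\delta')^{2\beta} \leq \delta'$ under $\bS' \sim \zo^n_m$.  Patterns with $z \geq 2$ zeros contribute a geometric series in $c\delta'$ that is dominated by this bound, yielding the claimed $O(c\delta')$.

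To finish, I will use a Markov-type observation: if $\sum_j L_j \leq \sqrt{\delta'}n$ then at most $\sqrt{\delta'}n / (4/\sqrt{\delta'}) = \delta' n/4$ of the $\delta' n/2$ indices $j$ can have $L_j > 4/\sqrt{\delta'}$, so at least $\delta' n/4$ of them satisfy $L_j \leq 4/\sqrt{\delta'}$.  Because the blocks $J_j$ are pairwise disjoint, the events ``$\bS' \cap J_j$ lies in the good-pattern set'' are negatively correlated under the hypergeometric law of $\bS' \sim \zo^n_m$.  Combining the per-interval estimate with a union bound over the $\binom{\delta' n/2}{\delta' n/4} \leq 2^{\delta' n/2}$ choices of the small-loss subset yields
\[
\Pr_{\bS'}\Bigl[\textstyle\sum_j L_j \leq \sqrt{\delta'}n\Bigr] \leq 2^{\delta' n/2} \cdot \bigl( O(\sqrt{\delta'}) \bigr)^{\delta' n/4} \leq 2^{-\Omega(\delta' \log(1/\delta') n)}
\]
for all sufficiently small $\delta'$, as required.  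The main technical obstacle will be the enumeration in the middle step: small values of $L_j$ can arise from patterns $w$ differing from $\tau$ in somewhat intricate ways (for example $w = 1^{\beta-1} 0 1^{\beta+1}$ gives $L_j = 2$ via two compensating disagreements, and $w = 0\cdot 1^{\beta-1}\cdot 0\cdot 1^{\beta}$ also gives $L_j = 2$), so the random-walk viewpoint for $(c_k)$ is essential for handling all such cases uniformly and for verifying that the aggregate probability across all zero-counts is indeed $O(c\delta')$.
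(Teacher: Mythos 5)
Your approach is a genuinely different route from the paper's.  The paper never introduces per-interval losses, Markov's inequality, or negative correlation: it goes directly from ``each window $I_j'$ of length $8/\sqrt{\delta'}+1$ around the center of $J_j$ on which $\bS'$ is all-ones contributes at least $4/\sqrt{\delta'}$ non-synchs'' to ``a high score forces at least $\delta'n/4$ windows to contain a zero of $\bS'$,'' and then closes with a single hypergeometric tail estimate on the union of the disjoint windows (of size $<5\sqrt{\delta'}n$).  The observation driving that half-window of non-synchs is the same one underlying your per-interval tail bound, and in your walk language it is precisely the $+1$ jump in $w_\ell-\tau_\ell$ at $\ell=\beta$ inside an all-ones window: across $\beta$ the required equality $c_k=d_j$ shifts by one, so it can hold on at most one side of $\beta$.

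Two steps of your proposal carry gaps.  First, $\Pr[L_j\le c]\le O(c\delta')$ is the crux and is asserted rather than derived.  The statement ``$L_j\le c$ forces $w$ to differ from $\tau$ in $\le O(c)$ places'' on its own admits $\Theta(\beta)$ single-zero patterns, not $O(c)$; the sharper conclusion you need — that, for any given value of $d_j$, the compatible patterns are confined to $O(c)$ families — requires explicitly using the $+1$ shift at $\beta$.  ``The walk attains a common value on at least $2\beta+1-O(c)$ positions'' does not by itself rule out a lone zero far from $\beta$, and your own example $w=0\cdot 1^{\beta-1}\cdot 0\cdot 1^{\beta}$ (which needs $d_j=-1$) illustrates exactly that the bound on the zero's location is $d_j$-dependent and must be argued.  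Second, the negative-correlation step is invoked for events ``$\bS'\cap J_j$ lies in a good-pattern set,'' but $L_j\le c$ is \emph{not} a function of $\bS'\cap J_j$ alone: it depends on $d_j$, hence on $\bS'$ outside $J_j$, so the disjoint-block negative-correlation argument does not apply directly.  The clean repair is the paper's: pass from $L_j\le c$ to the implied, genuinely local event ``$\bS'$ has a zero within distance $O(c)$ of the center of $J_j$.''  But once you do that, the subsets union bound and negative correlation are superfluous — having $\delta'n/4$ of those disjoint local events hold already gives $\ge\delta'n/4$ zeros of $\bS'$ in a set of size $<5\sqrt{\delta'}n$, which is exactly the paper's one-shot hypergeometric tail.
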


\begin{proof}[Proof of \Cref{claim:well-spaced}]
We view the draw of $\bS$ as a sequential process in which the outcomes of different groups of coordinates are successively revealed.
We first reveal the outcome of $\bS_{[n]\setminus L}=S_{[n]\setminus L}$, and we consider the remaining distribution over the outcome of $\bS_L.$
Let $b$ be the number of $0$'s in $S_{[n]\setminus L}$. Then
  the remaining distribution of $\bS_L$ is uniform random over all strings in $\smash{\zo^L}$ 
  that contains exactly $a:=n-m-b$ many zeros.
Given that $b\le |[n]\setminus L|=4c_1\delta 'n\le 0.01\delta' n$ (using $c_1\le 1/400$)  
  we have $a \in [0.99\delta' n,\delta' n]$.

After $\bS_{[n]\setminus L}$ is drawn,
 we can view a draw of $\bS_{L}$ from the above-described distribution as being obtained through a sequential random process, proceeding for $a$ stages, where in the $j$-th stage, after locations $\bi_1,\dots,\bi_{j-1}$ in $[t]$ for zeros have been selected in the first $j-1$ stages, a new uniform random location $\bi_j$ in $[t] \setminus \{\bi_1,\dots,\bi_{j-1}\}$ is selected for the $j$-th zero (which means that the $v_{\bi_j}$-th entry of $\bS$ is set to zero). 
After each stage we keep track of the number of locations $i\in [t]$ selected 
  so far such that 
\begin{enumerate}
\item none of $i-2\beta,\ldots,i-1,i+1,\ldots,i+2\beta$ was selected so far; and
\item both $v_{i-\beta},\ldots,v_{i+\beta}$ and $v_{i-\beta}',\ldots,v_{i+\beta}'$ form an interval of length $2\beta+1$. 
\end{enumerate}
We write $\bX_j$ to denote this random variable after $j$ stages.
It suffices to show that $\bX_a$, after all $a$ stages, is at least $\delta'n/2$ with high probability.
 
To this end, we first notice that after the $j$-th stage, the number $\bX_j$ can go down from $\bX_{j-1}$ by at most two.
On the other hand, it goes up by one when $\bi_j$ is not one of the following ``disallowed'' locations $i\in [t]$:
\begin{flushleft}\begin{enumerate}
\item $v_{i-\beta},\ldots,v_{i+\beta}$ or $v_{i-\beta}',\ldots,v_{i+\beta}'$ 
  does not form an interval; the number of such $i\in [t]$ is at most 
  $2\cdot 2\beta\cdot (n-t).$
\item $i$ is within $2\beta$ of a location already picked; the number of such $i$ is 
  at most $(4\beta+1)a$.
\end{enumerate} \end{flushleft}
As a result, the probability that $\bX_j$ does not go up by one is at most
$$
\frac{4\beta(n-t)+(4\beta+1)a}{t-(j-1)}=\frac{16c_1\delta'^{1/4}n+(4\beta+1)a}{t-(j-1)}\le 5\delta'^{1/4},
$$
where we used $a\le \delta'n$. 
%
Consequently, the probability that out of the $a$ stages in which a location is chosen, at least 
  $\delta'n/10$ times it does not go up by one is at most
\begin{align*}
2^a \cdot (5\delta'^{1/4})^{ \delta' n / 10}\le 2^{\delta'n}\cdot (5\delta'^{1/4})^{ \delta' n / 10} 
= 2^{-\Omega(\delta' \log(1/\delta')n)},
\end{align*}
when $\delta'$ is sufficiently small.
If this does not happen, then $\bX_a$ at the end is at least $$a-(\delta'n/10)-2\cdot (\delta'n/10)\ge \delta'n/2$$ using $a\ge 0.99\delta'n$, so the claim is proved.
\end{proof}

\begin{proof}[Proof of \Cref{claim:scorelargesmallprob}]
Let $S$ be a well-spaced set in $\{0,1\}^n_m$, and 
  $I_1,\ldots,I_{\delta'n/2}$  be the $\delta'n/2$ intervals in $[n]$ of length $2\beta+1$ each  that satisfy the conditions of \Cref{def:well-spaced}.
For each $I_j$, we let $i_j\in [t]$ be such that $I_j=\{v_{i_j-\beta},\ldots,v_{i_j+\beta}\}$
  (so $v_{i_j}$ is the center of $I_j$).
Let $$I_j' 
=\left\{v'_{i_j-4/\sqrt{\delta'}},\ldots,v'_{i_j+4/\sqrt{\delta'}}\right\}
$$
for each $j$.
Note that since $\delta'$ is at most some sufficiently small constant, we have 
  $4/\sqrt{\delta'}<\beta$ and thus,
  $I_j'$'s are mutually disjoint intervals in $[n]$ because $I_j$'s satisfy
  conditions of \Cref{def:well-spaced}.

Let $\bS'\sim \zo^n_m$.
We claim that, 
in order to have $\smash{\score_{M}(S,\bS') \geq (1-\sqrt{\delta'})n}$, it must be the case 
  that $\bS'$ has at least $\delta'n/4$ many zero entries in the union of $I_j'$.
To see this, suppose that $\bS'$ has no more than $\delta'n/4$ many zeros in the union of $I_j'$.
Then at least $\delta'n/4$ many $I_j'$'s have all ones in $\bS'$.
For each such $j$, given that $S_{I_j}=1^\beta 0 1^\beta$, we have that either 
$$
\left(v_{i_j-4/\sqrt{\delta'}},v'_{i_j-4/\sqrt{\delta'}}\right),\ldots,\left(v_{i_j-1},v'_{i_j-1}\right)\quad\text{or}\quad
 \left(v_{i_j+1},v'_{i_j+1}\right),\ldots,\left(v_{i_j+4/\sqrt{\delta'}},v'_{i_j+4/\sqrt{\delta'}}\right) 
$$
are not synched. As a result, the number of pairs in $M$ that are not synched in $(S,\bS')$
  is at least $(\delta'n/4)\cdot (4/\sqrt{\delta'})\ge \sqrt{\delta'}n$ and thus, the score is at most $(1-\sqrt{\delta'})n$.

Finally we bound the probability of $\bS'\sim \zo^n_m$ having at least $\delta'n/4$ many zeros in the union of $I_j'$.
Given that the union has size
$$
\frac{\delta' n}{2}\cdot \left(\frac{8}{\sqrt{\delta'}}+1\right)<5\sqrt{\delta'}n,
$$
the probability is at most (where the summand $r$ is the number of zeros in the union of $I_j'$)
\[
\sum_{r=\delta'n/4}^{\delta' n} {\frac {{5 \sqrt{\delta'} n \choose r} \cdot
{n - 5 \sqrt{\delta'} n \choose \delta' n - r}}
{{n \choose \delta' n}}} \le \left(\frac{3\delta'n}{4}+1\right)\cdot {5 \sqrt{\delta'} n \choose \delta'n /4}\cdot 
{\frac { 
{n - 5 \sqrt{\delta'} n \choose 3\delta' n /4}}
{{n \choose \delta' n}}}
\]
given that the terms are maximized at $r=\delta'n/4$.
Using ${n\choose k}\le (en/k)^k$, we have 
$$
{5 \sqrt{\delta'} n \choose \delta'n /4}\le  \left(\frac{60}{\sqrt{\delta'}}\right)^{\delta'n/4}.
$$
On the other hand, we have 
$$
{\frac { 
{n - 5 \sqrt{\delta'} n \choose 3\delta' n /4}}
{{n \choose \delta' n}}}\le 
{\frac { 
{n \choose 3\delta' n /4}}
{{n \choose \delta' n}}}
=
\frac{(n-\delta'n)!\cdot (\delta'n)!}{(n-3\delta'n/4)!\cdot (3\delta'n/4)!}\le \left(\frac{\delta'n}{n-\delta'n}\right)^{\delta'n/4}\le (2\delta')^{\delta'n/4}.
$$
As a result, the probability is at most
$$
\left(\frac{3\delta'n}{4}+1\right)\cdot \left(120\sqrt{\delta'}\right)^{\delta'n/4}=2^{-\Omega(\delta'\log (1/\delta') n)}
$$
since $\delta'$ is at most some sufficiently small constant. This finishes the proof of \Cref{claim:scorelargesmallprob}.
\end{proof}

\begin{flushleft}
\bibliography{allrefs}{}

\newcommand{\etalchar}[1]{$^{#1}$}
\begin{thebibliography}{CDL{\etalchar{+}}21b}

\bibitem[BC22]{BC22}
Boris Bukh and Christopher Cox.
\newblock {Periodic words, common subsequences and frogs}.
\newblock {\em The Annals of Applied Probability}, 32(2):1295 -- 1332, 2022.

\bibitem[BCF{\etalchar{+}}19]{BCFSSfocs19}
Frank Ban, Xi~Chen, Adam Freilich, Rocco~A. Servedio, and Sandip Sinha.
\newblock Beyond trace reconstruction: Population recovery from the deletion
  channel.
\newblock In {\em 60th {IEEE} Annual Symposium on Foundations of Computer
  Science (FOCS)}, pages 745--768. {IEEE} Computer Society, 2019.

\bibitem[BCSS19]{BCSSrandom19}
Frank Ban, Xi~Chen, Rocco~A. Servedio, and Sandip Sinha.
\newblock Efficient average-case population recovery in the presence of
  insertions and deletions.
\newblock In {\em APPROX/RANDOM 2019}, volume 145 of {\em LIPIcs}, pages
  44:1--44:18. Schloss Dagstuhl - Leibniz-Zentrum f{\"{u}}r Informatik, 2019.

\bibitem[BKKM04]{BKKM04}
Tu\v{g}kan Batu, Sampath Kannan, Sanjeev Khanna, and Andrew McGregor.
\newblock Reconstructing strings from random traces.
\newblock In {\em {Proceedings of the Fifteenth Annual {ACM-SIAM} Symposium on
  Discrete Algorithms, {SODA} 2004}}, pages 910--918, 2004.

\bibitem[BM14]{BukhMa14}
Boris Bukh and Jie Ma.
\newblock Longest common subsequences in sets of words.
\newblock {\em {SIAM} J. Discret. Math.}, 28(4):2042--2049, 2014.

\bibitem[Buk22]{Bukh22}
B.~Bukh.
\newblock Personal communication, 2022.

\bibitem[CDK21]{CDK21}
Diptarka Chakraborty, Debarati Das, and Robert Krauthgamer.
\newblock Approximate trace reconstruction via median string (in average-case).
\newblock In {\em 41st {IARCS} Annual Conference on Foundations of Software
  Technology and Theoretical Computer Science (FSTTCS)}, volume 213 of {\em
  LIPIcs}, pages 11:1--11:23. Schloss Dagstuhl - Leibniz-Zentrum f{\"{u}}r
  Informatik, 2021.

\bibitem[CDL{\etalchar{+}}21a]{CDLSS20lowdeletion}
Xi~Chen, Anindya De, Chin~Ho Lee, Rocco~A. Servedio, and Sandip Sinha.
\newblock Polynomial-time trace reconstruction in the low deletion rate regime.
\newblock In {\em {12th Innovations in Theoretical Computer Science
  Conference}}, volume 185 of {\em LIPIcs}, pages 20:1--20:20, 2021.

\bibitem[CDL{\etalchar{+}}21b]{CDLSS20smoove}
Xi~Chen, Anindya De, Chin~Ho Lee, Rocco~A. Servedio, and Sandip Sinha.
\newblock Polynomial-time trace reconstruction in the smoothed complexity
  model.
\newblock In {\em Proceedings of the ACM-SIAM Symposium on Discrete
  Algorithms}, pages 54--73, 2021.

\bibitem[CDL{\etalchar{+}}22]{CDLSS22}
Xi~Chen, Anindya De, Chin~Ho Lee, Rocco~A. Servedio, and Sandip Sinha.
\newblock Near-optimal average-case approximate trace reconstruction from few
  traces.
\newblock In {\em Proceedings of the 2022 {ACM-SIAM} Symposium on Discrete
  Algorithms ({SODA} 2022)}, pages 779--821, 2022.

\bibitem[Cha21a]{Chase19}
Zachary Chase.
\newblock New lower bounds for trace reconstruction.
\newblock {\em Ann. Inst. H. Poincar\'{e} Probab. Statist.}, 57(2):627--643,
  2021.

\bibitem[Cha21b]{Chase20}
Zachary Chase.
\newblock Separating words and trace reconstruction.
\newblock In {\em {STOC} '21: 53rd Annual {ACM} {SIGACT} Symposium on Theory of
  Computing, Virtual Event, Italy, June 21-25, 2021}, pages 21--31. {ACM},
  2021.

\bibitem[CP21]{ChasePeres21}
Z.~Chase and Y.~Peres.
\newblock Approximate trace reconstruction of random strings from a constant
  number of traces.
\newblock Available at https://arxiv.org/abs/2107.06454, 2021.

\bibitem[CS75]{ChvatalSankoff75}
V\'{a}cl\'{a}v Chvatal and David Sankoff.
\newblock Longest common subsequences of two random sequences.
\newblock {\em J. Appl. Probability}, 12:306--315, 1975.

\bibitem[DOS17]{DOS17}
Anindya De, Ryan O'Donnell, and Rocco~A. Servedio.
\newblock Optimal mean-based algorithms for trace reconstruction.
\newblock In {\em Proceedings of the 49th ACM Symposium on Theory of Computing
  (STOC)}, pages 1047--1056, 2017.

\bibitem[DRRS21]{DRRS20}
Sami Davies, Miklos~Z. R\'{a}cz, Cyrus Rashtchian, and Benjamin~G. Schiffer.
\newblock Approximate trace reconstruction: Algorithms.
\newblock In {\em {IEEE} International Symposium on Information Theory}, 2021.

\bibitem[GHS20]{GHS20}
Venkatesan Guruswami, Bernhard Haeupler, and Amirbehshad Shahrasbi.
\newblock Optimally resilient codes for list-decoding from insertions and
  deletions.
\newblock In {\em Proccedings of the 52nd Annual {ACM} {SIGACT} Symposium on
  Theory of Computing (STOC)}, pages 524--537. {ACM}, 2020.

\bibitem[GSZ21]{GSZ20}
Elena Grigorescu, Madhu Sudan, and Minshen Zhu.
\newblock Limitations of mean-based algorithms for trace reconstruction at
  small distance.
\newblock In {\em {IEEE} International Symposium on Information Theory}, 2021.

\bibitem[HHP18]{HHP18}
Lisa Hartung, Nina Holden, and Yuval Peres.
\newblock Trace reconstruction with varying deletion probabilities.
\newblock In {\em Proceedings of the Fifteenth Workshop on Analytic
  Algorithmics and Combinatorics, {ANALCO} 2018, New Orleans, LA, USA, January
  8-9, 2018.}, pages 54--61, 2018.

\bibitem[HMPW08]{HMPW08}
Thomas Holenstein, Michael Mitzenmacher, Rina Panigrahy, and Udi Wieder.
\newblock Trace reconstruction with constant deletion probability and related
  results.
\newblock In {\em Proceedings of the Nineteenth Annual {ACM-SIAM} Symposium on
  Discrete Algorithms, {SODA} 2008}, pages 389--398, 2008.

\bibitem[HPP18]{HPP18}
Nina Holden, Robin Pemantle, and Yuval Peres.
\newblock Subpolynomial trace reconstruction for random strings and arbitrary
  deletion probability.
\newblock In {\em Conference On Learning Theory, {COLT} 2018, Stockholm,
  Sweden, 6-9 July 2018}, volume~75 of {\em Proceedings of Machine Learning
  Research}, pages 1799--1840. {PMLR}, 2018.

\bibitem[HPPZ19]{HPPZ20}
Nina Holden, Robin Pemantle, Yuval Peres, and Alex Zhai.
\newblock Subpolynomial trace reconstruction for random strings and arbitrary
  deletion probability.
\newblock {\em Mathematical Statistics and Learning}, 2(3/4):275--309, 2019.

\bibitem[Kal73]{Kalashnik73}
V.~V. Kalashnik.
\newblock Reconstruction of a word from its fragments.
\newblock {\em Computational Mathematics and Computer Science (Vychislitel'naya
  matematika i vychislitel'naya tekhnika), Kharkov}, 4:56--57, 1973.

\bibitem[KB80]{KaasBuhrman80}
R.~Kaas and J.M. Buhrman.
\newblock {Mean, Median and Mode in Binomial Distributions}.
\newblock {\em {statistica neerlandica}}, 34(1):13--18, 1980.

\bibitem[KLM05]{kiwi2005expected}
Marcos Kiwi, Martin Loebl, and Ji{\v{r}}{\'\i} Matou{\v{s}}ek.
\newblock Expected length of the longest common subsequence for large
  alphabets.
\newblock {\em {Advances in Mathematics}}, 197(2):480--498, 2005.

\bibitem[KMMP19]{KMMP19}
Akshay Krishnamurthy, Arya Mazumdar, Andrew McGregor, and Soumyabrata Pal.
\newblock Trace reconstruction: Generalized and parameterized.
\newblock In {\em 27th Annual European Symposium on Algorithms, {ESA} 2019},
  volume 144 of {\em LIPIcs}, pages 68:1--68:25. Schloss Dagstuhl -
  Leibniz-Zentrum f{\"{u}}r Informatik, 2019.

\bibitem[Lev01a]{Lev01b}
Vladimir Levenshtein.
\newblock Efficient reconstruction of sequences.
\newblock {\em IEEE Transactions on Information Theory}, 47(1):2--22, 2001.

\bibitem[Lev01b]{Lev01a}
Vladimir Levenshtein.
\newblock Efficient reconstruction of sequences from their subsequences or
  supersequences.
\newblock {\em Journal of Combinatorial Theory Series A}, 93(2):310--332, 2001.

\bibitem[Lue09]{Lueker09}
George~S. Lueker.
\newblock Improved bounds on the average length of longest common subsequences.
\newblock {\em J. {ACM}}, 56(3):17:1--17:38, 2009.

\bibitem[MPV14]{MPV14}
Andrew McGregor, Eric Price, and Sofya Vorotnikova.
\newblock Trace reconstruction revisited.
\newblock In {\em Proceedings of the 22nd Annual European Symposium on
  Algorithms}, pages 689--700, 2014.

\bibitem[NP17]{NazarovPeres17}
Fedor Nazarov and Yuval Peres.
\newblock {Trace reconstruction with $\exp(O(n^{1/3}))$ samples}.
\newblock In {\em Proceedings of the 49th Annual {ACM} {SIGACT} Symposium on
  Theory of Computing, {STOC} 2017}, pages 1042--1046, 2017.

\bibitem[NR21]{NarayananRen20}
Shyam Narayanan and Michael Ren.
\newblock {Circular Trace Reconstruction}.
\newblock In {\em 12th Innovations in Theoretical Computer Science Conference
  (ITCS 2021)}, pages 18:1--18:18, 2021.

\bibitem[PZ17]{PeresZhai17}
Yuval Peres and Alex Zhai.
\newblock Average-case reconstruction for the deletion channel: Subpolynomially
  many traces suffice.
\newblock In {\em 58th {IEEE} Annual Symposium on Foundations of Computer
  Science, {FOCS} 2017, Berkeley, CA, USA, October 15-17, 2017}, pages
  228--239. {IEEE} Computer Society, 2017.

\bibitem[SB21]{SB21}
Jin Sima and Jehoshua Bruck.
\newblock Trace reconstruction with bounded edit distance.
\newblock In {\em IEEE International Symposium on Information Theory}, 2021.
\newblock Manuscript, available at
  \href{https://arxiv.org/abs/2102.05372}{https://arxiv.org/abs/2102.05372}.

\bibitem[SDDF18]{SDDF18}
Sundara~Rajan Srinivasavaradhan, Michelle Du, Suhas Diggavi, and Christina
  Fragouli.
\newblock On maximum likelihood reconstruction over multiple deletion channels.
\newblock In {\em {IEEE} International Symposium on Information Theory, {ISIT}
  2018}, pages 436--440, 2018.

\bibitem[Tur66]{Turner66}
L.~Richard Turner.
\newblock Inverse of the vandermonde matrix with applications.
\newblock NASA technical note D-3547 available at \href{available at
  http://ntrs.nasa.gov/archive/nasa/casi.ntrs.nasa.gov/19660023042.pdf.}{available
  at http://ntrs.nasa.gov/archive/nasa/casi.ntrs.nasa.gov/19660023042.pdf.},
  1966.

\bibitem[vL82]{vanLint}
J.~H. van Lint.
\newblock {\em Introduction to Coding Theory}.
\newblock Springer Science+Business Media, 1982.

\bibitem[{Wik}22a]{ChvatalSankoff:wikipedia}
{Wikipedia contributors}.
\newblock {Chv\'{a}tal-Sankoff constants}.
\newblock Wikipedia, The Free Encyclopedia, accessed June 23, 2022.
\newblock https://en.wikipedia.org/wiki/Chv\'{a}tal-Sankoff$\_$constants.

\bibitem[{Wik}22b]{Subadditivity:wikipedia}
{Wikipedia contributors}.
\newblock {Subadditivity}.
\newblock Wikipedia, The Free Encyclopedia, accessed June 23, 2022.
\newblock https://en.wikipedia.org/wiki/Subadditivity.

\end{thebibliography}
\bibliographystyle{alpha}
\end{flushleft}

\appendix


\section{An upper bound on average-case zero-trace reconstruction} \label{ap:upper-bound-c2}

We recall from \Cref{sec:average-case} that in the asymptotic limit, the best possible performance of any zero-trace average-case reconstruction algorithm is given by
\[
c_2 = \lim_{n \to \infty} \max_{z \in \zo^n} {\frac {\E_{\bx \sim \zo^n}[|\LCS(\bx,\bz)|]}{n}},
\]
and from \Cref{sec:average-case-small-rho} that this quantity equals $\lim_{n \to \infty} {\frac {L_{0,\avg}(n)}{n}}.$

Via an involved analysis, Bukh and Cox show that $\E_{\bx \sim \zo^n}[|\LCS(\bx,w)|] \geq 0.82118$ where $w$ is the $n$-bit string $(0110111010010110010001011010)^{n/28},$ and hence $c_2 \geq 0.82118.$  We give an upper bound on $c_2$:

\begin{claim} \label{claim:upper-bound-c2}
$c_2 \leq 0.88999.$
\end{claim}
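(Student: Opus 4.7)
The plan is to union-bound over length-$k$ subsequences of the fixed target string $z$, combined with a clean $s$-independent bound on the probability that any given $s \in \zo^k$ appears as a subsequence of $\bx \sim \zo^n$. This refines the naive union bound over matchings (which would use $\binom{n}{k}^2 \, 2^{-k}$ and only yields the weaker bound $c_2 \leq 0.906$).

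First I would establish the following observation: for every $s \in \zo^k$,
\[
\Pr_{\bx \sim \zo^n}\bigl[\,s \text{ is a subsequence of } \bx\,\bigr] \;=\; \Pr\bigl[\Bin(n,1/2) \geq k\bigr].
\]
This follows from the optimality of greedy left-to-right matching: scan $\bx$ while maintaining a pointer $i \in \{0,1,\ldots,k\}$ into $s$, and advance $i$ whenever the current bit $\bx_t$ equals $s_{i+1}$. Since each bit of $\bx$ is an independent uniform bit, at every step the pointer advances with probability exactly $1/2$ regardless of what $s_{i+1}$ is, so the number of advances over the $n$ scans is distributed as $\Bin(n,1/2)$. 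A standard exchange argument shows that greedy is optimal, so $s$ is a subsequence of $\bx$ iff the pointer reaches $k$, iff the binomial is at least $k$.

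Next, for any fixed $z \in \zo^n$, the number of distinct length-$k$ subsequences of $z$ is at most $\binom{n}{k}$ (one per choice of position tuple). Union-bounding and invoking \Cref{fact:standard-bound} (extended to $k \geq n/2$ by symmetry of the binomial coefficient and of $H$), together with the standard Chernoff/entropy tail bound $\Pr[\Bin(n,1/2) \geq k] \leq 2^{-(1-H(k/n))n}$, gives
\[
\Pr_{\bx \sim \zo^n}\bigl[\,|\LCS(z,\bx)| \geq k\,\bigr]
\;\leq\; \binom{n}{k}\cdot \Pr\bigl[\Bin(n,1/2) \geq k\bigr]
\;\leq\; 2^{(2H(k/n)-1)\,n}.
\]
In particular, whenever $H(k/n) < 1/2$, this probability is exponentially small in $n$.

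Finally, a direct numerical check shows $H(0.88999) < 1/2$ (it suffices to verify $-0.88999\,\ln 0.88999 - 0.11001\,\ln 0.11001 < \tfrac{1}{2}\ln 2$). Taking $\alpha = 0.88999$ and writing $\E[|\LCS(z,\bx)|] = \sum_{k \geq 1} \Pr[|\LCS(z,\bx)| \geq k]$, we bound the first $\lceil\alpha n\rceil$ summands trivially by $1$ and the remaining ones by the geometric tail $2^{(2H(k/n)-1)n}$ to conclude that $\E[|\LCS(z,\bx)|] \leq 0.88999\,n + o(n)$ uniformly in $z \in \zo^n$. Taking the maximum over $z$ and passing to the limit yields $c_2 \leq 0.88999$. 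No step poses a serious obstacle; the main conceptual point is that replacing the crude per-matching factor $2^{-k}$ by the sharper per-subsequence factor $\Pr[\Bin(n,1/2) \geq k]$ shifts the effective entropy threshold from $2H(\alpha) < \alpha$ down to $2H(\alpha) < 1$.
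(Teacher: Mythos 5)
Your proposal is correct and follows essentially the same route as the paper: both union-bound over the $\binom{n}{k}$ choices of which positions of $z$ survive (equivalently, length-$k$ subsequences of $z$), and for each choice bound the probability of a full match into $\bx$ via the greedy-matching argument, reducing it to $\Pr[\Bin(n,1/2)\ge k]$ and then to the entropy condition $2H(0.11001)<1$. The only differences are cosmetic: you phrase the union bound over subsequences rather than over position sets, and you write out the tail-sum step $\E[|\LCS|]=\sum_k\Pr[|\LCS|\ge k]$ that the paper leaves implicit.
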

\begin{proof}
Fix $z \in \zo^n$ to be the optimal string that maximizes $\E_{\bx \sim \zo^n}[|\LCS(\bx,\bz)|]$. The claimed bound on $c_2$ follows from
\begin{equation}
\label{eq:mlp}
\Prx_{\bx \sim \zo^n}[z \text{~has a matching of size $0.88999n$ with $\bx$}] \leq o(1),
\end{equation}
which we establish below by showing that
\begin{equation}
\label{eq:mlp2}
\sum_{S \subseteq [n], |S| = 0.88999n}\Prx_{\bx \sim \zo^n}[z_S \text{~matches entirely into $\bx$}] \leq o(1).
\end{equation}
Via a union bound, \Cref{eq:mlp2} in turn follows from showing that for any $t$-bit string $y$, where $t:=0.88999n$, we have
\begin{equation}
\label{eq:mlp3}
\Prx_{\bx \sim \zo^n}[y \text{~matches entirely into $\bx$}] = {\frac {o(1)}{{n \choose 0.88999n}}}.
\end{equation}

Fix any $t$-bit string $y$ and any $n$-bit string $x$. 
The ``greedy strategy'' for (attempting to) entirely match $y$ into $x$ is the approach which maintains two pointers $p_y$ (into the coordinates of $y$) and $p_x$ and scans across $x$ by successively incrementing $p_x$, matching each coordinate of $y$ and incrementing $p_y$ whenever it is possible to do so.  
We recall the following well-known fact:

\begin{claim} [Greedy matching is optimal for entirely matching one string into another] \label{claim:greedy-optimal}
There is some matching that entirely matches $y$ into $x$ if and only if the greedy strategy succeeds in entirely matching $y$ into $x$.
\end{claim}


We return to establishing \Cref{eq:mlp3}.
By \Cref{claim:greedy-optimal}, $y$ matches entirely into $\bx$ if and only if the greedy strategy matches $y$ entirely into $\bx$.  
We may view a uniform $\bx \sim \zo^n$ as being generated by successively tossing coins for the successive bits of $\bx$; from this perspective it is clear that $\Pr_{\bx \sim \zo^n}[$the greedy strategy successfully matches $y$ entirely into $\bx]$ is precisely the probability that a sequence of $n$ fair coin tosses has at least $t$ ``heads'' (the $i$-th coin toss coming up ``heads'' corresponds to the $i$-th bit $\bx_i$ matching the bit of $y$ currently pointed to by $p_y$).
By \Cref{fact:standard-bound}, this probability is at most
\begin{equation} \label{eq:plum}
{\frac {2^{H(0.11001)n}}{2^n}},
\end{equation}
so again using \Cref{fact:standard-bound} and $2H(0.11001) < 1$ we get that $(\ref{eq:plum}) = {\frac {o(1)}{{n \choose 0.88999n}}}$ as required.
\end{proof}


\section{A simple upper bound on average-case one-trace reconstruction in the small deletion rate regime } \label{sec:small-delta-average-case-weak-upper-bound}

In this section we give a simple upper bound on the best possible expected LCS that any one-trace algorithm can achieve in the average-case small-deletion-rate regime. 
The argument, which is based on a union bound over all possible matchings of a given size, is significantly simpler than the proof of \Cref{thm:small-delta-average-case-upper-bound}, but it yields a result that is quantitatively weaker by a $\Theta(\log(1/\delta))$ factor.  

\begin{theorem} 
[Weak average-case upper bound on any algorithm, small deletion rate]
\label{thm:small-delta-average-case-upper-bound-weak}
Let $\delta=\delta(n)$ be any $\omega(1/n)$ deletion rate.
There is an absolute constant $c>0$ such that for sufficiently large $n$ we have
$L_{1,\avg}(\delta,n) \leq (1 - c \delta/\log(1/\delta))n.$
\end{theorem}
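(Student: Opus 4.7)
The plan is to follow the high-level structure of the proof of the stronger \Cref{thm:small-delta-average-case-upper-bound}, but to replace the delicate tail analysis of $\score_M(\bS,\bS')$ (which powers \Cref{lem:highscoreunlikely}) with a crude union bound over candidate matchings. First I would re-use unchanged the reduction carried out in the proof of \Cref{thm:small-delta-average-case-upper-bound}: setting $\delta' := \delta/2$ and $m := (1-\delta')n$, the bound in \Cref{eq:blabla1} together with the binomial-median argument gives $L_{1,\avg}(\delta,n) \leq \tfrac{1}{2}\OPT_m + \tfrac{1}{2}n$, and \Cref{claim:peepy} reduces bounding $\OPT_m$ to bounding $\E_{\by \sim \zo^m} \E_{\bx,\bx' \sim \by}[|\LCS(\bx,\bx')|]$. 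So it suffices to show that for $t := (1-\epsilon)n$ with $\epsilon := c'\delta'/\log(1/\delta')$ and a sufficiently small absolute constant $c'$,
\begin{equation}\label{eq:wbul-goal}
\Prx_{\by,\bS,\bS',\br,\br'}\bigl[\text{some candidate matching of size } t \text{ is valid for } (\bx,\bx')\bigr] \;=\; o(1),
\end{equation}
since this immediately yields $\E[|\LCS(\bx,\bx')|] \leq t + n \cdot o(1)$.

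To establish \Cref{eq:wbul-goal}, I would apply a union bound over all $\binom{n}{t}^2 \leq 2^{2H(\epsilon)n}$ candidate matchings of size $t$ (using \Cref{fact:standard-bound}). For a fixed candidate matching $M$ of size $t$, \Cref{claim:score} gives
\[
\Pr[M \text{ valid}] \;=\; 2^{-t}\, \E_{\bS,\bS'}\bigl[2^{\,\ell_M(\bS,\bS')}\bigr],
\]
where $\ell_M(\bS,\bS') = \score_M(\bS,\bS')$. Writing $\ell_M = \sum_{i=1}^t X_i$ where $X_i$ indicates that edge $M_i$ synchs up with $(\bS,\bS')$ and expanding $2^{\ell_M} = \prod_i (1 + X_i)$ gives $\E[2^{\ell_M}] = \sum_{L \subseteq [t]} \Pr[X_i = 1 \text{ for all } i \in L]$. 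For any fixed $L$ of size $\ell$, the event that every edge in $L$ synchs up is contained in the coarser event that $v_i \in \bS$ and $v'_i \in \bS'$ for every $i \in L$; by the independence of $\bS$ and $\bS'$ together with the hypergeometric identity $\binom{n-\ell}{m-\ell}/\binom{n}{m} = \prod_{j=0}^{\ell-1}(m-j)/(n-j) \leq (1-\delta')^\ell$, this coarser event has probability at most $(1-\delta')^{2\ell}$. Summing over $L$ yields
\[
\E\bigl[2^{\ell_M}\bigr] \;\leq\; \sum_{\ell=0}^{t}\binom{t}{\ell}(1-\delta')^{2\ell} \;=\; \bigl(1 + (1-\delta')^2\bigr)^t,
\]
and hence $\Pr[M \text{ valid}] \leq \bigl((1 + (1-\delta')^2)/2\bigr)^t = (1 - \delta' + \delta'^2/2)^t \leq e^{-t\delta'/2}$.

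Combining the two estimates with a union bound,
\[
\Pr[\text{some candidate matching of size } t \text{ is valid}] \;\leq\; 2^{2H(\epsilon)n} \cdot e^{-t\delta'/2} \;\leq\; 2^{2H(\epsilon)n - \Omega(\delta' n)}.
\]
For $\epsilon = c'\delta'/\log(1/\delta')$ with $c'$ sufficiently small, the standard estimate $H(\epsilon) \leq 2\epsilon \log(1/\epsilon)$ gives $H(\epsilon) = O(\delta')$ with a small implicit constant, so the exponent above is $-\Omega(\delta' n)$. Since $\delta = \omega(1/n)$ we have $\delta' n \to \infty$, so this upper bound is $o(1)$, establishing \Cref{eq:wbul-goal}. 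Tracking constants through \Cref{claim:peepy} and the $\tfrac{1}{2}\OPT_m + \tfrac{1}{2}n$ inequality then gives $L_{1,\avg}(\delta,n) \leq (1 - c\delta/\log(1/\delta))n$ for a suitable absolute constant $c$.

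The main obstacle -- or rather, the source of the $\log(1/\delta)$ loss relative to the sharper \Cref{thm:small-delta-average-case-upper-bound} -- is the crudeness of the union bound: enumerating over all $\approx 2^{H(\epsilon)n}$ candidate matchings forces us to shrink $\epsilon$ until the entropy cost $H(\epsilon)n$ is dominated by the per-matching savings $\delta' n$, which only happens when $\epsilon \log(1/\epsilon) = O(\delta')$, i.e.\ $\epsilon = O(\delta/\log(1/\delta))$. The sharper theorem avoids this loss by first showing (in \Cref{claim:well-spaced} and \Cref{claim:scorelargesmallprob}) that the ``large $\score_M$'' event is itself exponentially unlikely in $\delta n$, buying back an extra factor of $\log(1/\delta)$. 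Apart from the choice of $c'$ and a straightforward verification that the hypergeometric estimate $\Pr[X_i = 1 \text{ for all } i \in L] \leq (1-\delta')^{2|L|}$ indeed holds, the argument above is essentially routine.
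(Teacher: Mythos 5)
Your proposal is correct in its essentials, but it takes a genuinely different route from the paper's proof of this statement, and there is one small quantitative gap worth flagging.

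\textbf{Comparison of approaches.} The paper's proof of \Cref{thm:small-delta-average-case-upper-bound-weak} (in \Cref{sec:small-delta-average-case-weak-upper-bound}) is direct and self-contained: after the binomial-median reduction, it compares the source $\bx$ \emph{against the hypothesis string $z$ itself}, using \Cref{cor:uniform} to observe that the $\delta' n$ unseen bits of $\bx$ are independent and uniform, and then arguing that any fixed candidate matching of size $(1-\tau)n$ between $z$ and $\bx$ must touch at least $(\delta'/2)n$ of those unseen bits and so is valid with probability at most $2^{-(\delta'/2)n}$, after which a $\binom{n}{\tau n}^2$ union bound and \Cref{fact:standard-bound} finish the proof. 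You instead borrow the entire front end of the \emph{strong} theorem's proof: the triangle-inequality coupling (\Cref{claim:peepy}) reducing to $\E_{\by}\E_{\bx,\bx'\sim\by}[|\LCS(\bx,\bx')|]$, and \Cref{claim:score} for the per-matching probability, together with a nice moment-generating-function calculation $\E[2^{\score_M}] = \E\bigl[\prod_i(1+X_i)\bigr] \le (1+(1-\delta')^2)^t$ that appears nowhere in the paper. Your route is more uniform (the weak and strong theorems now differ only in the tail bound on $\score_M$), at the cost of being somewhat longer than the paper's direct argument; the paper's version buys brevity and independence from the machinery of \Cref{sec:worst-case-small-rho-informal}.

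\textbf{A gap in the final step.} The statement ``this immediately yields $\E[|\LCS(\bx,\bx')|] \leq t + n \cdot o(1)$'' is too lossy to close the argument across the whole range $\delta = \omega(1/n)$. What you need is $\E[|\LCS(\bx,\bx')|] \le (1-\Omega(\epsilon))n$ with $\epsilon = c'\delta'/\log(1/\delta')$, and $t + n\cdot o(1) \le (1-\Omega(\epsilon))n$ forces $o(1) \le O(\epsilon)$. But your probability bound is $2^{-\Omega(\delta' n)}$, and when $\delta'$ is near the bottom of the allowed range (say $\delta' = (\log\log n)/n$, so $\delta' n = \log\log n$ while $\log(1/\delta') \approx \log n$), one has $2^{-\Omega(\delta' n)} = (\log n)^{-O(1)}$, which is \emph{much larger} than $\epsilon \approx \log\log n/(n\log n)$. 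The fix is standard and cheap: use the sharper estimate
\[
\E\bigl[|\LCS(\bx,\bx')|\bigr] \;\le\; t + (n-t)\cdot\Prx\bigl[\exists\text{ valid matching of size } t\bigr] \;=\; (1-\epsilon)n + \epsilon n \cdot \Prx[\cdots],
\]
which already gives $(1-\epsilon/2)n$ provided only that the probability is at most $1/2$ — and $2^{-\Omega(\delta' n)} \le 1/2$ for all large $n$ follows directly from $\delta = \omega(1/n)$, i.e.\ $\delta' n \to \infty$. (This is the same reason the paper's own proof is content with a probability bound of $0.9$.) With this one-line replacement your argument is complete.

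Two minor remarks: (i) your hypergeometric step $\Pr[\text{all } i \in L\text{ synch up}] \le \Pr[v_i \in \bS, v_i' \in \bS' \ \forall i \in L] \le (1-\delta')^{2|L|}$ is correct since ``synching up'' implies membership; (ii) the resulting estimate $\Pr[M\text{ valid}] \le (1-\delta'+\delta'^2/2)^t$ is, if anything, a slightly cleaner per-matching bound than the $2^{-(\delta'/2)n}$ the paper obtains, though both give the same order of decay.
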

\begin{proof}
As in the beginning of the proof of \Cref{thm:small-delta-average-case-upper-bound}, by recalling the well-known fact \cite{KaasBuhrman80} that the median of the $\Bin(n,\delta)$ distribution belongs to $\{\lfloor n \delta \rfloor,\lceil n \delta \rceil\}$, since $\delta =\omega(1/n)$ we have that with probability $\Omega(1)$ the length $|\by|$ of a random trace $\by$ drawn from the $\delta$-deletion channel is at least $(1-\Omega(\delta))n =: (1-\delta')n$. 
Hence to upper bound $L_{1,\avg}(\delta,n)$ as claimed, it suffices to show the following: for any one-trace algorithm $A$ that is given as input a uniform random trace $\by$, of length exactly $(1-\delta')n$, from a uniform random source string $\bx \sim \zo^n$, we have 
\begin{equation} \label{eq:apxgoal}
\Prx_{\bx \sim \zo^n}\left[A \text{~outputs a hypothesis string $z$ with~}|\LCS(\bx,z)| \geq \left(1 - {\frac {c \delta'}{\log(1/\delta'}}\right)n\right] \leq 0.9.
\end{equation}

We first recall from \Cref{cor:uniform} that given a trace $\by$ of length $(1-\delta')n$ from a uniform $\bx \sim \zo^n$,  the $\delta' n$ bits of $\bx_{\bD}$ that are missing from $\by$ are independent and uniform random. 
Next, we note that any candidate matching $\mu$ of size $(1-\tau)n$ between a source string $x \in \zo^n$ and a hypothesis string $z \in \zo^n$ is completely specified by two subsets $S = \{i_1 < \cdots < i_{\tau n}\} \subset[n] $ and $S' = \{j_1 < \cdots < j_{\tau n}\}  \subset[n] $ of size $\tau n$, where $S$ ($S'$, respectively) is the set of positions in $x$ (positions in $z$, respectively) that do not participate in the matching.

Fix any hypothesis string $z \in \zo^n$ (here $z$ may depend on the trace $\by \sim \zo^{(1-\delta')n}$ that algorithm $A$ receives as input).  
Consider a fixed candidate matching $\mu$ of size $(1-\tau)n$ between $z$ and $\bx$, defined by two fixed sets $S,S'$ as described above.  
For $\tau<\delta'/2$ (which will be the case given our final parameter setting for $\tau$), even if all $\tau n$ positions in $S$ are contained in the deleted locations $\bD$, there are at least $(\delta - \tau)n \geq (\delta'/2) n$ bits in $\bx_{\bD}$ that are not present in $\by$ but are matched to some bits of $z$ by the candidate matching $\mu$. 
As mentioned above, these bits are independently uniform random, and so the probability that $\mu$ successfully matches all of those (at least) $(\delta'/2)n$ bits with the right outcomes of their partners in $z$ is at most $2^{-(\delta'/2)n}.$ 
It follows that $\Pr_{\bx}[$the candidate matching $\mu$ is a valid matching between $z$ and $\bx] \leq 2^{-(\delta'/2)n}$.  
Hence we have
\begin{align*}
&\Prx_{\bx \sim \zo^n}[\text{there exists some matching of size $(1-\tau)n$ between $\bx$ and $z$}] \\
&\leq {n \choose \tau n}^2 \cdot 2^{-(\delta'/2)n}
\leq 2^{(2H(\tau) - \delta'/2)n}
\leq 2^{-(\delta'/4)n} \leq 0.9,
\end{align*}
where the first inequality is by a union bound over all ${n \choose \tau n}^2$ many candidate matchings of size $(1-\tau)n$, the second is \Cref{fact:standard-bound}, the
third holds by choosing $\tau = c\delta'/\log(1/\delta')$ for a suitable absolute constant $c$, and the fourth (with room to spare) is because $\delta'$, like $\delta$, is $\omega(1/n).$
\end{proof}

\section{No constant-size $(2/3 + \eps)n$-LCS cover for any constant $\eps>0$} \label{ap:bestLCScover}

\begin{claim}
For any positive constant $\eps$, any $(2/3+\eps)n$-LCS cover $S\subseteq \{0,1\}^n$ 
must have size $\Omega(\log n)$.
\end{claim}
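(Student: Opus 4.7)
The plan is to derive this lower bound directly from the Bukh--Ma code analysis stated in \Cref{thm:GHS}. Fix a constant $\eps > 0$ and let $S \subseteq \{0,1\}^n$ be any $(2/3 + \eps)n$-LCS cover. I would instantiate the Bukh--Ma code at parameter $\eps' := 6\eps$, so that the threshold $(2/3 + \eps'/6)n$ appearing in \Cref{thm:GHS} is exactly $(2/3 + \eps)n$, and take the segment-length parameter to be $\ell = n$ (or $\ell = \Theta(n)$ respecting the divisibility requirement that $2/\eps'^{4u}$ divide $\ell$; this only affects lower-order terms since $\eps$ is constant). The resulting code $C_{n,6\eps}$ then has size $\tfrac{1}{2}\log_{1/(6\eps)^4} n = \Theta(\log n)$.

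Next I would set up the counting argument. For every codeword $A_u \in C_{n,6\eps}$, the cover property of $S$ supplies some $s_u \in S$ with $|\LCS(s_u, A_u)| \ge (2/3+\eps)n$. This defines a map $A_u \mapsto s_u$ from $C_{n,6\eps}$ into $S$. Applying \Cref{thm:GHS} with code parameter $6\eps$ and with $s \in S$ playing the role of $x$ (legitimate since $\LCS$ is symmetric and both strings have length $n$), each fixed $s \in S$ can be the image of at most $1200/(6\eps)^3 = O(1/\eps^3) = O(1)$ codewords. Hence
\[
|S| \;\ge\; \frac{|C_{n,6\eps}|}{1200/(6\eps)^3} \;=\; \Omega_\eps(\log n),
\]
which gives the claim.

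I do not expect any real obstacle here: the entire content of the argument is a single pigeonhole/counting step against \Cref{thm:GHS}, and the only minor bookkeeping is (i) translating the ``cover threshold'' $(2/3+\eps)n$ into the form $(2/3+\eps'/6)n$ required by the theorem by rescaling $\eps' = 6\eps$, and (ii) ensuring that $\ell = \Theta(n)$ satisfies the mild divisibility required to instantiate all codewords $(0^r 1^r)^{n/(2r)}$ with $r = 1/\eps'^{4u}$, which can be handled by taking $n$ in a suitable arithmetic subsequence (equivalently, padding), without changing the $\Omega(\log n)$ conclusion.
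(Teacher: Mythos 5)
Your proposal is correct and is essentially the same argument the paper gives: rescale $\eps' = 6\eps$ so the threshold matches \Cref{thm:GHS}, instantiate the Bukh--Ma code $C_{n,\eps'}$ of size $\Theta(\log n)$, and observe that since each $s \in S$ can ``cover'' at most $1200/\eps'^3 = O_\eps(1)$ codewords while every codeword must be covered, pigeonhole gives $|S| = \Omega(\log n)$. The only cosmetic difference is that you phrase the count as a map from codewords into $S$ whereas the paper phrases it as bounding the total number of covered codewords, which are the same thing; your remark about divisibility of $\ell$ is a reasonable (and harmless) bit of extra care the paper leaves implicit.
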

\begin{proof}
Let $\eps$ be a positive constant and let $\eps'=6\eps$.
Let $S\subseteq \zo^n$ be a $(2/3+\eps)n$-LCS cover for strings of length $n$.
As explained in \Cref{sec:bukh-ma}, by arguments given in the proof of Theorem~1.4 of \cite{GHS20}, for any $x \in \zo^n$ (and hence in particular for each string $x \in S$), there can be at most $ {1200}/{\eps'^3}$ many strings $a \in C_{n,\eps'}$ that have $\abs{\LCS(x,a)} \geq (2/3 + \eps'/6)n=(2/3+\eps)n.$ 
Say a string $a \in C_{n,\eps'}$ is \emph{covered} if there is some string $x \in S$ such that $\abs{\LCS(x,a)} \geq (2/3 + \eps)n$; it follows that at most $|S| \cdot {({1200}/{\eps'^3}})$ strings in $C_{n,\eps'}$ are covered.  
Given that every string in $C_{n,\eps'}$ is covered (by the assumption that $S$ is a $(2/3+\eps)n$-LCS cover),
  we have
$$
|S|\cdot \frac{1200}{\eps'^3}\ge \big|C_{n,\eps'}\big|=\frac{\log n}{\log (1/\eps'^4)},
$$
from which the $\Omega(\log n)$ lower bound on $|S|$ follows.
\end{proof}

\end{document}